\def\aut{\mathop{\rm Aut}}
\def\rank{\mathop{\rm rank}}
\def\deg{\mathop{\mathrm{deg}}} 
\def\wgt{\mathop{\mathrm{wgt}}} 
\def\Ep{\mathbb{E}_{p}}
\def\Pp{\mathbb{P}_{p}}
\def\ker{\mathop{\rm ker}}
\def\im{\mathop{\rm im}}
\def\css{\mathop{\rm CSS}\nolimits}
\def\supp{\mathop{\rm supp}\nolimits}
\def\grayout#1{{\color{gray}#1}}
\newtheorem{theorem}{Theorem}
\newtheorem{lemma}[theorem]{Lemma}
\newtheorem{conjecture}[theorem]{Conjecture}
\newtheorem{example}[theorem]{Example}
\setlist{nosep}
\begin{document}

\title{Homology-changing percolation transitions on finite graphs}
\author{Michael Woolls}
\affiliation{Department of Physics \& Astronomy,
  University of California, Riverside, California 92521, USA}
\author{Leonid P. Pryadko}
\email{leonid.pryadko@ucr.edu}
\affiliation{Department of Physics \& Astronomy,
  University of California, Riverside, California 92521, USA}

\date\today

\begin{abstract}
  We consider homological edge percolation on a sequence
  $(\mathcal{G}_t)_t$ of finite graphs covered by an infinite
  (quasi)transitive graph $\mathcal{H}$, and weakly convergent to
  $\mathcal{H}$.  Namely, we use the covering maps to classify
  $1$-cycles on graphs $\mathcal{G}_t$ as homologically trivial or
  non-trivial, and define several thresholds associated with the rank
  of thus defined first homology group on the open subgraphs.  We
  identify the growth of the homological distance $d_t$, the smallest
  size of a non-trivial cycle on $\mathcal{G}_t$, as the main factor
  determining the location of homology-changing thresholds.  In
  particular, we show that the giant cycle erasure threshold $p_E^0$
  (related to the conventional erasure threshold for the corresponding
  sequence of generalized toric codes) coincides with the edge
  percolation threshold $p_{\rm c}(\mathcal{H})$ if the ratio
  $d_t/\ln n_t$ diverges, where $n_t$ is the number of edges of
  $\mathcal{G}_t$, and we give evidence that
  $p_E^0<p_{\rm c}(\mathcal{H})$ in several cases where this ratio
  remains bounded, which is necessarily the case if $\mathcal{H}$ is
  non-amenable.
\end{abstract}
\maketitle

\section{Introduction}
It is the threshold theorem\cite{Shor-FT-1996,Steane-FT-1997,
  Gottesman-FT-1998,Dennis-Kitaev-Landahl-Preskill-2002,
  Knill-FT-2003,*Knill-2004B,
  *Aliferis-Gottesman-Preskill-2006,*Reichardt-2009,
  Katzgraber-Bombin-MartinDelgado-2009} that makes large-scale quantum
computation feasible, at least in theory.  Related is the notion of
quantum channel capacity $R_Q$, such that for any rational $R<R_Q$,
there exists a quantum error correcting code (QECC) with rate $R$
which can be used to suppress the logical error probability to any
chosen (arbitrarily small) level, but not for $R>R_Q$.  Here the code
rate $R\equiv k/n$ is the ratio of the number $k$ of the logical
(encoded) qubits to the length $n$ of the code.  The precise value of
the capacity is not known for most quantum channels of interest,
except for the \emph{quantum erasure channel} with qubit erasure
probability $p$, in which case $R_Q=\min(0,1-2p)$, see
Ref.~\onlinecite{Bennett-DiVincenzo-Smolin-1997}.

In practice, it is often easier to deal with the threshold \emph{error
  probability} for a given family (infinite sequence) of QECCs with
certain asymptotic code rate $R$.  Depending on the nature of the
quantum channel in question, the threshold error probability may be
related to the location of a thermodynamical phase transition in
certain spin model associated with the codes.  In particular, for a
family of qubit toric codes on transitive graphs locally isomorphic to
a regular euclidean or hyperbolic tiling
$\mathcal{H}$ 
under independent
$Z$ Pauli errors, the decoding threshold is upper bounded by the
position of the multicritical point located at the Nishimori line of
the Ising model on
$\mathcal{H}$, see
Refs.~\onlinecite{Dennis-Kitaev-Landahl-Preskill-2002,%
  Kubica-etal-color-2017,Jiang-Kovalev-Dumer-Pryadko-2018}.  It is
widely believed that the two thresholds coincide, at least for the
euclidean tilings like the infinite square lattice and square-lattice
toric codes.  With a slightly more general model of independent
$X$/$Z$ Pauli errors, the threshold is the minimum of the
corresponding thresholds for each error type which can be computed
independently.

A special case is the relation between quantum erasure errors and
percolation\cite{Delfosse-Zemor-2010,Delfosse-Zemor-2012,%
  Delfosse-Zemor-2014}.  An erasure is formed by rendering inoperable
all qubits in a known randomly selected set.  Information loss happens
when erasure covers a logical operator of the code.  For certain code
families, and for qubit erasure probability
$p$ sufficiently small,
$p<p_E$, the probability to cover a codeword may go to zero as the
code length
$n$ is increased to infinity.  The corresponding threshold value
$p_E$ is called the \emph{erasure threshold} associated with the
chosen code family or code sequence.  With a
Calderbank-Shor-Steane (CSS)
code\cite{Calderbank-Shor-1996,Steane-1996}, one may consider the
erasure thresholds for $X$ and
$Z$ logical operators separately, so that the conventional erasure
threshold becomes $p_E=\min(p_E^{X},p_E^{Z})$.

The link between erasure and percolation thresholds is especially
simple in the case of toric/surface\cite{kitaev-anyons,%
  Bravyi-Kitaev-1998,Freedman-Meyer-1998,%
  Dennis-Kitaev-Landahl-Preskill-2002,Delfosse-Iyer-Poulin-2016} and
related quantum cycle codes\cite{Zemor-2009} where qubits are labeled
by the edges of a graph and, by convention, $Z$ logical operators are
supported on $1$-chains in certain equivalence classes, e.g., those
connecting two opposite boundaries of a rectangular region, or wrapped
around a torus.
Then, the erasure threshold $p_E^Z$ coincides with the discrete
version of the homological percolation
transition\cite{Bobrowski-Skraba-2020-euler,Bobrowski-Skraba-2020} for
$1$-chains.  It is also known that for square-lattice toric code the
erasure threshold $p_E^Z$ 
coincides\cite{Stace-Barrett-Doherty-2009,Fujii-Tokunaga-2012} with
the edge percolation threshold, $p_E^Z=p_{\rm c}(\mathbb{Z}^2)=1/2$.  On the
other hand, for a family of hyperbolic surface codes based on a given
infinite graph $\mathcal{H}$, a regular tiling on the hyperbolic
plane, we only know that the erasure threshold is upper
bounded\cite{Delfosse-Zemor-2010,Delfosse-Zemor-2012,Delfosse-Zemor-2014}
by the percolation threshold on ${\cal H}$, $p_E\le p_{\rm c}({\cal H})$.

Surely, the erasure and the percolation thresholds cannot always
coincide.  Indeed, percolation threshold is associated with the
formation of an infinite cluster; it is defined on an infinite graph,
while quantum codes are finite.  Further, erasure threshold is not a
bulk quantity, as it can be rendered zero by removing a vanishingly
small fraction of well-selected qubits.  Similarly, many different
finite graphs can be associated with a given infinite graph
${\cal H}$, and it is not at all clear that the erasure threshold
should remain the same independent of the details.

The goal of this work is to quantify the relation between edge
percolation and the stability of quantum cycle codes (QCCs) to erasure
errors.  Specifically, we consider sequences of finite graphs
$\mathcal{G}_t=(\mathcal{V}_t,\mathcal{E}_t)$, $t\in \mathbb{N}$, with
a common infinite covering graph $\mathcal{H}$, and use the covering
map $f_t:\mathcal{H}\to \mathcal{G}_t$ to identify homologically
non-trivial cycles on $\mathcal{G}_t$.  The distance
$d_t\equiv d_{Z,t}$ of the corresponding quantum code (the smallest
length of a non-trivial cycle) necessarily diverges with $t$ when the
sequence converges weakly to $\mathcal{H}$.  First, we show that it is
the scaling of $d_{t}$ with the logarithm of the code block length,
$n_t\equiv |\mathcal{E}_t|$, that determines the location of the
$Z$-erasure threshold, or the $1$-chain lower erasure threshold
$p_E^Z\equiv p_E^0$,  the point above which the probability of an
open homologically non-trivial $1$-cycle remains non-zero in the limit
of arbitrarily large graphs $\mathcal{G}_t$.  Roughly, with
sublogarithmic distance scaling, $d_{t}/\ln n_t\to 0$ as $t\to\infty$,
$p_E^0=0$.  On the other hand, with superlogarithmic distance
scaling, $d_{t}/\ln n_t\to \infty$, $p_E^0$ coincides with the edge
percolation threshold $p_{\rm c}(\mathcal{H})$, so that for
$p<p_{\rm c}(\mathcal{H})$, probability to find an open homologically
non-trivial $1$-cycle be asymptotically zero.  We also give an example
of a graph family with logarithmic distance scaling, where the
inequality in the upper bound is strict, $p_E^0< p_{\rm c}({\cal H})$,
and give numerical evidence that for some regular tilings of the
hyperbolic plane, erasure threshold is strictly below the percolation
threshold, $p_E^0<p_{\rm c}(\mathcal{H})$.

Second, the distance $d_t$ grows at most logarithmically with $n_t$
when $\mathcal{H}$ is non-amenable, which is also a necessary
requirement to have a finite asymptotic code rate $k_t/n_t\to R>0$,
where $k_t$ is the number of encoded qubits.  For such a graph
sequence, we define a pair of thermodynamical homological transitions,
$p_H^0$ and $p_H^1$, which characterize singularities in the
\emph{erasure rate}, asymptotic ratio of the expected homology rank of
the open subgraph and the number of edges $n_t$.  Namely, erasure rate
is zero for $p<p_H^0$, it saturates at
$R$ 
for $p>p_H^1$, and it takes intermediate values in the interval
$p_H^0<p<p_H^1$ (subsequence construction may be needed in this regime
to achieve convergence).  We prove that $p_H^1-p_H^0>R$, and, if
$\mathcal{H}$ and its dual, $\widetilde{\cal H}$, is a pair of
transitive planar graphs, we show that $p_H^0=p_{\rm c}(\mathcal{H})$ and
$p_H^1=1-p_{\rm c}(\widetilde{\cal H})$; the latter point coincides with the
uniqueness threshold $p_{\rm u}(\mathcal{H})$ on the original graph.  We
also conjecture that the two homological transitions coincide with the
percolation and the uniqueness thresholds, respectively, for any
non-amenable (quasi)transitive graph, $p_H^0=p_{\rm c}(\mathcal{H})$ and
$p_H^1=p_{\rm u}(\mathcal{H})$.

The outline of the paper is as follows.  In Sec.~\ref{sec:notations}
we give the necessary notations.  We present our analytical results in
Sec.~\ref{sec:analytical} and numerical results in
Sec.~\ref{sec:numeric}, with the proofs collected in the Appendix.
In section \ref{sec:conclusions} we give the conclusions and discuss
some related open questions.

\section{Definitions}
\label{sec:notations}

\subsection{Classical binary and quantum CSS codes}
A linear binary code with parameters $[n,k,d]$ is a vector space
$\mathcal{C}\subseteq \mathbb{F}_2^n$ of length-$n$ binary strings of
dimension $k$, where the minimum distance $d$ is the smallest Hamming
weight of a non-zero vector in $\mathcal{C}$.  Such a code
$\mathcal{C}\equiv \mathcal{C}_G$ can be specified in terms of a
generator matrix $G$ whose rows are the basis vectors, or in terms of
a parity check matrix $H$,
$\mathcal{C}\equiv \mathcal{C}_H^\perp=\{c\in\mathbb{F}_2^n:
Hc^T=0\}$, where $\mathcal{C}_H^\perp$ denotes the space dual
(orthogonal) to $\mathcal{C}_H$.  A generator matrix and a parity
check matrix of any length-$n$ code satisfy
\begin{equation}
  GH^T=0,\quad 
  \rank G+\rank H=n;
  \label{eq:dual-matrices}
\end{equation}
such matrices are called \emph{mutually dual}.

If $I\subset \{1,\ldots,n\}$ is a set of bit indices, for any vector
$b\in\mathbb{F}_2^{n}$, we denote $b[I]$ the corresponding
\emph{punctured} vector with positions outside of $I$ dropped.
Similarly, $G[I]$ (with columns outside of $I$ dropped) generates the
code $\mathcal{C}_G$ \emph{punctured} to $I$, denoted
${\cal C}_G[I]\equiv {\cal C}_{G[I]}$.  A \emph{shortened} code is
formed similarly, except by puncturing only the vectors supported
inside $I$,
$$
\text{$\mathcal{C}$ shortened to $I$} = \left\{c[I]: \;c\in
  \mathcal{C}\; \wedge\;\supp(c)\subseteq I\right\}.
$$
We use ${G}_I$ to denote a generating matrix of the code ${\cal C}_G$
shortened to $I$.  If $G$ and $H$ is a pair of mutually dual
binary matrices, see Eq.~(\ref{eq:dual-matrices}), then $H_I$ is
a parity check matrix of the punctured code $\mathcal{C}_{G}[I]$,
and\cite{MS-book}
\begin{equation}
  \label{eq:puncture-shortening-rank}
  \rank G[I]+\rank H_I=|I|, 
\end{equation}
i.e., matrices $G[I]$ and $H_I$ are  mutually dual.  In
addition, if $\overline I=\{1,2,\ldots,n\}\setminus I$ is the
complement of $I$, then
\begin{equation}
  \label{eq:dual-puncture-rank}
  \rank G[\overline I]+\rank G_I=\rank G.
\end{equation}

For the present purposes, it is sufficient that an $n$-qubit quantum
CSS code $\mathcal{Q}=\css(G_X,G_Z)$ can be specified in terms of two
$n$-column binary \emph{stabilizer generator matrices} with mutually
orthogonal rows, $G_XG_Z^T=0$.  It is isomorphic to a direct sum of
two quotient spaces, $\mathcal{Q}=\mathcal{Q}_X\oplus\mathcal{Q}_Z$,
where $\mathcal{Q}_X=\mathcal{C}_{G_Z}^\perp/\mathcal{C}_{G_X}$ and
$\mathcal{Q}_Z=\mathcal{C}_{G_X}^\perp/\mathcal{C}_{G_Z}$.  Vectors in
$\mathcal{Q}_X$ and $\mathcal{Q}_Z$, respectively, are also called
$X$- and $Z$-logical operators.  Explicitly, $\mathcal{Q}_X$ is formed
by vectors in $\mathcal{C}_{G_Z}^\perp$, with any two vectors that
differ by an element of $\mathcal{C}_{G_X}$ identified (notice that
$\mathcal{C}_{G_X}\subset\mathcal{C}_{G_Z}^\perp$).  Such a pair of
vectors $c'=c+\alpha G_X$ that differ by a linear combination of the
rows of $G_X$ are called mutually degenerate; we write $c'\simeq c$.
The second half of the code, $\mathcal{Q}_Z$, is defined similarly,
with the two generator matrices interchanged.  For such $Z$-like
vectors, the degeneracy is defined in terms of the rows of $G_Z$.

The distances $d_X$ and $d_Z$ of a CSS code are the minimum weights of
non-trivial vectors in $\mathcal{Q}_X$ and $\mathcal{Q}_Z$,
respectively, e.g.,
$d_X= \min\{\wgt c:c\in \mathcal{C}_{G_Z}^\perp\setminus
\mathcal{C}_{G_X}\}$.  Any minimum-weight codeword is always
\emph{irreducible}, that is, it cannot be written as a sum of two
vectors with disjoint supports, one of them being a
codeword\cite{Dumer-Kovalev-Pryadko-bnd-2015}.  The conventional
distance, the minimum weight of a logical operator in $\mathcal{Q}$,
is $d=\min(d_X,d_Z)$.  The dimension $k$ of a CSS code is the
dimension of the vector space $\mathcal{Q}_X$ (it is the same as the
dimension of $\mathcal{Q}_Z$), the number of linearly independent and
mutually non-degenerate vectors that can be used to form a basis of
$\mathcal{Q}_X$.  For a length-$n$ code with stabilizer generator
matrices $G_X$ and $G_Z$,
\begin{equation}
  \label{eq:CSS-k}
  k=n-\rank G_X-\rank G_Z.
\end{equation}
The parameters of a quantum CSS code are commonly written as
$[[n,k,(d_X,d_Z)]]$ or just $[[n,k,d]]$.

Any CSS code formed by matrices $G_X$ and $G_Z$ of respective
dimensions $r_X\times n$ and $r_Z\times n$ also defines a binary chain
complex with three non-trivial vector spaces,
\begin{equation}
  \mathcal{A}:\ldots \leftarrow\{0\}\stackrel{\partial_0}\leftarrow\mathcal{A}_0\stackrel{\partial_1}\leftarrow
  \mathcal{A}_1\stackrel{\partial_2}\leftarrow\mathcal{A}_2\stackrel{\partial_3}\leftarrow\{0\}\leftarrow \ldots,\label{eq:3-chain}
\end{equation}
where the spaces $\mathcal{A}_i=\mathbb{F}_2^{a_i}$ have dimensions
$a_0=r_X$, $a_1=n$, and $a_2=r_Z$, and the non-trivial boundary
operators are expressed in terms of the generator matrices
$\partial_1=G_X$, $\partial_2=G_Z^T$.  This guarantees the defining
property of a chain complex, $\partial_i\partial_{i+1}=0$,
$i\in\mathbb{Z}$.  Then, the code $\mathcal{Q}_Z$ is defined
identically to the first homology group
$H_1(\mathcal{A})= \ker(\partial_{1})/\im(\partial_{2})$, where
elements of $\im(\partial_{2})$ called \emph{cycles} are linear
combinations of the columns of $\partial_2={G}_Z^T$, while elements of
$\ker(\partial_1)$ called \emph{boundaries} are vectors orthogonal to
the rows of $\partial_1={G}_X$.  The other definitions also match.  In
particular, the dimension $k$ of the quantum code is the rank of the
first homology group, $k=\rank H_1(\mathcal{A})$, while the definition
of the homological distance $d_1(\mathcal{A})$ matches that of $d_Z$.
The other code, $\mathcal{Q}_X$, corresponds to the co-homology group defined
in the co-chain complex $\widetilde{\mathcal{A}}$ formed similarly but
with the two matrices interchanged.

Let us now consider the structure of the homology group where the
space ${\cal A}_1$ is restricted so that only components with indices
in the index set $I\subset \{1,2,\ldots,n\}$ be non-zero.
Respectively, the spaces $\ker\partial_1={\cal C}_{G_X}^\perp$ and
$\im \partial_2={\cal C}_{G_Z}$ should be replaced with the
corresponding reduced spaces.  The result is isomorphic to a chain
complex ${\cal A}'_I$ where the two boundary operators
are obtained by puncturing and shortening, respectively:
$\partial_1'=G_X[I]$ and $\partial_2'=(G_Z)_I^T$.  The
dimension of thus defined restricted homology group is given by%
\begin{equation}
  \label{eq:rank-H1-restricted}
 k_I'\equiv \rank H_1({\cal A}'_I)=|I|-\rank G_X[I]-\rank (G_Z)_I.
\end{equation}
Using Eq.~(\ref{eq:dual-puncture-rank}), we also
get\cite{Delfosse-Zemor-2012}
\begin{equation}
  \label{eq:rank-H1-restricted-two}
  k_I'=|I|-\rank G_Z-\rank G_X[I]+\rank G_Z[\overline I].
\end{equation}
The corresponding result for the rank $\tilde k_{I}'$ of the
restricted co-homology group can be found by exchanging the matrices
$G_X$ and $G_Z$; this gives the duality relation
\begin{equation}
  \label{eq:rank-H1-restricted-duality}
  k_I'+\tilde k_{\overline I}'=k.
\end{equation}

\subsection{Graphs, cycles, and cycle codes}

We consider only simple graphs with no loops or multiple edges.  A
graph ${\cal G}=({\cal V}, {\cal E})$ is specified by its sets of
vertices ${{\cal V}}\equiv {{\cal V}}_{\cal G}$, also called sites,
and edges ${{\cal E}}\equiv {{\cal E}}_{\cal G}$.
Each edge $e\in {\cal E}$ is a set of two vertices, $e=\{u,v\}$; it
can also be denoted with a wave, $u\sim v$.  For every vertex
$v\in {\cal V}$, its degree $\deg(v)$ is the number of edges that
include $v$.  An infinite graph $\mathcal{G}$ is called quasi-transitive if
there is a finite subset $\mathcal{V}_0\subset\mathcal{V}_{\cal G}$
of its vertices, such that for every vertex $v\in\mathcal{V}$ there is
an automorphism (symmetry) of $\mathcal{G}$ mapping $v$ to an element
of $\mathcal{V}_0$.  A transitive graph is a quasi-transitive graph
where the subset $\mathcal{V}_0$ of vertex classes contains only one
element.  All vertices in a transitive graph have the same degree.

We say that vertices $u$ and $v$ are connected on ${\cal G}$ if there
is a path $P\equiv P(u_0,u_\ell)$ between $u\equiv u_0$ and
$v\equiv u_\ell$, a set of edges which can be ordered and oriented to
form a \emph{walk}, a sequence of vertices starting with $u$ and
ending with $v$, with each directed edge in $P$ matching the
corresponding pair of neighboring vertices in the sequence,
\begin{equation}
  \label{eq:path}
  P(u_0,u_\ell)=   \{u_0\sim u_1,  u_1\sim
  u_2,\ldots, u_{\ell-1}
  \sim u_\ell\}\subseteq {\cal E}.
\end{equation}
We call such a path open if $u_0\neq u_\ell$, and closed otherwise.
The path is called self-avoiding (simple) if $u_i\neq u_j$ for any
$0\le i<j\le \ell$, except that $u_0$ and $u_\ell$  coincide if the
path is closed.
The length of the path is the number of edges in the set,
$\ell=|{P}|$.  The distance $d(u,v)$ between vertices $u$ and $v$ is
the smallest length of a path between them.  Given a vertex
$v\in {\cal V}$ and a natural $r\in\mathbb{N}$, a ball
$\mathcal{B}(v,r;\mathcal{G})$ is the subgraph of ${\cal G}$ induced
by the vertices $u\in {\cal V}$ such that $d(v,u)\le r$.

The edge boundary $\partial\mathcal{U}$ of a  set of
vertices $\mathcal{U}\subseteq \mathcal{V}$ is the set of edges
connecting $\mathcal{U}$ and its complement
$\overline {\cal U}\equiv {\cal V}\setminus {\cal U}$.  Given an
exponent $\alpha\le 1$, we define the isoperimetric constant of a
graph,
\begin{equation}
  \label{eq:edge-expander}
  b_\alpha=\inf_{\emptyset\neq \mathcal{U}\subsetneq\mathcal{V}, |{\cal U}|\neq\infty}
  {|\partial \mathcal{U}|\over
    \left[\min\left(\left|\mathcal{U}\right|, \left|\overline{\cal
            U}\right|\right)\right]^\alpha}.
\end{equation}
For an infinite graph, or a set of finite graphs that includes graphs
of arbitrarily large size, particularly important is the largest
$\alpha$ such that the corresponding $b_\alpha>0$.  Such a graph (or
graph family) is called an $\alpha$-expander; when $\alpha<1$, the
related parameter $\delta\equiv (1-\alpha)^{-1}$ is called the
isoperimetric dimension.  Isoperimetric dimension of any regular
$D$-dimensional lattice is $\delta=D$.  When $\alpha=1$, the
isoperimetric constant $b_1$ of a graph $\mathcal{G}$ is called its
Cheeger constant, $h(\mathcal{G})=b_1$.  An infinite graph with a
non-zero Cheeger constant is called non-amenable.

A set of edges $C\subseteq {\cal E}$ is called a cycle if the degree
of each vertex in the subgraph induced by $C$, ${\cal G}'=({\cal V},C)$, is
even.  The set of all cycles on a graph ${\cal G}$, with the symmetric
difference defined as
$A \oplus B\equiv (A\setminus B)\cup (B\setminus A)$ used as the group
operation, forms an abelian group, the \emph{cycle group} of
${\cal G}$, denoted $\mathcal{C}({\cal G})$.  Clearly, a closed path
is a cycle.  A simple cycle is a self-avoiding closed path.

A graph ${\cal H}$ is called a covering graph of ${\cal G}$ if there
is a function $f$ mapping ${\cal V}_{\cal H}$ onto
${\cal V}_{\cal G}$, such that an edge $(u, v)\in{{\cal E}}_{\cal H}$
is mapped to the edge
$\biglb(f(u), f(v)\bigrb)\in{{\cal E}}_{\cal G}$, with an additional
property that $f$ be invertible in the vicinity of each vertex, i.e.,
for a given vertex $u'\in {{\cal V}}_{\cal H}$ and an edge
$(f(u'), v)\in{{\cal E}}_{\cal G}$, there must be a unique edge
$(u', v')\in{{\cal E}}_{\cal H}$ such that $f(v')=v$.  As a result,
given a path $P$ connecting vertices $u$ and $v$ on ${\cal G}$ and a
vertex $u'\in {\cal V}_{\cal H}$ such that $f(u')=u$, there is a
unique path $P'$ on ${\cal H}$, the lift of $P$, such that $f$ maps
the sequence of vertices $u_0'\equiv u$, $u_1'$, $u_2'$, \ldots\ in
$P'$ to that in $P$.  To simplify the notations, we will in some cases
write a covering map as a map between the graphs,
$f:\mathcal{H}\to \mathcal{G}$.

A set of vertices $u'$ with the same covering map image $u$,
$f(u')=u$, is called the \emph{fiber} of $u$.  A lift of a closed path
starting and ending with $u$ is either a closed path, or an open path
connecting two different vertices in the fiber of $u$.  We call a
simple cycle on ${\cal G}$ homologically trivial if all its lifts are
simple cycles (of the same length).  A cycle on ${\cal G}$ is trivial
if it is a union of edge-disjoint homologically trivial simple cycles.
The set of trivial cycles on ${\cal G}$, with ``$\oplus$'' used for
group operation, is a subgroup of the cycle group on ${\cal G}$.  We
denote such a group $\mathcal{C}_0({\cal H};f)$.  The corresponding
group quotient,
$H_1(f)\equiv\mathcal{C}(\mathcal{G})/\mathcal{C}_0(\mathcal{H};f)$,
is the (first) homology group associated with the map $f$; its
elements are equivalence classes formed by sets of cycles whose
elements differ by an addition of a trivial cycle.  Namely, cycles $C$
and $C'$ are equivalent, $C'\simeq C$, if $C'=C\oplus C_0$, with
$C_0\in\mathcal{C}_0(\mathcal{H};f)$.

The cycle space of a graph $\mathcal{G}=(\mathcal{V},\mathcal{E})$
with $n=|\mathcal{E}|$ edges can be defined algebraically in terms of
the vertex-edge incidence matrix $J\equiv J_{\cal G}$.  Namely, it
is isomorphic to the binary code
$\mathcal{C}_{J}^\perp\subset\mathbb{F}_2^n$ whose parity check matrix
is the incidence matrix $J$,
$\mathcal{C}(\mathcal{G})\cong\mathcal{C}_J^\perp$.  On the other
hand, the code $\mathcal{C}_J$ generated by the incidence matrix is
isomorphic to the cut space of the graph.  Elements of the cut space
are edge boundaries $\partial \mathcal{U}$ of different partitions
defined by sets of vertices $\mathcal{U}\subset\mathcal{V}$.

In principle, any set $\mathcal{C}'\subset \mathcal{C}(\mathcal{G})$
of cycles on $\mathcal{G}$ can be used to construct a binary matrix
$K$ with the rows orthogonal to $J$, $JK^T=0$; the code
$\mathcal{C}_K\subset\mathbb{F}_2^n$ is isomorphic to the subspace of
the cycle space generated by elements of $\mathcal{C}'$.  In
particular, given the covering map $f:\mathcal{H}\to \mathcal{G}$,
such a matrix $K$ can be constructed using a basis set of
homologically trivial cycles $\mathcal{C}_0(\mathcal{H};f)$.  Thus,
such a covering map has a chain complex (\ref{eq:3-chain}) associated
with it, where $\mathcal{A}_0$, $\mathcal{A}_1$, and $\mathcal{A}_2$
are spaces generated by sets of vertices, edges, and homologically
trivial cycles, respectively.  In particular, the support $\supp(a)$
of any vector $a\in{\cal A}_1$ corresponds to a set of edges.  The
boundary operators are given by the constructed matrices
$\partial_1=J$, $\partial_2=K^T$.  Equivalently, the same matrices can
be used to define a stabilizer code $\css(J,K)$ with generators
$G_X=J$ and $G_Z=K$.  We will denote such a \emph{quantum cycle code}
associated with the covering map $f:{\cal H}\to {\cal G}$ as
$\mathcal{Q}(\mathcal{H};f)$.  The length of the code is
$n=|\mathcal{E}|$, the number of encoded qubits
$k=\rank H_1(\mathcal{A})$ is the rank of the first homology group
associated with covering map, and the distances $d_Z$, $d_X$,
respectively, are the homological distances $d_1(\mathcal{A})$,
$d_1(\widetilde{\cal A})$ associated with the chain complex
$\mathcal{A}$ and the co-chain complex $\widetilde{\cal A}$.

Given a graph $\mathcal{H}=(\mathcal{V},\mathcal{E})$ and a subgroup
$\Gamma$ of its automorphism group $\aut(\mathcal{H})$, consider the
partition of $\mathcal{V}$ induced by $\Gamma$, where a pair of
vertices $u$, $v$ are in the same class iff there is an element
$g\in\Gamma$ such that $g(u)=v$.  Then, the \emph{quotient graph}
$\mathcal{G}=\mathcal{H}/\Gamma$ has the vertex set given by the set
of vertex classes, with an edge between any two classes which contain
a pair of vertices connected by an edge from $\mathcal{E}$.  A graph
quotient $\mathcal{H}/\Gamma$ is covered by $\mathcal{H}$ if no
neighboring vertices fall into the same class.  When $\mathcal{H}$ is
infinite, a finite quotient graph $\mathcal{H}/\Gamma$ is obtained
if the subgroup $\Gamma$ has a finite index; in such a case
$\mathcal{H}$ must be quasitransitive.

\subsection{Percolation transitions}
\label{sec:percolation}

We only consider Bernoulli edge percolation, where each edge
$e\in {\cal E}$ of a graph ${\cal H}=({\cal V},{\cal E})$ is
independently labeled as open or closed, with probabilities $p$ and
$1-p$, respectively.  We are focusing on the subgraph $[{\cal H}]_p$
remaining after removal of all closed edges; connected components of
$[{\cal H}]_p$ are called clusters.  For a given $v\in {\cal V}$, the
cluster which contains $v$ is denoted
$\mathcal{K}_v\subseteq [{\cal H}]_p$.  If $\mathcal{K}_v$ is
infinite, for some $v$, we say that percolation occurs.

Three observables are usually associated with percolation: the
probability that vertex $v$ is in an infinite cluster,%
\begin{equation}
\theta_v\equiv
\theta_v({\cal H},p)=\Pp(|\mathcal{K}_v|=\infty),
\label{eq:theta-v}
\end{equation}
the
connectivity function,
\begin{equation}
  \label{eq:connectivity}
  \tau_{u,v}\equiv
  \tau_{u,v}({\cal H},p)=\Pp\bigl(u\in{\cal K}_v\bigr),
\end{equation}
the probability that vertices $u$ and $v$ are in the same cluster, and
the local cluster susceptibility,
\begin{equation}
\chi_v\equiv
\chi_v({\cal H},p)=\Ep(|\mathcal{K}_v|),
\label{eq:chi-v}
\end{equation}
the expected size of the cluster connected to $v$.  Equivalently,
cluster susceptibility can be defined as the sum of probabilities for
individual vertices to be in the same cluster as $v$, i.e., as a sum of
connectivities,
\begin{equation}
  \label{eq:chi-v-alt}
  \chi_v=\sum_{u\in{\cal V}}\tau_{v,u}.
\end{equation}
The critical propability $p_{\rm c}$, the percolation threshold, is
associated with the formation of an infinite cluster.  There is no
percolation, $\theta_v=0$, for $p<p_{\rm c}$, but $\theta_v > 0$ for
$p>p_{\rm c}$.  An equivalent definition is based on the existence of an
infinite cluster anywhere on $[{\cal H}]_p$: the probability of
finding such a cluster is zero at $p<p_{\rm c}$, and one at $p>p_{\rm c}$, see,
e.g., Theorem (1.11) in Ref.~\onlinecite{Grimmett-percolation-book}
(the same proof works for any infinite connected graph).

Similarly, the critical probability $p_T$ is associated with
divergence of site susceptibilities: $\chi_v$ is finite for $p<p_T$
but not for $p>p_T$.  Again, in a connected graph, this definition does not
depend on the choice of  $v\in{\cal V}$.  If percolation occurs (i.e.,
with probability $\theta_v>0$, $|{\cal K}_v|=\infty$), then clearly
$\chi_v=\infty$.  This implies $p_{\rm c}\ge p_T$.  The reverse is known to
be true for percolation on quasi-transitive
graphs\cite{Menshikov-1986,Menshikov-Sidorenko-eng-1987}:
$\chi_v=\infty$ can only happen inside or on the boundary of the
percolation phase.  Thus, for a quasi-transitive graph, $p_{\rm c}=p_T$.

An important question is the number of infinite clusters on
$[{\cal H}]_p$, in particular, whether an infinite cluster is unique.
For infinite quasi-transitive graphs, there are only three
possibilities: (\textbf{a}) almost surely there are no infinite
clusters; (\textbf{b}) there are infinitely many infinite clusters;
and (\textbf{c}) there is only one infinite
cluster\cite{Benjamini-Schramm-1996,%
  Haggstrom-Jonasson-2006,Hofstad-2010}.  A third critical
probability, $p_{\rm u}$, is associated with the number of infinite
clusters.  Most generally, we expect $p_T\le p_{\rm c}\le p_{\rm u}$.  For a
quasi-transitive graph, one has\cite{Hofstad-2010}
\begin{equation}
  \label{eq:thresholds}
  0<p_T=p_{\rm c}\le p_{\rm u}.
\end{equation}
Here, $p_{\rm u}$ is the uniqueness threshold, such that there can be only
one infinite cluster for $p>p_{\rm u}$, whereas for $p<p_{\rm u}$, the number of
infinite clusters may be zero, or infinite.  For an amenable
quasitransitive graph, $p_{\rm c}=p_{\rm u}$
\cite{Aizenman-Kesten-Newman-1987,Burton-1989,Kesten-2002}; it was
conjectured by Benjamini and Schramm\cite{Benjamini-Schramm-1996} that
$p_{\rm c}<p_{\rm u}$ for non-amenable quasi-transitive graphs.  Among other
examples, the conjecture has been recently verified for a large class
of Gromov-hyperbolic graphs\cite{Hutchcroft-hyp-2019}.

In order for the uniqueness threshold to be non-trivial, $p_{\rm u}<1$, the
graph $\mathcal{H}$ has to have only one end.  That is, it can not be
separated into two or more infinite components by removing a finite
number of edges.

In addition to uniqueness of the infinite cluster, the same threshold
$p_{\rm u}$ can be characterized in terms of the connectivity
function\cite{Tang-2019}.  Namely,
$\inf_{u,v\in \mathcal{V}}\tau_{u,v}(p)>0$ for $p>p_{\rm u}$ and it is
zero for $p<p_{\rm u}$.  Further, for planar transitive graphs, the
uniqueness threshold is related to the percolation threshold on the
dual graph,
\begin{equation}
  p_{\rm u}(\mathcal{H})=1-p_{\rm c}(\widetilde{\cal H}),
  \label{eq:planar-duality}
\end{equation}
see the proof of Theorem 7.1 in
Ref.~\onlinecite{Haggstrom-Jonasson-2006}.  In the case of planar
amenable graphs where $p_{\rm c}(\mathcal{H})=p_{\rm u}(\mathcal{H})$,
the duality (\ref{eq:planar-duality}) is between the two percolation
transitions\cite{Grimmett-percolation-book}.

\section{Homology-changing transitions}
\label{sec:analytical}
\subsection{Weakly converging sequences of graphs with a common cover}

Consider a finite graph
$\mathcal{G}=(\mathcal{V}_{\cal G},\mathcal{E}_{\cal G})$ covered by
an infinite graph $\mathcal{H}=(\mathcal{V},\mathcal{E})$.  While the
graph $\mathcal{H}$ needs not be quasi-transitive, the set of vertex
degrees of $\mathcal{H}$ is finite and matches that of $\mathcal{G}$;
in particular, the two graphs have the same maximal degree
$\Delta_{\rm max}$.  The covering map
$f:\mathcal{V}\to \mathcal{V}_{\cal G}$ also defines a quantum cycle
code $\mathcal{Q}(\mathcal{H};f)$ with parameters
$[[n,k,(d_{X},d_{Z})]]$, where $n=|\mathcal{E}_{\cal G}|$ the number
of edges in $\mathcal{G}$, and $k=\rank H_1(f)$ the dimension of the
first homology group associated with the map $f$.  We are particularly
interested in the case where the graphs $\mathcal{G}$ and
$\mathcal{H}$ look identically on some scale.  Formally, this is
formulated in terms of the injectivity radius, defined as the largest
integer $r_f$ such that the map $f$ is one-to-one in any ball
$\mathcal{B}(v,r_f;\mathcal{H})$.  Necessarily, for any covering map
$f$, the injectivity radius $r_f\ge1$.  We start by giving lower
bounds for the distances $d_X$, $d_Z$ in terms of the injectivity
radius.

First, an injectivity radius $r_f$ implies that no two vertices
located at distance $r_f$ or smaller from any vertex on $\mathcal{H}$
map to the same vertex on $\mathcal{G}$.  On the other hand, any simple cycle
$C\subset \mathcal{G}$ of length $\ell$ is for sure covered by a ball
of radius $r=\lceil \ell/2\rceil$ centered on a vertex in $C$.  This
gives (formal proofs are given in the Appendix):
\begin{restatable}{lemma}{dZbnd}
  \label{th:dZ-bnd}
  Consider a finite graph $\mathcal{G}$ covered by an infinite graph
  $\mathcal{H}$, with the injectivity radius $r_f$. Then the minimum
  weight $d_Z$ of a non-trivial cycle on $\mathcal{G}$ satisfies the
  inequality $ 2r_f+1\le d_Z\le 2r_f+3$.
\end{restatable}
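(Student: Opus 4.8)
The plan is to prove the two inequalities separately, using the injectivity radius $r_f$ as the bridge between cycles on $\mathcal{G}$ and their lifts to $\mathcal{H}$. For the lower bound $d_Z \ge 2r_f+1$, I would argue by contradiction. Suppose $C$ is a non-trivial simple cycle on $\mathcal{G}$ of length $\ell \le 2r_f$ (it suffices to consider simple cycles, since a minimum-weight non-trivial cycle is irreducible and hence a simple cycle). Pick any vertex $v \in C$; then every vertex of $C$ lies within distance $\lfloor \ell/2 \rfloor \le r_f$ of $v$, so $C$ is contained in the ball $\mathcal{B}(v, r_f; \mathcal{G})$. Fix a lift $v'$ of $v$ to $\mathcal{H}$; since $f$ is injective on $\mathcal{B}(v', r_f; \mathcal{H})$, the restriction of $f^{-1}$ there gives a graph isomorphism onto $\mathcal{B}(v, r_f; \mathcal{G})$, under which $C$ pulls back to a simple cycle $C'$ of the same length in $\mathcal{H}$. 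Starting the lift of the closed walk $C$ at any vertex in the fiber of $v$ (all such vertices lie in distinct balls), one gets a closed lift, so all lifts of $C$ are closed simple cycles of length $\ell$. By the definition of homologically trivial, $C$ is then trivial --- contradiction. Hence $d_Z \ge 2r_f+1$.

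For the upper bound $d_Z \le 2r_f+3$, the idea is to exhibit a non-trivial cycle of length at most $2r_f+3$. By definition of the injectivity radius as the \emph{largest} such integer, the map $f$ is not injective on some ball $\mathcal{B}(v, r_f+1; \mathcal{H})$; that is, there exist two distinct vertices $u', w' \in \mathcal{B}(v, r_f+1; \mathcal{H})$ with $f(u') = f(w')$. Take geodesics from $v'$ (a lift of $v$, with $f(v')=v$) to $u'$ and to $w'$ in $\mathcal{H}$, each of length $\le r_f+1$, and project them down to $\mathcal{G}$: this yields a closed walk on $\mathcal{G}$ through $f(v)$ of length $\le 2(r_f+1)$, whose edge support contains a simple cycle $C$ through the common image $f(u')=f(w')$. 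One then checks that $C$ is non-trivial: a lift of $C$ starting at $u'$ cannot close up at $u'$, because the two halves of the projected walk lift back (within the injectivity ball around an appropriate center) to geodesics ending at the \emph{distinct} vertices $u'$ and $w'$. A small amount of care is needed here with the length bookkeeping --- the closed walk has length $\le 2r_f+2$, and extracting a simple cycle only shortens it, but the parity/endpoint argument is cleanest if one allows the cycle to use one extra edge, which is where the ``$+3$'' rather than ``$+2$'' comes from; I would track exactly which vertex to center the injectivity ball on so that both geodesic halves sit inside it.

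The main obstacle is the upper bound, specifically making the non-triviality argument rigorous: I must ensure that the simple cycle $C$ extracted from the projected closed walk genuinely has a non-closed lift, rather than merely knowing that \emph{some} closed walk through $f(v)$ has a non-closed lift. The resolution is to choose $u', w'$ to be a \emph{closest} colliding pair (minimizing $\max(d(v',u'), d(v',w'))$ or the combined geodesic length), so that the projected geodesic bigon is itself essentially simple except at its two endpoints $v$ and $f(u')$; then $C$ is that bigon (or a sub-bigon), and its lift through $u'$ traces one geodesic forward to $v'$ and the other backward, landing at $w' \neq u'$. I would also remark that the lower bound already shows $r_f \ge (d_Z-3)/2$ is impossible to improve in general, and that the two-sided bound pins $r_f$ down to within one unit of $(d_Z-1)/2$, which is all that is needed for the asymptotic statements about $d_t/\ln n_t$ in the main text.
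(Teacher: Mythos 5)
Your lower bound argument is essentially the paper's own (the paper phrases it directly---a non-trivial minimum-weight cycle is self-avoiding and must escape the ball, so it needs two edge-disjoint paths of length $>r_f$ from $v$ to the boundary---while you phrase it as a contradiction via pulling the cycle back to $\mathcal{H}$; same idea).

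Your upper bound takes a genuinely different route. The paper looks at the \emph{shortest} non-trivial cycle $C_1$ covered by a ball of radius $r_f+1$ and argues directly that at most two of its vertices can sit on the sphere of radius $r_f+1$, hence $w\le 2r_f+3$. You instead explicitly build a short non-trivial cycle as a projected geodesic bigon from a colliding pair $u',w'$. Both work, but as written your argument has a real gap that you flag yourself: you extract ``a simple cycle $C$ through the common image $f(u')=f(w')$'' from the projected closed walk and then assert its lift from $u'$ is open, but a simple cycle extracted from a closed walk need not pass through $f(u')$, and the walk-level ``open lift'' property does not automatically descend to the mod-$2$ edge set of the walk, let alone to one simple cycle in its decomposition. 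Your ``$+3$ for parity'' remark does not actually resolve this---the issue is non-triviality, not length parity, and once the construction is done correctly no extra edge is needed.

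The fix is the one you gesture at but do not carry out: choose the colliding pair to minimize $d(v',u')+d(v',w')$ among all pairs with $u'\neq w'$, $f(u')=f(w')$, $\max\bigl(d(v',u'),d(v',w')\bigr)=r_f+1$. Minimality, together with the injectivity of $f$ on $\mathcal{B}(v',r_f;\mathcal{H})$, forces the two geodesics to meet only at $v'$ and forces $f$ to be injective on their union except for the single collision at the far ends. Their projection is then already a simple cycle of length $d(v',u')+d(v',w')\le 2r_f+2$, and its unique lift starting at $u'$ traverses one geodesic to $v'$ and the other out to $w'\neq u'$, so the lift is open and the cycle is non-trivial. (The lower bound then pins $d(v',u')\ge r_f$, so no degenerate $d(v',u')=0$ case occurs.) This carried-out version actually yields the slightly sharper bound $d_Z\le 2r_f+2$, still consistent with the paper's stated $2r_f+3$.
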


Second, the minimum distance $d_{X}$ is the minimum size of a
homologically non-trivial \emph{co-cycle}, a set of edges on
$\mathcal{G}$ which has even overlap with any homologically trivial
cycle, but is not a cut of $\mathcal{G}$.  A lower bound for $d_{X}$
requires some additional assumptions:
\begin{restatable}{lemma}{dXbnd}
  \label{th:dX-bnd}
  Consider a finite graph $\mathcal{G}$ covered by an infinite
  one-ended graph $\mathcal{H}$, with the injectivity radius $r_f$.
  Assume that the cycle group of $\mathcal{H}$ can be generated by
  cycles of weight not exceeding $\omega\ge3$.  Then, the minimum
  weight of a non-trivial co-cycle on ${\cal G}$ satisfies the
  inequality $d_X> r_f/\omega$.
\end{restatable}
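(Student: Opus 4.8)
The plan is to fix an arbitrary co-cycle $S$ on $\mathcal{G}$ with $|S|\le r_f/\omega$ and prove that $S$ is necessarily a cut of $\mathcal{G}$; since $d_X$ is the least weight of a co-cycle that is not a cut, this gives $d_X>r_f/\omega$. (If $r_f<\omega$ there is nothing to prove, as $d_X\ge 1$.) One-endedness is essential here: for $\mathcal{G}=C_t$ covered by $\mathcal{H}=\mathbb{Z}$ one has $d_X=1$ while $r_f\to\infty$, and $\mathbb{Z}$ has two ends.

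First I would lift $S$ to the cover. By Lemma~\ref{th:dZ-bnd}, $d_Z\ge 2r_f+1>\omega$, so every cycle $Z$ on $\mathcal{H}$ of weight $\le\omega$ has push-forward $f_*Z$ (the mod-$2$ sum of the images of its edges) a cycle on $\mathcal{G}$ of weight $\le\omega<d_Z$, hence homologically trivial; since $S$ is a co-cycle, $S$ has even overlap with $f_*Z$. Because $|\,f^{-1}(S)\cap Z\,|\equiv|\,S\cap f_*Z\,|\pmod{2}$, the preimage $\widetilde S:=f^{-1}(S)$ has even overlap with every weight-$\le\omega$ cycle of $\mathcal{H}$, and hence—these generate the cycle group of $\mathcal{H}$—with every cycle of $\mathcal{H}$. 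Thus $\widetilde S$ lies in the cut space of the connected graph $\mathcal{H}$, i.e.\ $\widetilde S=\partial_{\mathcal H}\mathcal{W}$ for some vertex set $\mathcal{W}$; both $\mathcal{W}$ and its complement are infinite, since otherwise $\widetilde S$ would be finite whereas $f^{-1}(S)$ is infinite whenever $S\neq\emptyset$.

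The remaining, and hardest, step is to descend this back to $\mathcal{G}$: it suffices to find $\mathcal{U}\subseteq\mathcal{V}_{\mathcal G}$ with $\mathcal{W}=f^{-1}(\mathcal{U})$, for then $\partial_{\mathcal G}\mathcal{U}=S$ and $S$ is a cut. (Since $\mathcal{H}$ is connected, the only freedom in $\mathcal{W}$ is to exchange it with its complement, so compatibility with fibers is exactly the assertion that $S$ is a cut; it is not automatic, and its failure is what makes $H^1(f)$, and $d_X$, nontrivial.) Here the smallness of $S$, the injectivity radius, and one-endedness are used together. Because $|S|\le r_f/\omega<r_f$, each connected component of the subgraph $\widetilde S$ is a lift of a component of the subgraph $S$ and has diameter $\le|S|$, so it lies in a ball of radius $<r_f$ on which $f$ is an isomorphism onto a ball of $\mathcal{G}$; distinct lifts of a given component of $S$ lie in disjoint such balls (injectivity radius), so the pieces of $\widetilde S$ are isolated and far apart, and $\widetilde S$ is empty away from them. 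The plan is then to show that, with $\mathcal{H}$ one-ended, each small piece of $\widetilde S$ can bound only a finite region of $\mathcal{H}$, that this finite region is carried injectively by $f$ (so it cannot ``wrap around''), and that the images of these regions assemble into the desired $\mathcal{U}$, so that the pattern $\mathcal{W}$ traces near each piece of $\widetilde S$ agrees across the fiber and $S$ is a cut.

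I expect the main obstacle to be precisely this last point: controlling the extent of the regions bounded by the small pieces of $\widetilde S$ relative to $r_f$, since a handful of edges can still enclose a large region in an amenable graph. One-endedness is what prevents such a region—attached to the rest of $\mathcal{H}$ only through the $\le|S|$ edges of its bounding piece—from reaching infinity, and the injectivity radius is what forces $f$ to be one-to-one on it. (A possible alternative is a minimal-cycle argument: a shortest non-trivial cycle with odd overlap with $S$ must be a geodesic—by minimality, since grafting a strictly shorter path between two of its vertices onto the corresponding arc would produce a strictly shorter non-trivial cycle still of odd $S$-overlap, using that $S$ is a co-cycle—and lifting this geodesic yields two distinct points of one fiber separated by the cut $\widetilde S$; but turning that into a contradiction requires the same topological input about $\mathcal{H}$.)
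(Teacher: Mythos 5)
Your high-level strategy --- show directly that every co-cycle $S$ of weight at most $r_f/\omega$ must be a cut of $\mathcal{G}$ --- is a correct reformulation of the claim, and the first half of the argument is sound: a weight-$\le\omega$ cycle $Z$ on $\mathcal{H}$ pushes forward (injectively, since $\omega\le r_f$) to a cycle of length $\le\omega<d_Z$ on $\mathcal{G}$, hence a homologically trivial one; $S$ then has even overlap with $f_*Z$, so $\widetilde S=f^{-1}(S)$ has even overlap with every generating cycle and hence with every cycle of $\mathcal{H}$; a spanning-tree argument gives $\widetilde S=\partial_{\mathcal{H}}\mathcal{W}$ with $\mathcal{W}$ and $\overline{\mathcal{W}}$ both infinite when $S\neq\emptyset$. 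Your structural remarks on $\widetilde S$ are also essentially right: because short cycles lift closed, each component of $\widetilde S$ is a bijective lift of a component of $S$, small and far from the others.

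The descent step, however, is a genuine gap and the sketch you give does not fill it. The assertion ``\,$\mathcal{W}$ is the preimage of some $\mathcal{U}\subseteq\mathcal{V}_{\mathcal G}$\,'' is exactly equivalent to ``$S$ is a cut,'' so it cannot be obtained for free. Two specific problems with the proposed local argument: first, one-endedness constrains the complement of a \emph{finite} edge set, whereas $\widetilde S$ is infinite; a single piece $P$ of $\widetilde S$ need not, by itself, disconnect $\mathcal{H}$ or ``bound a region'' at all, because the partition $\mathcal{W}/\overline{\mathcal{W}}$ is a global datum determined by all the pieces at once. Second, even when a piece $P$ does detach a finite component, the size of that component is governed by the geometry of $\mathcal{H}$ and not by $r_f$ (you acknowledge this yourself for amenable covers), so there is no reason for $f$ to be injective on it, and the ``cannot wrap around'' step fails.

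The paper's proof sidesteps descent altogether, and it is essentially the ``minimal-cycle alternative'' you float at the end, made precise. Take $B$ a \emph{minimum}-weight non-trivial co-cycle; minimality forces irreducibility, hence cycle-connectedness, which bounds the diameter of $B$ by $d_X\omega\le r_f$. Lift $B$ \emph{once}, to a finite set $B'$ inside a ball where $f$ is one-to-one --- not to the full preimage $\widetilde S$. Take an irreducible cycle $C$ conjugate to $B$ ($|B\cap C|$ odd), and form the \emph{extended lift} $C'$, the union of lifts of all walks winding around $C$ through a chosen $e_0'\in B'$: a bi-infinite self-avoiding path. One checks that $B'$ has odd intersection with $C'$ and even intersection with every cycle of $\mathcal{H}$, hence odd intersection with every deformation $C'\oplus M$; so $B'$, a finite edge set, separates the two ends of $C'$ and therefore splits $\mathcal{H}$ into at least two infinite components, contradicting one-endedness. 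The ingredient you were missing is precisely this switch from the global cut $\widetilde S$ back to a single local lift $B'$ together with a bi-infinite lifted path that it separates.
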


In addition, it will be important that for any covering map
$f:\mathcal{H}\to \mathcal{G}$, the vertices of $\mathcal{G}$ can be
lifted in such a way that they induce a connected subgraph of
$\mathcal{H}$, just as a square-lattice torus with periodic boundary
conditions becomes a rectangular piece of the square lattice after
cutting two rows of edges.
\begin{restatable}{lemma}{FlatSub}
  \label{th:flat-sub}
  Let $\mathcal{G}$ be a finite connected graph, $\mathcal{H}$ its
  cover with the covering map
  $f:\mathcal{V}\to \mathcal{V}_{\cal G}$ and the
  injectivity radius $r_f$.  For any $v'\in\mathcal{V}$ let
  $v\equiv f(v')\in\mathcal{V}_{\cal G}$ be its image.  Then there
  exists a set of vertices $\mathcal{V}_f\subset \mathcal{V}$
  which contains a unique representative from the fiber of every
  vertex of $\mathcal{V}_{\cal G}$, such that the subgraph
  $\mathcal{H}_f\subset \mathcal{H}$ induced by $\mathcal{V}_f$ be
  connected and contains the ball $\mathcal{B}(v',r_f;\mathcal{H})$.
\end{restatable}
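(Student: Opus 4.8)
The plan is to lift a suitable connected "scaffolding" of $\mathcal{G}$ through the covering map $f$, using the ball $\mathcal{B}(v',r_f;\mathcal{H})$ as the seed and then growing one fiber at a time. First I would set $U_0$ to be the vertex set of $\mathcal{B}(v',r_f;\mathcal{H})$ and record two elementary facts. (i) $f$ restricted to $U_0$ is injective — this is precisely the definition of the injectivity radius $r_f$ — so $U_0$ meets each fiber at most once, and $f|_{U_0}$ is a bijection onto its image $W:=f(U_0)\subseteq\mathcal{V}_{\cal G}$, with $v=f(v')\in W$. (ii) The subgraph induced by $U_0$, which is exactly $\mathcal{B}(v',r_f;\mathcal{H})$, is connected, since for any $u\in U_0$ a shortest $\mathcal{H}$-path from $v'$ to $u$ stays within distance $r_f$ of $v'$ and hence inside $U_0$.

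Next, since $\mathcal{G}$ is finite and connected and $W$ is nonempty, I would fix an ordering $\mathcal{V}_{\cal G}=(w_1,\ldots,w_m)$ with $\{w_1,\ldots,w_s\}=W$ (in any order, $s=|W|$) such that every later vertex $w_j$, $j>s$, is adjacent in $\mathcal{G}$ to some earlier vertex $w_{i(j)}$ with $i(j)<j$; such an ordering exists by repeatedly adjoining a vertex adjacent to the set built so far, using connectedness of $\mathcal{G}$. I then construct $\mathcal{V}_f$ inductively. For $j\le s$, the representative of $w_j$ is the unique vertex of $U_0$ mapping to it under $f|_{U_0}$. For $j>s$, assuming the representative $w_{i(j)}'$ of $w_{i(j)}$ has already been chosen, I apply the unique-lift property of covering maps (as stated in Sec.~\ref{sec:notations}) to the edge $\{w_{i(j)},w_j\}\in\mathcal{E}_{\cal G}$ at the vertex $w_{i(j)}'$: there is a unique edge $\{w_{i(j)}',w_j'\}\in\mathcal{E}$ with $f(w_j')=w_j$, and I take $w_j'$ as the representative of $w_j$, adjoining it to $\mathcal{V}_f$.

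It then remains to check the three assertions. For the "unique representative per fiber" property I would verify that $f$ is injective on the resulting $\mathcal{V}_f$: the chosen representatives project onto the pairwise distinct vertices $w_1,\ldots,w_m$, and no $w_j'$ with $j>s$ can lie in $U_0$ (otherwise $w_j=f(w_j')\in W$, contradicting $j>s$), so each inductive step genuinely adds a new vertex and $f|_{\mathcal{V}_f}$ is a bijection onto $\mathcal{V}_{\cal G}$. Since $\mathcal{V}_f\supseteq U_0$, the induced subgraph $\mathcal{H}_f$ contains every $\mathcal{H}$-edge between vertices of $U_0$, hence contains all of $\mathcal{B}(v',r_f;\mathcal{H})$. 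Connectivity of $\mathcal{H}_f$ follows by induction on $j$: the subgraph induced by the representatives of $w_1,\ldots,w_s$ is the connected ball $\mathcal{B}(v',r_f;\mathcal{H})$ by fact (ii), and each newly adjoined $w_j'$ is joined by the edge $\{w_{i(j)}',w_j'\}$ to a vertex already present, so connectivity is preserved at every step, and in particular for $\mathcal{H}_f$.

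I do not expect a serious obstacle: this is the standard "lift a spanning tree rooted where you already have an isometric ball" construction. The only point demanding care is making the seed phase ($j\le s$) compatible with the growth phase — one must be sure that the vertices of $U_0$ are a legitimate partial choice of fiber representatives (i.e.\ $f|_{U_0}$ injective, image $W$) and that feeding $U_0$ in directly, rather than a lifted tree, leaves the later edge-lifts unobstructed, which is exactly why the observation $w_j'\notin U_0$ for $j>s$ is needed. (If one wanted $\mathcal{H}_f$ to contain an \emph{induced-isomorphic} copy of $\mathcal{B}(v,r_f;\mathcal{G})$ rather than merely the ball $\mathcal{B}(v',r_f;\mathcal{H})$, one would additionally identify $W$ with $\mathcal{B}(v,r_f;\mathcal{G})$ by path-lifting, but the statement only asks for the latter, so this is unnecessary.)
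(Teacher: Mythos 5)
Your proof is correct and takes essentially the same route as the paper: both seed the construction with the ball (in the paper, $\mathcal{B}(v,r_f;\mathcal{G})$ lifted to $\mathcal{B}(v',r_f;\mathcal{H})$), extend it by a spanning tree of the remaining vertices, and lift the resulting connected subgraph via the unique edge-lifting property. The only difference is presentational — the paper first builds the connected subgraph $\mathcal{G}''$ in the base and lifts it as a whole, while you grow the lift vertex-by-vertex, which makes the injectivity and connectivity checks a bit more explicit but does not change the argument.
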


In the following, we consider not a single graph $\mathcal{G}$, but a
sequence $(\mathcal{G}_t)_{t\in\mathbb{N}}$ of finite graphs
$\mathcal{G}_t=(\mathcal{V}_t,\mathcal{E}_t)$ sharing an infinite
connected covering graph $\mathcal{H}=(\mathcal{V},\mathcal{E})$, with
the covering maps $f_t:\mathcal{V}\to \mathcal{V}_t$.  If the
corresponding sequence of injectivity radii $r_t\equiv r_{f_t}$
diverges, we say that the sequence $({\cal G}_t)_t$ weakly converges
to $\mathcal{H}$.  Such a convergent sequence can be constructed, e.g., as a
sequence of finite quotients of the graph $\mathcal{H}$ with respect
to a sequence of subgroups of its symmetry group, which requires
$\mathcal{H}$ to be quasitransitive.  We do not know whether
quasitransitivity of $\mathcal{H}$ is necessary to have a sequence of
finite graphs covered by $\mathcal{H}$ and weakly convergent to
$\mathcal{H}$.  By this reason, in the following, we specify
(quasi)transitivity only when necessary for the corresponding proof.

Given a graph sequence with a common covering graph ${\cal H}$, we use
$\mathcal{Q}_t$ to denote the CSS code with parameters
$[[n_t,k_t,(d_{Xt},d_{Zt})]]$ associated with the covering map $f_t$.
We also denote the ``flattened'' subgraphs from Lemma
\ref{th:flat-sub} as
${\cal H}_t\equiv \mathcal{H}_{f_t}\subset \mathcal{H}$.  When the
sequence $(r_t)_t$ diverges, we can always construct a subsequence
$(t_s)_s$, $t_{s+1}>t_s$, such that the corresponding sequence of
graphs $(\mathcal{H}_{t_s})_s$ be increasing,
$\mathcal{H}_{t_{s+1}}\subsetneq \mathcal{H}_{t_s}$.  To this end, it
is sufficient to take $r_{t_{s+1}}>n_{t_s}$, regardless of the
particular spanning trees used in the construction of the graphs
$\mathcal{H}_t$.

\subsection{Homology erasure thresholds}

Coming back to percolation, let $H_1(f_t,p)$ denote the first homology
group formed by classes of homologically non-trivial cycles on the
open subgraph $[{\cal G}_t]_p$.  We will consider several observables
that quantify the changes in homology in the open subgraphs at large
$t$ as the probability $p$ is increased.  The first two, defined by
analogy with corresponding quantities for $1$-cycle proliferation in
continuum percolation\cite{Bobrowski-Skraba-2020}, are designed to
detect any changes in homology compared to the empty graphs at $p=0$,
and the graphs with all edges present at $p=1$.  Respectively, we
define the probability that a homologically non-trivial cycle exists
in the open subgraph,
\begin{equation}
  \label{eq:prob-E}
 {\bf P}_E(t,p)\equiv \Pp\biglb(\rank H_1(f_t,p)\neq 0\bigrb),
\end{equation}
and the probability that not all homologically non-trivial cycles are
covered in the open subgraph,
\begin{equation}
  \label{eq:prob-A}
 {\bf P}_A(t,p)\equiv \Pp\biglb(\rank H_1(f_t,p)\neq k_t\bigrb).
\end{equation}
Equivalently, ${\bf P}_A(t,p)$ is the probability that the open
subgraph at $\bar p=1-p$ covers a homologically non-trivial co-cycle.
In terms of the associated CSS code $\mathcal{Q}_t$, ${\bf P}_E(t,p)$
and ${\bf P}_A(t,1-p)$ are the erasure probabilities for a $Z$- and an
$X$-type codeword, respectively.  These quantities do not necessarily
characterize bulk phase(s), as they may be sensitive to the state of a
sublinear number of edges.

As $p$ is increasing from $0$ to $1$, ${\bf P}_E(t,p)$ is monotonously
increasing from $0$ to $1$ while ${\bf P}_A(t,p)$ is monotonously
decreasing from $1$ to $0$.  Thus, a version of the subsequence
construction can be used to ensure the existence of their $t\to\infty$
limits almost everywhere on the interval $p\in[0,1]$.  Instead, we
define the (lower) cycle erasure threshold for any given graph
sequence,
\begin{equation}
  \label{eq:pE0}
  p_E^0=\sup\left\{p\in [0,1]: \lim_{t\to\infty}{\bf P}_E(t,p)=0\right\}.
\end{equation}
 Because of monotonicity of
${\bf P}_E(t,p)$ as a function of $p$, a zero limit at some $p=p_0>0$
ensures the limit exists and remains the same everywhere on the
interval $p\in [0,p_0]$.  Further, the absence of convergence of the
sequence ${\bf P}_E(t,p)$ at some $p=p_1$ implies that the superior
and the inferior limits at $t\to\infty$ must be different, which, in
turn, implies the existence of a subsequence convergent to the
non-zero limit given by $\limsup_{t\to\infty}{\bf P}_E(t,p_1)>0$.

Similarly, we define the upper cycle erasure threshold,%
\begin{equation}
  \label{eq:pE1}
  p_E^1=\inf\left\{p\in [0,1]: \lim_{t\to\infty}{\bf P}_A(t,p)=0\right\},
\end{equation}
as the smallest $p$ such that open subgraphs preserve the full-rank
homology group with probability approaching one in the limit of the
sequence.

Existence of a homologically non-trivial cycle not covered by open
edges implies that closed edges must cover a conjugate codeword, a
non-trivial co-cycle.  The related threshold on an infinite graph can
be interpreted in terms of a transition dual to percolation,
proliferation of the boundaries at the complementary edge
configuration, with all closed edges replaced by open edges, and v.v.,
so that the open edge probability becomes $\bar p=1-p$.  On a locally
planar graph, like a tiling of a two-dimensional manifold, the dual
transition maps to the usual percolation on the dual graph.

We also notice that the usual erasure threshold $p_E$ for a family (or
a sequence) of quantum codes corresponds to a non-zero probability of
an \emph{erasure}, a configuration where a codeword is covered by
erased qubits.  For a CSS code, this implies a non-zero probability
that either an $X$- or a $Z$-type codeword be covered.  For codes
${\cal Q}_t$ associated with covering maps $f_t:{\cal H}\to{\cal G}_t$ in
the sequence $(\mathcal{G}_t)_{t\in\mathbb{N}}$, the conventional
erasure threshold can be found in terms of the thresholds for cycles
and co-cycles,
\begin{equation}
  \label{eq:pE}
  p_E=\min(p_E^0,1-p_E^1).
\end{equation}

The following lower bound constructed using a Peierls-style counting
argument is adapted from
Ref.~\onlinecite{Dumer-Kovalev-Pryadko-bnd-2015}:
\begin{restatable}{statement}{thPeierlsOne}
  \label{th:peierls-ineq}
  Consider a sequence of finite graphs
  $(\mathcal{G}_t)_{t\in\mathbb{N}}$ with a common covering graph
  $\mathcal{H}$.  Let $\Delta_{\rm max}$ be the maximum degree of
  $\mathcal{H}$, and assume that for some $t_0>0$, the injectivity
  radius $r_t$ associated with the maps
  $f_t:\mathcal{H}\to\mathcal{G}_t$ at $t\ge t_0$ scales at least
  logarithmically with the number of edges $n_t$, $r_t\ge A\ln n_t$,
  with some $A>0$.  The cycle erasure threshold for the corresponding
  sequence of CSS codes $(\mathcal{Q}_t)_{t\in\mathbb{N}}$ satisfies
  the lower bound $p_E^0\ge e^{-1/(2 A)}/(\Delta_{\rm max}-1)$.
\end{restatable}
It follows from the fact that $\mathcal{Q}_t=\css(J_t,K_t)$, where
$J_t$ is the vertex-edge incidence matrix of $\mathcal{G}_t$, with row
weights given by the vertex degrees, and Lemma \ref{th:dZ-bnd}.

We would like to ensure that the conventional erasure threshold
(\ref{eq:pE}) also be non-trivial, which requires that $p_E^1<1$.  To
construct such an upper bound, which becomes a lower bound in terms of
$\bar p=1-p$ in the dual representation, it is
sufficient\cite{Dumer-Kovalev-Pryadko-bnd-2015} that rows of the
trivial-cycle--edge adjacency matrix $K_t$ have bounded weights, and
that the distance $d_{Xt}$ diverges logarithmically or faster with
$n_t$.  Notice that here we do not rely on Lemma \ref{th:dX-bnd} which
gives a rather weak lower bound for the distance but, instead,
directly assume desired scaling of the minimum weight $ d_{Xt}$ of a
non-trivial co-cycle with $n_t$.  We have
\begin{restatable}{statement}{thPeierlsTwo}
  \label{th:peierls-ineq-two}
  Consider a sequence of finite graphs
  $(\mathcal{G}_t)_{t\in\mathbb{N}}$ with a common covering graph
  $\mathcal{H}$, with the cycle group ${\cal C}(\mathcal{H})$
  generated by cycles of weight not exceeding $\omega>1$.  Further,
  assume that the minimum weight $d_{Xt}\equiv d_X({\cal H};f_t)$ of a
  non-trivial co-cycle associated with the map
  $f_t:\mathcal{H}\to\mathcal{G}_t$ grows at least logarithmically
  with the number of edges $n_t$, $d_{Xt}\ge A'\ln n_t$, for
  sufficiently large $t\ge t_0'$ and some $A'>0$.  The upper erasure
  threshold for the corresponding sequence of CSS codes
  $(\mathcal{Q}_t)_{t\in\mathbb{N}}$ satisfies the bound
  $1-p_E^1\ge e^{-1/A'}/(\omega-1)$.
\end{restatable}
Let us now relate the cycle erasure threshold $p_E^0$ with the bulk
percolation threshold.  Most generally, it serves as an upper bound:
\begin{restatable}{theorem}{thErasurePercol}
  \label{th:pE-easy}
  Consider  a sequence of finite graphs
  $(\mathcal{G}_t)_{t\in\mathbb{N}}$ covered by an infinite graph
  $\mathcal{H}$.  Then, 
  $p_E^0\le p_{\rm c}(\mathcal{H})$.
\end{restatable}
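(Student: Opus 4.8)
The plan is to prove the statement in the form: for every $p>p_{\rm c}(\mathcal H)$ one has $\liminf_{t\to\infty}{\bf P}_E(t,p)>0$. Since ${\bf P}_E(t,\cdot)$ is nondecreasing, a zero limit at some $p_0$ would force ${\bf P}_E(t,p)\to0$ on all of $[0,p_0]$, so the above implies $\{p:\lim_t{\bf P}_E(t,p)=0\}\subseteq[0,p_{\rm c}(\mathcal H)]$, i.e. $p_E^0\le p_{\rm c}(\mathcal H)$. If the injectivity radii $r_t$ stay bounded the claim is immediate: by Lemma~\ref{th:dZ-bnd} there is a homologically non-trivial cycle of length $d_{Zt}\le 2r_t+3=O(1)$, and all its edges are open with probability at least $p^{\,2r_t+3}$, bounded away from $0$. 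So from now on assume $r_t\to\infty$.

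First I would set up the flattening. Fix once and for all a vertex $v_0'\in\mathcal V$ with $\theta_{v_0}(\mathcal H,p)>0$ (possible since $p>p_{\rm c}$), apply Lemma~\ref{th:flat-sub} at each $t$ with this $v_0'$ to get the subgraph $\mathcal H_t\equiv\mathcal H_{f_t}\subseteq\mathcal H$ containing $\mathcal B(v_0',r_t;\mathcal H)$, and identify $\mathcal H_t$ with the spanning subgraph $f_t(\mathcal H_t)\subseteq\mathcal G_t$; because the edges of $\mathcal H_t$ lie in pairwise distinct fibers, $f_t$ is a graph isomorphism there and Bernoulli($p$) percolation on $\mathcal G_t$ restricts to genuine Bernoulli($p$) percolation on $\mathcal H_t$. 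Let $W_t\subseteq\mathcal E_t$ be the complementary "wrap‑around'' edges; $W_t\ne\emptyset$ because $\mathcal V_{f_t}$ is a proper vertex subset of the connected infinite graph $\mathcal H$ and so cannot be closed. The key observation is: if $e=\{u,v\}\in W_t$ is open \emph{and} its $\mathcal V_{f_t}$–representatives $u',v'$ (which are not adjacent in $\mathcal H$, hence at $\mathcal H$–distance $>r_t$) lie in a common open cluster of $[\mathcal H_t]_p$, then a path realising that connection, lifted from $u'$, stays inside $\mathcal H_t$ and ends at $v'$, while the subsequent lift of $e$ from $v'$ lands on a vertex $c\ne u'$; thus the cycle (path $\cup\{e\}$) has a non‑closed lift, i.e. it is a homologically non‑trivial cycle, and it lies in $[\mathcal G_t]_p$. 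Conditioning on the configuration inside $\mathcal H_t$ and using that the state of each $e\in W_t$ is independent of it,
\[
  {\bf P}_E(t,p)\ \ge\ p\cdot\Pp\bigl(\exists\,e=\{u,v\}\in W_t:\ u'\leftrightarrow v'\ \text{in}\ [\mathcal H_t]_p\bigr).
\]

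It remains to bound the last probability away from $0$. A first–moment argument, together with the deterministic bound $\#\{e\in W_t:u'\leftrightarrow v'\}\le|W_t|$, gives
\[
  \Pp\bigl(\exists\,e\in W_t:\ u'\leftrightarrow v'\bigr)\ \ge\ \frac1{|W_t|}\sum_{e\in W_t}\Pp\bigl(u'\leftrightarrow v'\ \text{in}\ [\mathcal H_t]_p\bigr),
\]
so it suffices to show $\sum_{e\in W_t}\Pp(u'\leftrightarrow v'\ \text{in}\ [\mathcal H_t]_p)\ge c'(p)\,|W_t|$ with $c'(p)>0$ for all large $t$. Heuristically this should hold because, $p$ being supercritical, $[\mathcal H_t]_p$ (an honest Bernoulli($p$) percolation on a finite piece of $\mathcal H$ exhausting it) has a large open cluster that necessarily reaches the "cut locus'' of the flattening, namely the vertices with $\deg_{\mathcal H_t}v'<\deg_{\mathcal H}v'$, which is exactly where the wrap‑around edges attach; when $\mathcal H$ is (quasi)transitive this cluster has linear size with probability $\to1$. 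In the amenable case the estimate is routine, since $p_{\rm c}=p_{\rm u}$ forces $\tau_{u',v'}(\mathcal H,p)\to\theta^2>0$ uniformly as the wrap‑edge endpoints recede to infinity.

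The main obstacle is precisely this last estimate when $\mathcal H$ is non‑amenable and $p_{\rm c}(\mathcal H)<p<p_{\rm u}(\mathcal H)$: the individual two–point connectivities $\tau_{u',v'}$ between the (necessarily far‑apart, distance $>r_t$) endpoints of wrap‑around edges tend to $0$, so one must show that the number $|W_t|$ of such edges — equivalently the code dimension $k_t$ — grows fast enough to compensate, exploiting that the supercritical cluster of $[\mathcal H_t]_p$ spans the cut locus at many wrap‑joined pairs at once rather than just a fixed one. (An alternative would be a crossing/circuit argument directly localising a non‑contractible open cycle, but the requisite crossing estimates for an arbitrary (quasi)transitive $\mathcal H$ are themselves delicate; for planar $\mathcal H$ they are classical.)
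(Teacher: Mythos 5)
Your reduction is sound as far as it goes: the flattening via Lemma~\ref{th:flat-sub}, the identification of wrap-around edges $W_t$, and the observation that an open wrap edge whose lifted endpoints are connected in $[\mathcal{H}_t]_p$ closes a homologically non-trivial cycle are all correct, and the inequality ${\bf P}_E(t,p)\ge p\cdot\Pp\bigl(\exists\,e\in W_t:\,u'\leftrightarrow v'\ \text{in}\ [\mathcal H_t]_p\bigr)$ is valid. But the argument then stalls, and you have correctly diagnosed where: the first-moment bound $\Pp(\exists e:\ldots)\ge|W_t|^{-1}\sum_e\tau_{u',v'}$ is a \emph{lower} bound whose right-hand side tends to $0$ whenever the connectivities $\tau_{u',v'}$ individually vanish --- which is exactly what happens for non-amenable quasi-transitive $\mathcal H$ in the window $p_{\rm c}(\mathcal H)<p<p_{\rm u}(\mathcal H)$, since the wrap-edge endpoints recede to infinity ($d_{\mathcal H}(u',v')>r_t\to\infty$). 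No amount of growth in $|W_t|$ rescues a first-moment bound on the \emph{average} connectivity; one would need a second-moment or exploration argument. As written, the theorem is proved only for amenable $\mathcal H$ (or for $p>p_{\rm u}$), which misses the regime of primary interest in the paper.

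The paper closes precisely this gap by trading your static flattening for a dynamic (invasion-percolation-style) coupling. Starting from $v'_0$, one exposes edge states one generation at a time, simultaneously on $\mathcal H$ and on $\mathcal G_t$, using the covering map to transport labels; this coupling is exact for as long as the growing cluster remains ``flat'' (no two of its $\mathcal G_t$-images coincide). If the $\mathcal H$-cluster of $v'_0$ turns out to be infinite --- probability $\theta_{v'_0}(p;\mathcal H)>0$ once $p>p_{\rm c}$ --- flatness must break on the finite graph $\mathcal G_t$, and at the moment it first breaks there is a specific \emph{not-yet-exposed} edge of $\mathcal G_t$ joining two cluster vertices whose lifts differ; exposing that edge gives a homologically non-trivial open cycle with fresh, unconditional probability $p$. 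Hence $\Pp(\text{open non-trivial cycle})\ge p\,\theta_{v'_0}(p;\mathcal H)>0$ for every $p>p_{\rm c}(\mathcal H)$, with no reference to two-point connectivities at all. The essential idea you are missing is that the wrap-closing edge need not be fixed in advance: by exploring the cluster before deciding the state of the edge that first forces a non-flat identification, the event of reaching that edge becomes independent of the event of its being open, and the decaying connectivity function never enters. Your fixed-flattening set-up forfeits this independence and is therefore stuck with a genuinely harder estimate; the invasion coupling avoids the estimate entirely, which is why it works uniformly for any infinite cover $\mathcal H$, amenable or not.
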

This includes the case where the sequence of the injectivity radii
remains bounded (no weak convergence to ${\cal H})$, in which case,
obviously, $p_E^0=0$.  More precise results for $p_E^0$ are available
with additional assumptions, including the scaling of the injectivity
radius with the logarithm of the graph size:
\begin{restatable}{theorem}{pElogA}
  \label{th:pE-logA}
  Consider a sequence of finite transitive graphs
  $(\mathcal{G}_t)_{t\in\mathbb{N}}$ covered by an infinite graph
  $\mathcal{H}$.  If the homological distance $d_{Zt}$ scales
  sublogarithmically with graph size,
  $\displaystyle\lim_{t\to\infty}{d_{Zt}\over\ln n_{t}\strut}=0$, then
  $p_E^0=0$.
\end{restatable}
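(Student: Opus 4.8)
The plan is to show directly that $p_E^0=0$ by exhibiting, for any fixed $p>0$, a positive lower bound on $\mathbf{P}_E(t,p)$ that stays bounded away from zero along the sequence (indeed we will get it tending to $1$). The strategy is to find a short homologically non-trivial cycle on $\mathcal{G}_t$, and, because $d_{Zt}$ is sublogarithmic, "decorate" it by independent attempts so that at least one decorated copy is entirely open with overwhelming probability. Concretely: fix $p\in(0,1)$. By Lemma \ref{th:dZ-bnd}, $\mathcal{G}_t$ carries a homologically non-trivial (simple) cycle $C_t$ of length $d_{Zt}\le 2r_t+3$; since $d_{Zt}/\ln n_t\to 0$, we have $d_{Zt}=o(\ln n_t)$. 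The probability that all edges of one fixed such cycle are open is $p^{d_{Zt}}=e^{-d_{Zt}\ln(1/p)}=e^{-o(\ln n_t)}=n_t^{-o(1)}$, which decays subpolynomially but not necessarily to a constant; so a single cycle does not immediately suffice and we need many disjoint or weakly dependent non-trivial cycles.

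The key step is to produce, inside $\mathcal{G}_t$, a large collection of homologically non-trivial cycles whose openings are (nearly) independent, so that the probability that none of them is fully open is small. Here transitivity of $\mathcal{G}_t$ enters: translating $C_t$ by the vertex-transitive automorphism group of $\mathcal{G}_t$ produces $|\mathcal{V}_t|$ cycles, all homologically non-trivial and all of length $d_{Zt}$. From these I would extract a sub-family that is edge-disjoint (or has bounded edge-overlap), of size at least $c\,n_t/\Delta_{\max}^{\,O(d_{Zt})}$ by a greedy/packing argument: each cycle meets at most $d_{Zt}$ edges, and each edge lies in a bounded-by-$\Delta_{\max}^{O(r_t)}$ number of the translated cycles — but since $r_t$ grows, this bound is too weak. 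The cleaner route is to use Lemma \ref{th:flat-sub}: lift to the flattened connected subgraph $\mathcal{H}_t$ containing a ball $\mathcal{B}(v',r_t;\mathcal{H})$; in that ball the cover is injective, so a non-trivial cycle on $\mathcal{G}_t$ corresponds to a "short path that wraps" and its many automorphic images are genuinely edge-disjoint provided their centers are spaced more than $d_{Zt}$ apart. Since $d_{Zt}=o(\ln n_t)$ while $n_t\to\infty$, one can pack $N_t\ge n_t^{\,1-o(1)}$ pairwise edge-disjoint non-trivial cycles of length $d_{Zt}$. Then
\begin{equation}
  1-\mathbf{P}_E(t,p)\le \bigl(1-p^{d_{Zt}}\bigr)^{N_t}
  \le \exp\!\bigl(-N_t\,p^{d_{Zt}}\bigr),
  \label{eq:decorated-bound}
\end{equation}
and $N_t\,p^{d_{Zt}}\ge n_t^{1-o(1)}\cdot n_t^{-o(1)}=n_t^{1-o(1)}\to\infty$, so $\mathbf{P}_E(t,p)\to1$ for every $p>0$, giving $p_E^0=0$.

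The main obstacle is the edge-disjoint packing step: one must verify that translating a single short non-trivial cycle by automorphisms of $\mathcal{G}_t$ can be done so that $N_t=n_t^{1-o(1)}$ of the images are pairwise edge-disjoint \emph{and each remains homologically non-trivial}. Homological non-triviality of every image is automatic from transitivity (an automorphism of $\mathcal{G}_t$ need not lift to $\mathcal{H}$, but it permutes the cycle group and, one checks, preserves $\mathcal{C}_0(\mathcal{H};f)$ because the cover is canonical — this point deserves a short lemma, or alternatively one restricts to the finite-quotient setting where automorphisms descend from $\mathrm{Aut}(\mathcal{H})$). For edge-disjointness, a counting/deletion argument suffices: each edge of $\mathcal{G}_t$ is covered by at most $M_t$ of the $|\mathcal{V}_t|$ translated cycles, where $M_t$ is bounded by the number of length-$d_{Zt}$ non-trivial cycles through a fixed edge, itself at most $\Delta_{\max}^{\,d_{Zt}}=n_t^{o(1)}$; hence a greedy selection yields $N_t\ge |\mathcal{V}_t|/(d_{Zt}M_t)=n_t^{1-o(1)}$ edge-disjoint ones, which is exactly what \eqref{eq:decorated-bound} needs. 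If one is uneasy about full edge-disjointness, bounded-overlap families work too, at the cost of a weaker but still superpolynomially-growing $N_t p^{d_{Zt}}$ via a second-moment (Chebyshev) estimate on the number of fully open non-trivial cycles. Either way the conclusion $p_E^0=0$ follows.
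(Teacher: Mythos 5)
Your approach is essentially the same as the paper's: produce roughly $n_t^{1-o(1)}$ pairwise disjoint short homologically non-trivial cycles, observe that each is fully open with probability at least $p^{d_{Zt}}=n_t^{-o(1)}$, and conclude via $\bigl(1-p^{d_{Zt}}\bigr)^{N_t}\le\exp(-N_tp^{d_{Zt}})\to 0$. The paper realizes the packing differently --- it tiles $\mathcal{G}_t$ by $N_t>|\mathcal{V}_t|/\Delta_{\max}^{2+d_{Zt}}$ disjoint balls of radius $\rho_t=1+\lfloor d_{Zt}/2\rfloor$ and invokes transitivity plus Lemma~\ref{th:dZ-bnd} (the reference to Lemma~\ref{th:dX-bnd} in the paper's proof is a typo) to assert each ball contains a non-trivial cycle of length $d_{Zt}$ --- whereas you translate a single shortest non-trivial cycle by $\aut(\mathcal{G}_t)$ and then greedily extract an edge-disjoint subfamily. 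Both give $N_t\ge n_t^{1-o(1)}$ and the same exponential bound, so the difference is cosmetic.

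The point you flag --- that an automorphism of $\mathcal{G}_t$ need not preserve $\mathcal{C}_0(\mathcal{H};f_t)$ when it does not lift to $\mathcal{H}$, so homological non-triviality of the translated cycle is not automatic --- is exactly the point the paper also leans on when it says ``By transitivity of $\mathcal{G}_t$ \ldots\ each ball contains a homologically non-trivial cycle,'' and the paper does not address it either. You are right that this deserves a short lemma or an added hypothesis (e.g., that $\mathcal{G}_t=\mathcal{H}/\Gamma_t$ with the transitive action of $\aut(\mathcal{G}_t)$ induced from automorphisms of $\mathcal{H}$ normalizing $\Gamma_t$); your instinct to raise it is sound and applies equally to the published argument.
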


\begin{restatable}{theorem}{pElogB}
  \label{th:pE-logB}
  Consider a sequence of finite graphs
  $(\mathcal{G}_t)_{t\in\mathbb{N}}$ covered by an infinite
  quasi-transitive graph $\mathcal{H}$.  If the injectivity radius
  scales superlogarithmically with the graph size,
  $\displaystyle\lim_{t\to\infty}{r_t\over\ln n_t}=\infty$, then
  $p_E^0=p_{\rm c}$.
\end{restatable}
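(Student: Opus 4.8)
The plan is to establish the reverse inequality $p_E^0\ge p_{\rm c}(\mathcal{H})$, which together with the bound $p_E^0\le p_{\rm c}(\mathcal{H})$ of Theorem~\ref{th:pE-easy} gives the asserted equality. Accordingly I fix $p<p_{\rm c}(\mathcal{H})$ and show that ${\bf P}_E(t,p)\to0$ as $t\to\infty$; by the definition~\eqref{eq:pE0} of $p_E^0$ this is exactly what is needed. A minimum-weight element of the cycle space of the open subgraph $[\mathcal{G}_t]_p$ is irreducible and hence a simple cycle, so the event $\{\rank H_1(f_t,p)\neq0\}$ is precisely the event that $[\mathcal{G}_t]_p$ contains an open, homologically non-trivial simple cycle $C$; by Lemma~\ref{th:dZ-bnd} every such $C$ has length $|C|\ge 2r_t+1$.

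The key idea is to transport this event to ordinary Bernoulli percolation on $\mathcal{H}$ confined to a single ball of radius $r_t$. Couple the percolation on $\mathcal{G}_t$ with a configuration $\tilde\omega$ on $\mathcal{H}$, declaring an edge $e'\in\mathcal{E}$ to be $\tilde\omega$-open iff $f_t(e')$ is open in $[\mathcal{G}_t]_p$. Since $f_t$ is one-to-one on any ball $\mathcal{B}(w,r_t;\mathcal{H})$, distinct edges with both endpoints in such a ball have distinct images, so the restriction of $\tilde\omega$ to the edges of $\mathcal{B}(w,r_t;\mathcal{H})$ is distributed exactly as i.i.d.\ Bernoulli$(p)$ percolation. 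Let $\mathcal{V}_{f_t}\subset\mathcal{V}$ be a set of representative lifts, one per fiber, as in Lemma~\ref{th:flat-sub}, so that $|\mathcal{V}_{f_t}|=|\mathcal{V}_t|\le 2n_t$, and write $\phi_t(u)\in\mathcal{V}_{f_t}$ for the representative lift of $u\in\mathcal{V}_t$. I claim that the occurrence of an open non-trivial simple cycle $C$ forces, for the lift $\hat u:=\phi_t(u)$ of at least one vertex $u\in C$, the event $B_{\hat u}$ that the $\tilde\omega$-open cluster of $\hat u$ reaches $\mathcal{H}$-distance $r_t$ from $\hat u$. Indeed, lifting the closed walk $C$ from $\hat u$ yields a $\tilde\omega$-open lift of $C$; when the cover is normal this lift is an open path (it is a deck-translate of some non-closed lift, which exists because $C$ is non-trivial), and then its two endpoints are distinct vertices of one fiber of $\mathcal{H}$, hence at mutual distance $\ge 2r_t+1$ since two distinct vertices of a fiber cannot lie together in an injectivity ball; a connected cluster containing $\hat u$ and a vertex at distance $\ge 2r_t+1$ from it necessarily contains a vertex at distance exactly $r_t$, so $B_{\hat u}$ holds. (For a general cover one must additionally rule out a closed-loop lift, using that a simple cycle all of whose vertices lie in an injectivity ball is homologically trivial.)

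Crucially, $B_{\hat u}$ depends only on the states of the edges of $\mathcal{B}(\hat u,r_t;\mathcal{H})$: any $\tilde\omega$-open path witnessing it may be truncated at its first visit to distance $r_t$ from $\hat u$ and then lies wholly in that ball, on which $\tilde\omega$ is honest i.i.d.\ Bernoulli$(p)$. Hence $\Pp(B_{\hat u})$ equals the ordinary percolation probability that $\hat u$ is connected in $\mathcal{H}$ to some vertex at distance $r_t$ from it. A union bound over the $\le 2n_t$ choices of $\hat u$ then gives
\[
  {\bf P}_E(t,p)\;\le\;2n_t\max_{v\in\mathcal{V}}\Pp\bigl(v\leftrightarrow\{w:d(v,w)=r_t\}\ \text{in}\ \mathcal{H}\bigr),
\]
and by quasi-transitivity of $\mathcal{H}$ the maximum is over the finitely many vertex types. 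By sharpness of the percolation phase transition on a quasi-transitive graph (Menshikov; Aizenman--Barsky; Duminil-Copin--Tassion), for $p<p_{\rm c}(\mathcal{H})$ there is $c=c(p)>0$ with $\Pp\bigl(v\leftrightarrow\{w:d(v,w)=r\}\bigr)\le e^{-cr}$ uniformly in $v$. Therefore ${\bf P}_E(t,p)\le 2n_t\,e^{-c r_t}=\exp\bigl(\ln(2n_t)-c r_t\bigr)\to0$, since $r_t/\ln n_t\to\infty$ by assumption. This establishes $p_E^0\ge p_{\rm c}(\mathcal{H})$ and, with Theorem~\ref{th:pE-easy}, the equality $p_E^0=p_{\rm c}(\mathcal{H})$.

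The main obstacle I anticipate is the transfer claim of the second paragraph: the union bound has to run over the \emph{finite} representative set $\mathcal{V}_{f_t}$ rather than over the infinitely many lifts of each vertex, so one must be sure that a non-trivial open cycle always betrays itself through the designated lift $\phi_t(u)$ of one of its own vertices — which is transparent for normal covers but in general requires the ``simple cycle in an injectivity ball is trivial'' principle, and some bookkeeping of fibers. The only external input beyond the earlier lemmas is the \emph{exponential} (rather than merely polynomial) decay of the one-arm probability throughout the subcritical phase of a quasi-transitive graph; the weaker fact $p_{\rm c}=p_T$ together with Markov's inequality would yield only a $1/r_t$-type bound, which does not beat the prefactor $n_t$, and this is precisely where superlogarithmic — as opposed to logarithmic — growth of $r_t$ is used.
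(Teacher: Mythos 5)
Your route is essentially the paper's: bound the homology-existence probability by a union bound over the (at most $|\mathcal{V}_t|$) vertices of $\mathcal{G}_t$ of an event localized in a radius-$r_t$ ball, transfer the probability of that local event to the i.i.d.\ Bernoulli measure on $\mathcal{H}$ through the injectivity radius, and beat the polynomial prefactor using subcritical exponential decay plus the hypothesis $r_t/\ln n_t\to\infty$. The one substantive difference is the percolation input: you bound the one-arm (point-to-distance-$r_t$) probability and invoke sharpness of the transition (Menshikov, Duminil-Copin--Tassion), whereas the paper bounds the probability that a vertex lies in a cluster with $s>r_t$ vertices, invoking the exponential tail of the cluster-size distribution (Antunovi\'c--Veseli\'c) together with the identity $\sum_{s\ge1}M_v(s;\mathcal{G})=1$, which holds on the finite graphs and on $\mathcal{H}$ below $p_{\rm c}$. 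Both are valid exponential-decay inputs on quasi-transitive graphs and buy you the same $e^{-c(p)r_t}$; your version is perhaps marginally more elementary in not needing the $\sum_s M_v(s)=1$ bookkeeping, while the paper's avoids any discussion of lifts of cycles.

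There is one loose end you should tighten, and it is not repaired by the fix you sketch. Your parenthetical ``for a general cover one must rule out a closed-loop lift, using that a simple cycle all of whose vertices lie in an injectivity ball is homologically trivial'' does not apply: the lift in question has length $\ge 2r_t+1$, so it need not lie in a ball of radius $r_t$, and a single closed lift of $C$ on $\mathcal{H}$ does not certify triviality (all lifts must close). The correct repair is already contained in your own observation that $B_{\hat u}$ depends only on the edge states in $\mathcal{B}(\hat u,r_t;\mathcal{H})$: since $f_t$ restricts to a graph isomorphism $\mathcal{B}(\hat u,r_t;\mathcal{H})\to\mathcal{B}(u,r_t;\mathcal{G}_t)$ and $\tilde\omega$ there is the pullback of $\omega$, the event $B_{\hat u}$ is the pullback of an event measurable with respect to $\omega|_{\mathcal{B}(u,r_t;\mathcal{G}_t)}$, hence depends only on $u=f_t(\hat u)$ and not on the chosen lift. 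Now non-triviality of $C$ guarantees that \emph{some} lift of the walk around $C$, started from \emph{some} vertex $w'$ in the fiber of some $w\in C$, fails to close; that lift is an open path joining $w'$ to another fiber element of $w$, forcing $B_{w'}$, and therefore $B_{\phi_t(w)}$ as well. This removes any need for normality of the cover, and the union bound over the $\le|\mathcal{V}_t|$ representatives goes through as you wrote it.
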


Information about the other threshold,
$p_E^1$, can be obtained in the planar case with the help of duality:

\begin{restatable}{corollary}{pEplanar}
  \label{th:pE-planar}
  Let ${\cal H}$ and $\widetilde{\cal H}$ be a pair of mutually dual
  infinite quasitransitive planar graphs.  Consider a sequence of
  finite graphs $({\cal G}_t)_{t\in\mathbb{N}}$ weakly convergent to
  ${\cal H}$, a cover of the graphs in the sequence.  Then,
  \begin{enumerate}[label={\em(\textbf{\roman*})}]
  \item
 $p_E^1\ge 1-p_{\rm c}(\widetilde{\cal
      H})$.      In addition, \smallskip
  \item if the graphs $\mathcal{G}_t$ in the sequence are transitive,
    $t\in\mathbb{N}$, and the injectivity radius grows
    sublogarithmically with the graph size, then $p_E^1=1$;
  \item if the injectivity radius grows superlogarithmically, then
    $p_E^1=1-p_{\rm c}(\widetilde{\cal H})$.
\end{enumerate}
\end{restatable}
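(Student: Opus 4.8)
The plan is to derive all three parts from the duality relation \eqref{eq:rank-H1-restricted-duality} together with the erasure/percolation theorems already established. The key observation is that a planar dual pair $(\mathcal{H},\widetilde{\cal H})$ induces, for each covering map $f_t:\mathcal{H}\to\mathcal{G}_t$, a dual covering map $\tilde f_t:\widetilde{\cal H}\to\widetilde{\cal G}_t$ onto the planar dual $\widetilde{\cal G}_t$ of $\mathcal{G}_t$, in such a way that the co-cycles of $\mathcal{G}_t$ (the $X$-type logical operators of $\mathcal{Q}_t$) are exactly the cycles of $\widetilde{\cal G}_t$ (the $Z$-type logical operators of $\widetilde{\cal Q}_t\equiv\mathcal{Q}(\widetilde{\cal H};\tilde f_t)$). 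Under this identification, an edge configuration on $\mathcal{G}_t$ with open-edge probability $p$ corresponds to the complementary configuration on $\widetilde{\cal G}_t$ with open-edge probability $\bar p=1-p$, because an edge and its planar dual cannot both be open in the matched configuration. Consequently ${\bf P}_A(t,p)$ for the sequence $(\mathcal{G}_t)$ equals ${\bf P}_E(t,1-p)$ for the dual sequence $(\widetilde{\cal G}_t)$, and hence
\begin{equation}
  p_E^1(\mathcal{H};(f_t))=1-p_E^0\bigl(\widetilde{\cal H};(\tilde f_t)\bigr).
  \label{eq:pE1-dual}
\end{equation}

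For part (\textbf{i}): apply Theorem~\ref{th:pE-easy} to the dual sequence to get $p_E^0(\widetilde{\cal H};(\tilde f_t))\le p_{\rm c}(\widetilde{\cal H})$, and substitute into \eqref{eq:pE1-dual} to obtain $p_E^1\ge 1-p_{\rm c}(\widetilde{\cal H})$. I should check that weak convergence of $(\mathcal{G}_t)$ to $\mathcal{H}$ forces weak convergence of $(\widetilde{\cal G}_t)$ to $\widetilde{\cal H}$ — this follows because the injectivity radius of $\tilde f_t$ differs from $r_{f_t}$ by at most an additive constant depending only on the face sizes of $\mathcal{H}$ (a ball in $\widetilde{\cal H}$ of a given radius is contained in, and contains, balls in $\mathcal{H}$ of comparable radius), so $\tilde r_t\to\infty$. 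For part (\textbf{iii}): the injectivity radii of the dual maps grow superlogarithmically as well (same additive-constant comparison, noting $n_t$ and $\tilde n_t=|\mathcal{E}_{\widetilde{\cal G}_t}|$ are equal since planar duality preserves the edge set), so Theorem~\ref{th:pE-logB} applied to $(\widetilde{\cal G}_t)$ gives $p_E^0(\widetilde{\cal H};(\tilde f_t))=p_{\rm c}(\widetilde{\cal H})$; plugging into \eqref{eq:pE1-dual} yields $p_E^1=1-p_{\rm c}(\widetilde{\cal H})$, which combined with part (\textbf{i}) pins down the value exactly.

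For part (\textbf{ii}): sublogarithmic growth of $r_t$ does \emph{not} immediately give sublogarithmic growth of $d_{Zt}$ for the dual code, so I cannot simply invoke Theorem~\ref{th:pE-logA} on the dual sequence. Instead I work directly: by Lemma~\ref{th:dX-bnd} applied to $\mathcal{H}$ (one-ended, cycle group generated by bounded-weight cycles since $\mathcal{H}$ is quasitransitive planar), $d_{Xt}>r_t/\omega$, which is sublogarithmic; meanwhile $d_{Zt}\ge 2r_t+1$ by Lemma~\ref{th:dZ-bnd}. The dual statement of Theorem~\ref{th:pE-logA} — that sublogarithmic growth of the relevant distance forces the corresponding erasure probability to tend to $1$ for all $p>0$ — should be transcribed via \eqref{eq:pE1-dual}: sublogarithmic $d_{Zt}$ of the \emph{dual} code $\widetilde{\cal Q}_t$ is what Theorem~\ref{th:pE-logA} needs, and the dual code's $Z$-distance is $d_{Xt}$ of $\mathcal{Q}_t$, which is sublogarithmic. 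Hence $p_E^0(\widetilde{\cal H};(\tilde f_t))=0$, so $p_E^1=1$ by \eqref{eq:pE1-dual}.

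The main obstacle I anticipate is the bookkeeping around the dual covering map: verifying carefully that planar duality of the \emph{infinite} pair $(\mathcal{H},\widetilde{\cal H})$ descends to a genuine covering map $\tilde f_t:\widetilde{\cal H}\to\widetilde{\cal G}_t$ with the matched-configuration property, and that homological triviality with respect to $f_t$ corresponds to homological triviality with respect to $\tilde f_t$ of the dual edge set. This is the content that makes ${\bf P}_A$ for one sequence equal ${\bf P}_E$ for the other, and it needs the hypothesis that $\mathcal{G}_t$ is a quotient-type cover embedded in a surface so that the planar dual $\widetilde{\cal G}_t$ is well-defined; the transitivity hypothesis in parts (\textbf{ii})--(\textbf{iii}) is presumably what guarantees the surface embedding behaves uniformly. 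Once \eqref{eq:pE1-dual} and the weak-convergence transfer are established, the three parts are short deductions from the theorems already in hand.
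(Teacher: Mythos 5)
Your overall approach matches the paper's: both establish a duality between the primal code $\mathcal{Q}_t$ and the dual code $\widetilde{\mathcal{Q}}_t$ on the locally planar dual $\widetilde{\cal G}_t$, use Eq.~(\ref{eq:rank-H1-restricted-duality}) to turn $\mathbf{P}_A$ for the primal sequence into $\mathbf{P}_E$ for the dual sequence at $\bar p=1-p$, and then invoke Theorems~\ref{th:pE-easy}, \ref{th:pE-logA}, \ref{th:pE-logB} on the dual sequence. Your treatment of parts (\textbf{i}) and (\textbf{iii}) is essentially the paper's argument.

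For part (\textbf{ii}), however, there is a genuine logical gap. You need to show that $\tilde d_{Zt}=d_{Xt}$, the $Z$-distance of the dual code, grows \emph{sub}logarithmically, in order to apply Theorem~\ref{th:pE-logA} to the dual sequence. But the bound you cite, Lemma~\ref{th:dX-bnd} giving $d_{Xt}>r_t/\omega$, is a \emph{lower} bound, and a lower bound that happens to be sublogarithmic says nothing about whether $d_{Xt}$ itself is sublogarithmic; $d_{Xt}$ could in principle be much larger. The bound you actually need is the \emph{upper} bound from Lemma~\ref{th:dZ-bnd} applied to the \emph{dual} code: $\tilde d_{Zt}\le 2\tilde r_t+3$. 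Combined with the observation (which you do state earlier, though you claim it follows from a somewhat handwavy additive-constant comparison of balls) that $\tilde r_t$ scales the same way as $r_t$, this gives the needed sublogarithmic upper bound on $d_{Xt}$. The paper establishes the two-sided comparability $r_t\asymp\tilde r_t$ more carefully, using both Lemma~\ref{th:dZ-bnd} and Lemma~\ref{th:dX-bnd} symmetrically on the primal and dual sequences: Lemma~\ref{th:dZ-bnd} gives $d_{Zt}=\Theta(r_t)$ and $d_{Xt}=\Theta(\tilde r_t)$, while the two applications of Lemma~\ref{th:dX-bnd} give $\tilde r_t\gtrsim r_t$ and $r_t\gtrsim\tilde r_t$. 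With that in hand, your argument for part (\textbf{ii}) should cite Lemma~\ref{th:dZ-bnd} on the dual graph, not Lemma~\ref{th:dX-bnd} on the primal one. A secondary caveat worth flagging, since you raise it yourself: Theorem~\ref{th:pE-logA} is stated for transitive $\mathcal{G}_t$, so applying it to the dual sequence requires either transitivity of $\widetilde{\cal G}_t$ or the relaxed bounded-vertex-class condition noted in the remark after the theorem's proof; this detail is glossed over in the paper as well.
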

Notice that for a superlogarithmic scaling of the injectivity radius,
the graph must be amenable, in which case
$p_{\rm u}(\mathcal{H})=p_{\rm c}(\mathcal{H})$.  We also believe that
under conditions of the Corollary, the duality gives
$p_{\rm u}(\mathcal{H})=1-p_{\rm c}(\widetilde{\cal H})$, see
Eq.~(\ref{eq:planar-duality}), although we only found the proof for
the case where the graph $\mathcal{H}$ is
transitive\cite{Haggstrom-Jonasson-2006}.  Whenever such a duality
relation holds, the upper cycle erasure threshold is bounded below by
the uniqueness threshold, $p_E^1\ge p_{\rm u}(\mathcal{H})$; with
superlogarithmic scaling of the injectivity radius, the sequence of
thresholds collapses to a single point,
$p_E^0=p_E^1=p_{\rm c}(\mathcal{H})=p_{\rm u}(\mathcal{H})$.

These results leave out an important case of percolation with
logarithmic distance scaling.  It is easy to see that logarithmic
distance scaling does not necessarily imply that
$p_E^0$ and $p_{\rm c}(\mathcal{H})$ be equal:
\begin{example}[Anisotropic square-lattice toric codes]
  \label{ex:anisotropic-square}
  Consider a sequence of tori
  $\mathcal{G}_t=\mathcal{T}_{L_x(t),L_y(t)}$ obtained from the
  infinite square lattice
  $\mathcal{H}$ by identifying the vertices at distances
  $L_x(t)$ and $L_y(t)$ along the edges in $x$ and
  $y$ directions, respectively.  For some
  $A>0$, consider the scaling $L_x(t)=t$,
  $L_y(t)=e^{t/A}/(2t)$.  This gives $d_{Zt}=t$ and
  $n_t=e^{t/A}$, so that $d_{Zt}=A\ln
  n_t$.  The cycle erasure threshold
  $p_E^0$ for this graph sequence satisfies $e^{-1/A}/3<p_E^0\le
  e^{-1/A}$.
\end{example}
The upper bound follows from considering
$L_y(t)$ independent non-trivial cycles of length
$t$, while the lower bound is given by Statement
\ref{th:peierls-ineq}.  In comparison, for edge percolation on
infinite square lattice, $p_{\rm c}=1/2$.

In addition to Example \ref{ex:anisotropic-square}, in
Sec.~\ref{sec:numeric} we give numerical evidence that
$p_E^0<p_{\rm c}(\mathcal{H})$ for several families of hyperbolic codes based
on regular $\{f,d\}$ tilings of the hyperbolic plane (here $2df>d+f$;
these are known to have a finite asymptotic rate $R=1-2/d-2/f$).

\subsection{Erasure rate thresholds}

Logarithmic scaling of the minimum distance $d_{Zt}$ associated with
the first homology group is the largest one may hope for in the
important case when the covering graph $\mathcal{H}$ is
non-amenable.  We specifically focus on the case of a graph sequence
with \emph{extensive homology rank} scaling, i.e., where the
associated codes have an asymptotically finite rate,
$R\equiv \lim_{t\to\infty}k_t/n_t>0$.  For such graph sequences, we
also consider the expected dimension of the erased subspace per edge,
or the \emph{erasure rate},
\begin{equation}
  \label{eq:mean-erased}
  {\bf R}_E(t,p)\equiv 
  n_t^{-1}\,
  \Ep\biglb(\rank H_1(f_t,p)\bigrb).
 \end{equation}
Analogous quantity was analyzed in detail by Delfosse and
Z{\'e}mor\cite{Delfosse-Zemor-2012}.  Unlike the probabilities
${\bf P}_E$ and ${\bf P}_A$, the erasure rate ${\bf R}_E$ is a bulk
quantity which can be used to define a thermodynamical transition in
the usual sense.  For any $t\in\mathbb{N}$, the erasure rate
${\bf R}_E(t,p)$ is a monotonously increasing function of
$p\in [0,1]$, bounded by the values at the ends of the interval,
\begin{equation}
   0\le {\bf R}_E(t,p)\le R_t\equiv k_t/n_t\le 1.
  \label{eq:bounds-RE}
\end{equation}

Let us now consider the thresholds associated with the erasure rate
(\ref{eq:mean-erased}).  We define the lower $p_H^0$ and the upper
$p_H^1$ critical points as the values of $p$ where ${\bf R}_E(t,p)$ in
the limit of large $t$ starts to deviate from $0$ and from $R$,
respectively:
\begin{eqnarray}
  \label{eq:pH0-def}
  p_H^0&=&\sup\{p\in[0,1]: \lim_{t\to\infty}{\bf R}_E(t,p)=0\},\\
  \label{eq:pH1-def}
  p_H^1&=&\inf\{p\in[0,1]: \lim_{t\to\infty}{\bf R}_E(t,p)=R\}.
\end{eqnarray}
We call these, respectively, the lower and the upper {\em
  homological\/} thresholds.  Evidently,
$p_E^0\le p_H^0\le p_H^1\le p_E^1$.  The critical point
$p_H^0$ was discussed in
Refs.~\onlinecite{Delfosse-Zemor-2010,Delfosse-Zemor-2012}.  Our first
result, an analogue of the corresponding inequality for the Ising
model, Eq.~(34) in Ref.~\onlinecite{Jiang-Kovalev-Dumer-Pryadko-2018},
 gives a lower bound on the gap between the two homological
thresholds:
\begin{restatable}{theorem}{pHdiff}
  \label{th:pH-diff}
  Consider a sequence of finite graphs
  $(\mathcal{G}_t)_{t\in\mathbb{N}}$ weakly convergent to an infinite
  graph $\mathcal{H}$, a cover of the graphs in the sequence, with
  rate-$R$ extensive homology rank.    Then there is a finite gap between the two homological
  thresholds,
  \begin{equation}
    \label{eq:pH-diff}
    p_H^1-p_H^0\ge R.
  \end{equation}
\end{restatable}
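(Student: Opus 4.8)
The plan is to prove one quantitative fact—that, uniformly in $t$, the erasure rate ${\bf R}_E(t,\cdot)$ of (\ref{eq:mean-erased}) is $1$-Lipschitz on $[0,1]$—and then read off (\ref{eq:pH-diff}) directly from the definitions (\ref{eq:pH0-def})--(\ref{eq:pH1-def}). For the deduction, note first that monotonicity of ${\bf R}_E(t,p)$ in $p$, together with $0\le{\bf R}_E(t,p)\le R_t$ from (\ref{eq:bounds-RE}) and $R_t\to R$, shows that $\{p:\lim_t{\bf R}_E(t,p)=0\}$ is a down-set and $\{p:\lim_t{\bf R}_E(t,p)=R\}$ is an up-set; hence $\lim_t{\bf R}_E(t,p)=0$ for every $p<p_H^0$ and $\lim_t{\bf R}_E(t,p)=R$ for every $p>p_H^1$. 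Fix $\epsilon>0$, set $p_-=\max(0,p_H^0-\epsilon)$ and $p_+=\min(1,p_H^1+\epsilon)$, and use ${\bf R}_E(t,0)=0$, ${\bf R}_E(t,1)=R_t\to R$ to cover the degenerate endpoints; then the $1$-Lipschitz bound gives ${\bf R}_E(t,p_+)-{\bf R}_E(t,p_-)\le p_+-p_-$, and letting $t\to\infty$ and then $\epsilon\to0$ yields $R\le p_H^1-p_H^0$.

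For the Lipschitz estimate, write $k_I'=\rank H_1(f_t,I)=|I|-\rank J_t[I]-\rank (K_t)_I$ for the homology rank of the open subgraph with edge set $I$, using (\ref{eq:rank-H1-restricted}) with $G_X=J_t$, $G_Z=K_t$. The crucial combinatorial observation is that adjoining one edge $e\notin J$ to an edge set $J$ changes $k'$ by at most one in absolute value: since $J_t[J\cup\{e\}]$ is $J_t[J]$ with one extra column, $\rank J_t[J\cup\{e\}]-\rank J_t[J]\in\{0,1\}$; and $\rank (K_t)_J=\dim\{c\in\mathcal{C}_{K_t}:\supp(c)\subseteq J\}$ increases by $0$ or $1$ when $J$ gains one coordinate, since the evaluation-at-$e$ map exhibits the smaller space as a subspace of codimension $\le1$ in the larger. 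Thus $k_{J\cup\{e\}}'-k_J'=1-\delta_X-\delta_Z$ with $\delta_X,\delta_Z\in\{0,1\}$, so $|k_{J\cup\{e\}}'-k_J'|\le1$ (equation (\ref{eq:rank-H1-restricted-two}) gives the same conclusion). Now take the standard monotone coupling: let $(U_e)$ be i.i.d.\ uniform on $[0,1]$ and $\omega_p=\{e:U_e\le p\}$, so $\omega_p$ has the Bernoulli($p$) law and $\omega_p\subseteq\omega_q$ for $p\le q$. Adding the edges of $\omega_q\setminus\omega_p$ one at a time and applying the previous bound at each step gives $|k_{\omega_q}'-k_{\omega_p}'|\le|\omega_q\setminus\omega_p|$, whence
\[
 \bigl|{\bf R}_E(t,q)-{\bf R}_E(t,p)\bigr|\le n_t^{-1}\,\Ep\bigl(|\omega_q\setminus\omega_p|\bigr)=q-p,\qquad 0\le p\le q\le1 .
\]
(The same bound also follows from Russo's formula applied to $k'$.)

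The only real obstacle is the $\pm1$ rank estimate under edge addition; everything else is soft. One must also be slightly careful in reducing the two defining suprema/infima to honest $t\to\infty$ limits at points strictly inside the respective regimes—this is where monotonicity of ${\bf R}_E(t,\cdot)$ and the uniform bound ${\bf R}_E(t,p)\le R_t\to R$ are used—but no subsequence extraction is needed, since every limit invoked is taken at a $p$ bounded away from $p_H^0$ and $p_H^1$.
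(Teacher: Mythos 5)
Your proof is correct and takes essentially the same route as the paper: the key combinatorial input in both is that adding or removing a single edge changes the homology rank of the open subgraph by at most one, and the monotone coupling then converts this into a Lipschitz-type bound $\mathbf{R}_E(t,p')-\mathbf{R}_E(t,p'')\le p'-p''$, from which the gap follows by letting $t\to\infty$ and passing to the infimum/supremum at the two thresholds. Your version is slightly more detailed — you spell out the per-edge rank bound via Eq.~(\ref{eq:rank-H1-restricted}) and prove the full two-sided $1$-Lipschitz estimate rather than only the one-sided inequality that is strictly needed, and you handle the degenerate endpoint cases $p_H^0=0$, $p_H^1=1$ explicitly — but these are elaborations of the same argument, not a different one.
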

Second, we prove an ``easy'' inequality relating the lower homological
threshold with the percolation threshold on the covering graph:
\begin{restatable}{theorem}{pHeasybounds}
  \label{th:pH-easy-bounds}
  For a sequence of finite graphs $(\mathcal{G}_t)_{t\in\mathbb{N}}$
  weakly convergent to an infinite graph $\mathcal{H}$, a cover of the
  graphs in the sequence with extensive homology rank,
  $ p_{\rm c}(\mathcal{H})\le p_H^0 $.
\end{restatable}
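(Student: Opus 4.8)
The plan is to prove $p_{\rm c}(\mathcal{H})\le p_H^0$ by showing that $\mathbf{R}_E(t,p)\to0$ as $t\to\infty$ for every fixed $p<p_{\rm c}(\mathcal{H})$. The main geometric input is a localization statement: \emph{if the open cluster $\mathcal{K}_v$ of a vertex $v$ in $[\mathcal{G}_t]_p$ is contained in the ball $\mathcal{B}(v,r_t-1;\mathcal{G}_t)$, then $\mathcal{K}_v$ contributes $0$ to $\rank H_1(f_t,p)$.} Indeed, fixing a lift $v'$ of $v$, the map $f_t$ sends $\mathcal{B}(v',r_t;\mathcal{H})$ isomorphically onto $\mathcal{B}(v,r_t;\mathcal{G}_t)$, so $\mathcal{K}_v$ lifts to a connected open subgraph $\mathcal{K}'\subseteq\mathcal{B}(v',r_t-1;\mathcal{H})$ through $v'$ on which $f_t$ is a bijection; every edge of $\mathcal{H}$ leaving $\mathcal{K}'$ lies in $\mathcal{B}(v',r_t)$ and maps to a closed edge of $\mathcal{G}_t$, so $\mathcal{K}'$ is the whole open cluster of $v'$ in the pulled-back percolation and $f_t$ restricts to a graph isomorphism $\mathcal{K}'\to\mathcal{K}_v$, whence the restricted homology of $\mathcal{K}_v$ vanishes. (This is consistent with Lemma~\ref{th:dZ-bnd}, by which a non-trivial cycle has at least $2r_t+1$ edges.) Since the same argument applies with $v$ replaced by any vertex of its cluster, the contrapositive reads: \emph{any open cluster $\mathcal{K}$ with nonzero homology contribution reaches, from each of its vertices, distance at least $r_t$; in particular it contains an open self-avoiding path of length $r_t$ starting at any prescribed vertex of $\mathcal{K}$.}

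I would then convert this into an expectation bound. Decomposing $\rank H_1(f_t,p)=\sum_{\mathcal{K}}h(\mathcal{K})$ as a sum over the open clusters $\mathcal{K}$ of their homology contributions $h(\mathcal{K})\ge0$ (the first Betti number and the rank of trivial cycles are both additive over the connected components of the open subgraph), and using $h(\mathcal{K})\le|\mathcal{E}(\mathcal{K})|$, one gets $\rank H_1(f_t,p)\le\#\{\text{open edges lying in a cluster with }h>0\}\le\frac{\Delta_{\rm max}}{2}\,\#\{v\in\mathcal{V}_t:h(\mathcal{K}_v)>0\}$, and therefore, by the previous paragraph,
\[
 \Ep\!\bigl[\rank H_1(f_t,p)\bigr]\ \le\ \frac{\Delta_{\rm max}}{2}\sum_{v\in\mathcal{V}_t}\Pp\!\bigl(\exists\text{ open self-avoiding path of length }r_t\text{ from }v\text{ in }\mathcal{G}_t\bigr).
\]
Because $r_t$ is the injectivity radius, each such path lifts injectively to an open self-avoiding path of length $r_t$ from $v'$ inside $\mathcal{B}(v',r_t;\mathcal{H})$. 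Coupling $\mathcal{G}_t$-percolation with genuine Bernoulli$(p)$ percolation on $\mathcal{H}$ by pulling the states of the edges of $\mathcal{B}(v,r_t;\mathcal{G}_t)$ back to $\mathcal{H}$ and filling the remaining edges with independent bits, the event under $\Pp$ is contained in the analogous event on $\mathcal{H}$, so each summand is bounded by $\Pp^{\mathcal{H}}(\exists\text{ open self-avoiding path of length }r_t\text{ from }v')\le\Pp^{\mathcal{H}}(|\mathcal{K}_{v'}(\mathcal{H})|\ge r_t+1)$.

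It remains to let $t\to\infty$ (so $r_t\to\infty$ by weak convergence). For each fixed $v'$ the events $\{|\mathcal{K}_{v'}(\mathcal{H})|\ge N\}$ decrease to $\{|\mathcal{K}_{v'}(\mathcal{H})|=\infty\}$, whose probability is $\theta_{v'}(\mathcal{H},p)=0$ for $p<p_{\rm c}(\mathcal{H})$, so every summand vanishes; the main obstacle is to make this decay \emph{uniform} over the growing family of lifts $\{v':v\in\mathcal{V}_t\}$. When $\mathcal{H}$ is quasi-transitive this is immediate: there $p_{\rm c}=p_T$, so $\chi:=\sup_{v'}\chi_{v'}(\mathcal{H},p)<\infty$ for $p<p_{\rm c}$ and $\Pp^{\mathcal{H}}(|\mathcal{K}_{v'}(\mathcal{H})|\ge r_t+1)\le\chi/(r_t+1)$ uniformly, whence, dividing the displayed bound by $n_t$ and using $|\mathcal{V}_t|\le 2n_t$, $\mathbf{R}_E(t,p)\le\Delta_{\rm max}\,\chi/(r_t+1)\to0$. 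For a general cover one instead exploits weak convergence directly: every ball of radius $m\le r_t$ in $\mathcal{G}_t$ is isomorphic to a ball in $\mathcal{H}$, so up to a bounded prefactor $n_t^{-1}\Ep[\#\{v:\exists\text{ open self-avoiding path of length }m\text{ from }v\}]$ is the probability that a uniformly random vertex of $\mathcal{G}_t$ lies in an open cluster with at least $m+1$ vertices; by a Benjamini--Schramm (local-limit) argument its limit superior over $t$ is controlled by the corresponding probability for $\mathcal{H}$, with the root sampled from the limiting unimodular measure (supported on rooted copies of $\mathcal{H}$), and this tends to $0$ as $m\to\infty$ since $\theta(\mathcal{H},p)=0$ below $p_{\rm c}$. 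Either way $\mathbf{R}_E(t,p)\to0$ for all $p<p_{\rm c}(\mathcal{H})$, i.e.\ $p_{\rm c}(\mathcal{H})\le p_H^0$.
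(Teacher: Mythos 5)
Your argument reaches the same conclusion, but takes a genuinely different route from the paper's, and it is worth contrasting the two.  The paper works contrapositively: from $p>p_H^0$ it extracts a subsequence along which $\mathbf{R}_E(t,p)\to f_p>0$, infers that a constant fraction of edges (hence vertices) lie in clusters covering a non-trivial cycle, and transfers this to $\theta_{v'}(p)>0$ using the invasion-percolation coupling already built for Theorem~\ref{th:pE-easy}, whence $p\ge p_{\rm c}(\mathcal{H})$.  You go forward: you make explicit the additive decomposition $\rank H_1(f_t,p)=\sum_{\mathcal{K}}h(\mathcal{K})$ over open clusters (justified because both the cycle rank and the rank of open trivial cycles split over connected components of $[\mathcal{G}_t]_p$), observe via a ball-localization lemma that $h(\mathcal{K}_v)>0$ forces $\mathcal{K}_v$ to leave $\mathcal{B}(v,r_t-1;\mathcal{G}_t)$ and hence to contain an open self-avoiding path of $r_t$ edges from each of its vertices, and then bound $n_t^{-1}\Ep[\rank H_1]$ by the density of such large clusters, transferred to $\mathcal{H}$ through the injectivity-radius isomorphism and controlled by Markov's inequality together with Menshikov's $\chi<\infty$ below $p_T=p_{\rm c}$.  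Neither the cluster decomposition nor the susceptibility bound appears in the paper; in exchange, your version gives the quantitative rate $\mathbf{R}_E(t,p)\le\Delta_{\rm max}\chi(p)/(r_t+1)$, which the paper's qualitative argument does not.  One point of caution: your localization step (and Lemma~\ref{th:dZ-bnd} on which it is consistent) does the job, but note that containment in a ball of radius $r_t-1$ does \emph{not} by itself bound the length of a simple cycle; your isomorphism-onto-$\mathcal{H}$ argument is the correct one and you should resist any temptation to replace it by a naive diameter count.  Finally, both your proof and the paper's are fully rigorous only under quasi-transitivity of $\mathcal{H}$: the statement assumes a general infinite cover, and your Benjamini--Schramm fallback, while morally right, requires that the $t$-averaged tails $|\mathcal{V}_t|^{-1}\sum_{v}\Pp^{\mathcal{H}}\bigl(|\mathcal{K}_{v'(v)}|\ge m+1\bigr)$ vanish as $m\to\infty$ uniformly over $t$, which you have not established.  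The paper glosses over the same issue (its step ``a cluster covering a non-trivial cycle corresponds to an infinite cluster on $[\mathcal{H}]_p$'' tacitly needs a comparable uniformity), and it explicitly leaves open whether a non-quasi-transitive $\mathcal{H}$ can even arise in this setting, so this is a shared soft spot rather than a defect unique to your write-up.
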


The remaining analytical result is obtained with the help of the usual
duality between locally planar graphs, and is therefore limited to
planar graphs $\mathcal{H}$:
\begin{restatable}{theorem}{PHplanar}
  \label{th:pH-planar}
  Let ${\cal H}$ and $\widetilde{\cal H}$ be a pair of infinite
  mutually dual transitive planar graphs.  Consider a
  sequence of finite graphs $({\cal G}_t)_{t\in\mathbb{N}}$ weakly
  convergent to ${\cal H}$, a cover of the graphs in the sequence with
  extensive homology rank.  Then,
  \begin{equation}
    ({\bf i})\ p_H^0=p_{\rm c}(\mathcal{H}),\quad
    ({\bf ii})\  p_H^1= 1-p_{\rm c}(\widetilde{\cal H})=p_{\rm u}({\cal H}).
    \label{eq:pH-planar-equality}
  \end{equation}
\end{restatable}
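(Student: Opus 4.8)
The plan is to reduce statement (ii) to statement (i) by planar duality, settle the amenable case from the bounds already in hand, and then prove the single remaining inequality $p_H^0\le p_{\rm c}(\mathcal{H})$ by a percolation/topology argument; the last step is the hard one.

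First I would set up the duality at the level of the erasure rates. Embed each $\mathcal{G}_t$ cellularly on its surface $\Sigma_t$ of genus $g_t$ (so $k_t=2g_t$) and pass to the planar dual $\widetilde{\mathcal{G}}_t$, which is covered by $\widetilde{\mathcal{H}}$; since a dual ball of radius $s$ is determined by a primal ball of radius $O(s)$ (face sizes are bounded), the injectivity radii of the dual covering maps also diverge, so $(\widetilde{\mathcal{G}}_t)_t$ is weakly convergent to $\widetilde{\mathcal{H}}$ with the same extensive rate $R$ ($\widetilde n_t=n_t$, $\widetilde k_t=k_t$). For a fixed edge configuration the cochain complex of the quantum cycle code $\mathcal{Q}_t$ is the chain complex of $\widetilde{\mathcal{Q}}_t$, and the open edges of $\mathcal{G}_t$ are exactly the closed edges of $\widetilde{\mathcal{G}}_t$; applying the duality identity~(\ref{eq:rank-H1-restricted-duality}) with $I$ the open-edge set gives, configuration by configuration, $\rank H_1(f_t)+\rank\widetilde H_1(\widetilde f_t)=k_t$, the dual term being evaluated on the complementary edges. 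Taking expectations over Bernoulli$(p)$ percolation, whose complement is Bernoulli$(\bar p)$ on $\widetilde{\mathcal{G}}_t$ with $\bar p=1-p$, yields ${\bf R}_E(t,p)+\widetilde{{\bf R}}_E(t,1-p)=R_t$; since $R_t\to R$ the two limits exist together, and definitions (\ref{eq:pH0-def})--(\ref{eq:pH1-def}) give $p_H^0=1-\widetilde p_H^1$ and $p_H^1=1-\widetilde p_H^0$. Combining $\widetilde p_H^0\le\widetilde p_H^1$ with Theorem~\ref{th:pH-easy-bounds} for $\widetilde{\mathcal{H}}$ and with (\ref{eq:planar-duality}) gives $p_H^0\le 1-p_{\rm c}(\widetilde{\mathcal{H}})=p_{\rm u}(\mathcal{H})$, so with Theorem~\ref{th:pH-easy-bounds} for $\mathcal{H}$ we get the sandwich $p_{\rm c}(\mathcal{H})\le p_H^0\le p_{\rm u}(\mathcal{H})$. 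If $\mathcal{H}$ is amenable this already forces (i), and (ii) then follows from (i) applied to the dual pair together with $p_H^1=1-\widetilde p_H^0$ and (\ref{eq:planar-duality}). So it remains to prove $p_H^0\le p_{\rm c}(\mathcal{H})$ in general, and by the sandwich it suffices to show $\lim_t{\bf R}_E(t,p)>0$ for every $p$ in the coexistence window $p_{\rm c}(\mathcal{H})<p\le p_{\rm u}(\mathcal{H})$, which is nonempty only when $\mathcal{H}$ is non-amenable.

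For the coexistence window I would use the surface picture. The complementary regions of $[\mathcal{G}_t]_p$ on $\Sigma_t$ are exactly the components of $[\widetilde{\mathcal{G}}_t]_{\bar p}$, and an Euler-characteristic count gives $\rank H_1(f_t)=b_1([\mathcal{G}_t]_p)-r+1$, where $r$ is the number of regions (equivalently $\rank H_1(f_t)=g_t+H'-H$, with $H'$ the total genus carried by the open primal clusters and $H$ that carried by the regions). Since $r_t\to\infty$ forces local (Benjamini--Schramm) convergence $[\mathcal{G}_t]_p\to[\mathcal{H}]_p$ and $[\widetilde{\mathcal{G}}_t]_{\bar p}\to[\widetilde{\mathcal{H}}]_{\bar p}$, the expected cycle-rank density and region density converge, so ${\bf R}_E(t,p)\to\Phi(p)$ with $\Phi$ continuous and nondecreasing, $\Phi(0)=0$, $\Phi(1)=R$, and $p_H^0=\sup\{p:\Phi(p)=0\}$; it then suffices to rule out $\Phi\equiv0$ on an interval $(p_{\rm c}(\mathcal{H}),p_{\rm c}(\mathcal{H})+\delta]$. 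On such an interval both $[\mathcal{H}]_p$ and $[\widetilde{\mathcal{H}}]_{\bar p}$ percolate (indeed $\bar p\in[p_{\rm c}(\widetilde{\mathcal{H}}),p_{\rm u}(\widetilde{\mathcal{H}}))$): $[\mathcal{G}_t]_p$ has an open cluster of positive density, hence of positive cycle-rank density, while the complementary giant dual cluster of positive density occupies a single region. Using the two-point-connectivity characterisation of $p_{\rm u}$ (\cite{Tang-2019}) together with the planar relation $p_{\rm u}(\widetilde{\mathcal{H}})=1-p_{\rm c}(\mathcal{H})$, a Peierls/Delfosse--Z\'emor-type count (\cite{Delfosse-Zemor-2012}) should show that $\Omega(n_t)$ of the primal giant cluster's independent cycles must be homologically non-trivial, i.e. $\Phi(p)>0$.

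The main obstacle is exactly this last estimate: a bound, uniform in $t$, showing that in the coexistence window the space of contractible cycles supported on open edges has dimension $b_1([\mathcal{G}_t]_p)-\Omega(n_t)$ rather than $b_1([\mathcal{G}_t]_p)-o(n_t)$. This is where planarity is essential and where one must go beyond the elementary counting bounds of Statements~\ref{th:peierls-ineq} and \ref{th:peierls-ineq-two}; I expect to phrase it dually, namely that for $\bar p<p_{\rm u}(\widetilde{\mathcal{H}})$ the regions of $[\mathcal{G}_t]_p$ carry in expectation at most $(R/2-\epsilon)\,n_t$ of the total genus $g_t\sim(R/2)\,n_t$ --- equivalently, below the uniqueness threshold a positive density of the handles of $\Sigma_t$ cannot be wrapped by the dual open subgraph. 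Everything else --- the duality bookkeeping of the first step, the Euler count, and the Benjamini--Schramm limits --- is routine.
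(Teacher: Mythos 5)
Your first step --- the exact configuration-by-configuration duality $\rank H_1(f_t)+\rank\widetilde H_1(\widetilde f_t)=k_t$ from Eq.~(\ref{eq:rank-H1-restricted-duality}), hence ${\bf R}_E(t,p)+\widetilde{\bf R}_E(t,1-p)=R_t$ and the reduction $p_H^0=1-\widetilde p_H^1$, $p_H^1=1-\widetilde p_H^0$ --- is correct and agrees with the bookkeeping in the paper's proof of Corollary~\ref{th:pE-planar}.  The sandwich $p_{\rm c}(\mathcal H)\le p_H^0\le p_{\rm u}(\mathcal H)$ obtained from Theorem~\ref{th:pH-easy-bounds} and its dual, and the observation that this already closes the amenable case, are also fine.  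But the remaining inequality $p_H^0\le p_{\rm c}(\mathcal H)$ is the whole content of the theorem in the non-amenable setting, and you acknowledge yourself that you do not have the estimate: you explicitly label as ``the main obstacle'' the claim that an $\Omega(n_t)$-dimensional family of open cycles in the coexistence window $(p_{\rm c}(\mathcal H),p_{\rm u}(\mathcal H))$ must be homologically non-trivial.  So this is a genuine gap, not a routine step you merely postponed.

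The paper closes exactly this gap by a different idea, and it is worth comparing. Instead of trying to lower-bound ${\bf R}_E$ directly by a Peierls/Benjamini--Schramm argument, it pushes the Euler/Sykes--Essam identity one step further into a closed formula.  From the configuration identity $k'=K'-\tilde K'+E'-|\mathcal V_t|+1$ one writes the expectation in terms of the average inverse cluster size $\kappa_v(p;\mathcal G_t)=\mathbb{E}_p(|\mathcal K_v|^{-1})$, and after normalizing by $n_t$ and passing to the limit one gets (for transitive $\mathcal H$, $\widetilde{\mathcal H}$ of degrees $d$, $f$)
\begin{equation*}
  {\bf R}_E(p)=\frac{2}{d}\,\kappa(p;\mathcal H)-\frac{2}{f}\,\kappa(1-p;\widetilde{\mathcal H})+p-\frac{2}{d}.
\end{equation*}
This is precisely the limit $\Phi(p)$ you were trying to control, but now in closed form.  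The decisive input is then that on a non-amenable transitive graph $\kappa(p;\mathcal H)$ is an \emph{analytic} function of $p$ away from $p_{\rm c}(\mathcal H)$ (Antunovi\'c--Veseli\'c, Hermon--Hutchcroft), so ${\bf R}_E(p)$ is analytic on $(0,1)\setminus\{p_{\rm c}(\mathcal H),\,1-p_{\rm c}(\widetilde{\mathcal H})\}$.  Since ${\bf R}_E$ is monotone, equals $0$ on $[0,p_H^0)$, equals $R$ on $(p_H^1,1]$, and $0<p_H^0<p_H^1<1$ (by Statement~\ref{th:peierls-ineq}, its dual, and the gap in Theorem~\ref{th:pH-diff}), both $p_H^0$ and $p_H^1$ are non-analytic points of ${\bf R}_E$; there are exactly two candidates, and the two equalities follow by exclusion.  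In other words, you do not need to show that a positive density of handles escapes both subgraphs --- you only need to know ${\bf R}_E$ cannot be flat-then-analytically-continued across its transition points, and the rigidity comes for free from the analyticity of $\kappa$, not from counting.  Your proposed route of wrestling with the density of ``uncaught'' genus via two-point connectivity at $p_{\rm u}$ would require a substantially harder uniform estimate than anything in Statements~\ref{th:peierls-ineq}--\ref{th:peierls-ineq-two}, and I would not expect it to go through without the analyticity input in some form.

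One smaller point: you assert that the Benjamini--Schramm limit $\Phi(p)$ is continuous and that the densities converge; in the paper this is not assumed but derived, because the closed-form expression above establishes both the existence of the limit and its regularity in one stroke.  Without that formula, continuity of $\Phi$ near the candidate transition points is itself not free.
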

This is an easy consequence of two previous results: the expression
\cite{Delfosse-Zemor-2012} for the expected homology rate in terms of
the average inverse cluster sizes on the graph and its dual, and the
exponential decay\cite{Antunovic-Veselic-2008,Hermon-Hutchcroft-2019}
of the size of finite clusters away from the percolation point on
transitive
graphs.  

Notice that in Theorem \ref{th:pH-planar}, the lower and the higher
homological thresholds, respectively, are actually associated with the
percolation and the uniqueness thresholds on the infinite graph
$\mathcal{H}$.  We believe this is not a coincidence, and put forward
\begin{conjecture}
  Consider a sequence of finite graphs $({\cal G}_t)_{t\in\mathbb{N}}$
  weakly convergent to a quasitransitive infinite  graph
  ${\cal H}$, a cover of the graphs in the sequence with extensive
  homology rank.  Then,
  \begin{equation}
    ({\bf i})\ p_H^0=p_{\rm c}(\mathcal{H}),\quad
    ({\bf ii})\  p_H^1=p_{\rm u}({\cal H}).
    \label{eq:pH-pc-pu-equality}
  \end{equation}
\end{conjecture}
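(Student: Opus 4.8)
Toward a proof, the plan is to work with the combinatorial identity $\rank H_1(f_t,p)=\beta_t(p)-\rho_t(p)$, obtained from Eq.~(\ref{eq:rank-H1-restricted}) with $G_X=J_t$ and $G_Z=K_t$: here $\beta_t(p)=|I|-|\mathcal V_t|+c([\mathcal G_t]_p)$ is the dimension of the cycle space of the open subgraph $[\mathcal G_t]_p$ ($I$ the open-edge set), and $\rho_t(p)=\rank K_{t,I}$ is the dimension of the subspace of homologically trivial cycles carried by $I$. Combined with the co-chain version of the same identity and the duality relation~(\ref{eq:rank-H1-restricted-duality}), which reads $\rank H_1(f_t,p)=k_t-\tilde k'_{\overline{I}}$ and turns ${\bf R}_E(t,p)\to R$ at $p$ into $n_t^{-1}\Ep(\tilde k'_{\overline{I}})\to0$ at $\bar p=1-p$, parts~(i) and~(ii) become mirror statements: that $n_t^{-1}\Ep(\rank H_1(f_t,p))$ stays at $0$ iff $p\le p_{\rm c}(\mathcal H)$, and stays at $R$ iff $p\ge p_{\rm u}(\mathcal H)$. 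The half $p_{\rm c}(\mathcal H)\le p_H^0$ is Theorem~\ref{th:pH-easy-bounds}; assuming $\mathcal C(\mathcal H)$ is generated by cycles of weight at most $\omega$, so that $d_{Xt}\to\infty$ (Lemma~\ref{th:dX-bnd}), the co-chain analogue of that argument gives $p_H^1\le 1-\tilde p_0$, where $\tilde p_0$ is the lower homological threshold of the co-chain complex $\widetilde{\mathcal A}$. So the real work is $p_H^0\le p_{\rm c}(\mathcal H)$, $p_H^1\ge p_{\rm u}(\mathcal H)$, and the identification $\tilde p_0=1-p_{\rm u}(\mathcal H)$.

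For $p_H^0\le p_{\rm c}(\mathcal H)$ I would prove $\liminf_t{\bf R}_E(t,p)>0$ for all $p>p_{\rm c}(\mathcal H)$. Taking expectations, $n_t^{-1}\Ep(\beta_t(p))=p-|\mathcal V_t|/n_t+n_t^{-1}\Ep\bigl(c([\mathcal G_t]_p)\bigr)$, and the expected number of open clusters is standard: by mass transport and local weak convergence of $\mathcal G_t$ to $\mathcal H$ (each vertex of $\mathcal G_t$ is the center of an injective ball of radius $r_t\to\infty$), $n_t^{-1}\Ep(c)\to\nu\,\Ep^{\mathcal H}(|\mathcal K_v|^{-1})$, with $\nu=\lim|\mathcal V_t|/n_t$ and the right side averaged over vertex classes when $\mathcal H$ is quasi-transitive. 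The crux is the asymptotics of $n_t^{-1}\Ep(\rho_t)$, the density of homologically trivial open cycles: it must agree with $n_t^{-1}\Ep(\beta_t)$ for $p<p_{\rm c}(\mathcal H)$ (which recovers Theorem~\ref{th:pH-easy-bounds}) and stay strictly below it for $p>p_{\rm c}(\mathcal H)$. In the planar case this is the Delfosse and Z{\'e}mor identity~\cite{Delfosse-Zemor-2012}, which computes $n_t^{-1}\Ep(\rho_t)$ from the inverse cluster sizes on the dual graph $\widetilde{\mathcal H}$ and then invokes exponential decay of finite clusters~\cite{Antunovic-Veselic-2008,Hermon-Hutchcroft-2019}; the non-planar task is to supply a replacement. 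My attempt would be to read $\rho_t(p)$ as a first Betti number in a ``co-percolation'' on the $2$-complex obtained by adjoining to $\mathcal H$ the generators of $\mathcal C(\mathcal H)$ of weight $\le\omega$ as $2$-cells (a $2$-cell present iff its boundary cycle is open), and to show that below that process's critical parameter the finite co-clusters decay exponentially.

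For $p_H^1$, by the above it suffices to show $\tilde p_0=1-p_{\rm u}(\mathcal H)$, after which $p_H^1=1-\tilde p_0=p_{\rm u}(\mathcal H)$; for planar $\mathcal H$ the co-chain complex is the chain complex of $\widetilde{\mathcal H}$, so $\tilde p_0=p_{\rm c}(\widetilde{\mathcal H})$ and Eq.~(\ref{eq:planar-duality}) closes it, reproducing Theorem~\ref{th:pH-planar}(ii). In general I would go through the two-point connectivity: by Tang's characterization~\cite{Tang-2019}, $\inf_{u,v}\tau_{u,v}(p)>0$ for $p>p_{\rm u}(\mathcal H)$, so the open subgraph of $\mathcal G_t$ is uniformly well connected, which I would leverage to force every homologically non-trivial co-cycle of $\mathcal G_t$ to meet the open edges with probability tending to one, giving $\tilde k'_{\overline{I}}=0$ with high probability and hence $p_H^1\le p_{\rm u}(\mathcal H)$; for $p<p_{\rm u}(\mathcal H)$, the coexistence of infinitely many infinite clusters should leave room to build an extensive family of non-trivial co-cycles on the closed edges, giving $p_H^1\ge p_{\rm u}(\mathcal H)$ — the dual counterpart of the $p_H^0\le p_{\rm c}(\mathcal H)$ step.

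The main obstacle is precisely that a general quasi-transitive $\mathcal H$ has no planar dual graph: every point above where I wrote ``co-percolation on $2$-cells'' or ``the co-cycle process'' presently lacks the clean correspondence with ordinary Bernoulli percolation on a dual graph that drives the proof of Theorem~\ref{th:pH-planar}. Making it rigorous would require either a duality theory at the level of the three-term chain complex $\mathcal A$ — a random-cluster / Edwards--Sokal type representation adapted to it — or a direct proof that the critical parameter for proliferation of homologically non-trivial co-cycles equals $1-p_{\rm u}(\mathcal H)$, which is the non-planar analogue of Eq.~(\ref{eq:planar-duality}). I expect that single identity, rather than the sharpness-of-phase-transition inputs~\cite{Antunovic-Veselic-2008,Hermon-Hutchcroft-2019} or the transfer between $\mathcal H$ and the $\mathcal G_t$ through the diverging injectivity radius (Lemmas~\ref{th:dZ-bnd}--\ref{th:flat-sub}, the coupling used for Theorem~\ref{th:pE-logB}), to be the heart of the conjecture.
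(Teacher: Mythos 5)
The statement you were asked to prove is an open \emph{conjecture} in the paper; the authors prove it only under the extra hypothesis that $\mathcal H$ and $\widetilde{\mathcal H}$ form a mutually dual pair of transitive planar graphs (Theorem~\ref{th:pH-planar}). So there is no ``paper's own proof'' to compare against, and what you have produced is, correctly, a research sketch with a declared gap rather than a proof. Your reconstruction of the planar argument is accurate: the decomposition $\rank H_1(f_t,p)=\beta_t(p)-\rho_t(p)$ via Eq.~(\ref{eq:rank-H1-restricted}), the duality relation~(\ref{eq:rank-H1-restricted-duality}) to turn~(\textbf{ii}) into a mirror of~(\textbf{i}), the ``easy'' bound $p_{\rm c}(\mathcal H)\le p_H^0$ from Theorem~\ref{th:pH-easy-bounds}, and the reliance on Delfosse--Z\'emor together with exponential decay of finite clusters \cite{Antunovic-Veselic-2008,Hermon-Hutchcroft-2019} are exactly the ingredients the paper uses. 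Your final diagnosis is also the right one: the entire weight of the conjecture sits on replacing the planar dual-graph identity $p_{\rm u}(\mathcal H)=1-p_{\rm c}(\widetilde{\mathcal H})$, Eq.~(\ref{eq:planar-duality}), with something that makes sense for a general quasi-transitive one-ended $\mathcal H$.

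Two cautions about the surrogate you propose. First, the ``co-percolation on $2$-cells'' model --- attaching a $2$-cell whenever a bounded-weight generator of $\mathcal C(\mathcal H)$ has all its boundary edges open --- does not directly compute $\rho_t(p)=\rank (K_t)_I$: a trivial cycle can be supported on open edges without any single generating cycle being fully open, so $\rho_t(p)$ is generally larger than the number of ``present'' $2$-cells would suggest, and the two quantities only agree up to corrections that must be controlled. Second, even granting a clean ``co-cluster'' model with its own exponential-decay regime, there is no a priori reason its critical parameter coincides with $1-p_{\rm u}(\mathcal H)$; in the planar case this coincidence is forced by the fact that the co-chain complex literally \emph{is} the chain complex of $\widetilde{\mathcal H}$, whereas in the non-planar case the co-chain complex is not the chain complex of any graph, and Tang's connectivity characterization of $p_{\rm u}$ \cite{Tang-2019} lives on the original graph, not on the $2$-complex. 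Bridging that is the open problem, as you say, and the paper offers no further hint beyond the planar theorem; so there is no route here that you have missed --- only the one you already flagged as missing.
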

Such a result makes sense, since neither the percolation nor the
uniqueness thresholds can be seen locally, by examining a finite
subgraph of ${\cal H}$.  Similarly, the homological transitions
require changes in cycles of length exceeding the injectivity radius,
which diverges without a bound.

\section{Numerical results for locally planar hyperbolic codes}
\label{sec:numeric}

In addition to analytical results, we also evaluated the erasure and
the percolation thresholds numerically for several families of planar
hyperbolic codes, as well as for a planar euclidean family of square
lattice toric codes.  Each family corresponds to a particular infinite
graph $\mathcal{H}_{f,d}$, regular tiling of the hyperbolic or
euclidean plane, parameterized by the Schl\"afli symbol $\{f,d\}$,
with $2/d+2/f\le 1$.  In such a graph, $d$ identical $f$-gons meet in
each vertex.  The finite graphs are
constructed\cite{Siran-2001,Sausset-Tarjus-2007} as finite quotients
of the corresponding graph $\mathcal{H}_{f,d}$ with respect to
subgroups of the symmetry group.

The parameters of the graphs used in the calculations are listed in
Tab.~\ref{tab:graphs}, where $\{f,d\}$ is the Schl\"afli symbol of the
corresponding tiling, $n$ is the number of edges, and $d_Z$ and $d_X$,
respectively, are the distances of the corresponding CSS codes.  The
smaller graphs with $n<10^3$ edges are from
N.~P.~Breuckmann\cite{Breuckmann-thesis-2017}.  We generated the
remaining graphs with a custom \texttt{GAP}\cite{GAP4} program, which
constructs coset tables of freely presented groups obtained from the
infinite van Dyck group $D(d,f,2)=\langle a,b|a^d,b^f,(ab)^2\rangle$
[here $a$ and $b$ are group generators, while the remaining arguments
are \emph{relators} which correspond to imposed conditions,
$a^d=b^f=(ab)^2=1$] by adding one more relator obtained as a pseudo
random string of generators to obtain a suitable finite group
${\cal D}$, a quotient of the original infinite group $D(d,f,2)$.
Then, the vertices, edges, and faces are enumerated by the right
cosets with respect to the subgroups $\langle a\rangle$,
$\langle ab\rangle$, and $\langle b\rangle$, respectively.  The
vertex-edge and face-edge incidence matrices $J$ and $K$ are obtained
from the coset tables.  Namely, non-zero matrix elements are in the
positions where the corresponding pair of cosets share an element.
Finally, the distance $d_Z$ of the CSS code $\css(J,K)$ was computed
using the covering set algorithm, which has the advantage of being
extremely fast when distance is
small\cite{Dumer-Kovalev-Pryadko-2014,Dumer-Kovalev-Pryadko-IEEE-2017},
and additionally verified by comparing the number of cycles through a
given vertex on the finite graph $\mathcal{G}$ and on a sufficiently
large subgraph of the infinite covering graph $\mathcal{H}_{f,d}$ (or
the corresponding dual graphs in the case of $d_X$).

\begin{turnpage}

\begin{table*}[hp]
  \resizebox{\textheight}{!}{
  \begin{tabular}{c|c|ccccccccccccccccccccccccccccccccc}
    &$n$ & 720 & 864 & 2448 & 6144 & 8640 & 18144 & 18216 & 19584 & 23760
    & 24360 & 24576 & 25308 & 25920 & 27360 & 29760 & 31200 & 32256
    & 32928 & 35280 & 36288 & 38880 & 40320 & 41040 & 46080 & 46656 \\
    \hline $\{4,6\}$& $d_Z$
        &\textbf{8} & \grayout{8} & \grayout{8} & \textbf{10} & \grayout{8}
&\grayout{10}&\grayout{10}&\grayout{8}&\grayout{10}&\grayout{10}&
10 & 10 & 10 & 10 & 10 & 10 & 10 &\grayout{8}& 10 & 10 & 10 & 10
                                            &\grayout{8}& 10 &\textbf{12}\\
$\{6,4\}$ &    $d_X$
        &\grayout{8}&\textbf{10}
                    &\textbf{12} &\grayout{12}&\textbf{14} & 12 & 14
                                                         & 14 & 12 & 14
            &\grayout{12}& 14 & 10 &\grayout{12}& 14 & 12 & 14 & 14
            &\grayout{10}&\grayout{14}&\grayout{12}& 14 & 10 & 12 & 12
    \\ \hline & $n$ &
46800 & 48576 & 50616 & 51888 & 52416 & 58800 & 62400 \\ \cline{1-9} $\{4,6\}$&$d_Z$ & 8 &
10 & 10 & 11 & 10 & 11 & 12\\ $\{6,4\}$&$d_X$ & 12 & 14 & 14 & 14 &\grayout{12}& 15
                                                 &\grayout{12}\\ \hline\hline
  &  $n$  & 504 & 648 & 768 & 864 & 1080 &
1224 & 1944 & 2016 & 2448 & 2592 & 3072 & 3240 & 4032 & 4320 & 5616 &
5832 & 6000 & 6072 & 6144 & 7344 & 14880 & 16848 & 18216 & 25944 &
32256\\ \cline{1-27} $\{3,8\}$&$d_Z$ &\textbf{6}
&\grayout{6}&\grayout{6}&\grayout{6}&\grayout{6}& 6 & 6 &\textbf{8} &
8 & 6 & 8 & 8 & 8 & 8 & 8 & 8 & 8 & 8 & 8 & 8 & 8
&\textbf{9}&\textbf{10} & 9 & 10\\ $\{8,3\}$&$d_X$ 
&\grayout{14}& 14 &\textbf{16} & 12 & 16 & 15
&\grayout{12}&\textbf{18}& 16
&\grayout{18}&\grayout{18}&\grayout{18}&\grayout{18}&\grayout{16}& 16
& 18 &\textbf{22}& 20 &\grayout{18}& 18 & 22 & 20 &\textbf{24}
                                                    &\grayout{21}&\grayout{24}\\
    \hline\hline
   & $n$ & 660 & 1800 & 1920 & 3420 & 4860 &
5760 & 7440 & 9600 & 10240 & 11520 & 12180 & 14880 & 17100 & 19200 &
23040 & 29400 & 34440 & 37500 & 38880 & 43200 & 57600 & 58240 & 58800
                                                    & 60900 & 61440\\
    \cline{1-27}$\{4,5\}$& $d_Z$
        &\grayout{8}&\textbf{10}
&\grayout{10}&\grayout{10}& \textbf{12} & 10 & 12 & 12 & 12 & 10 & 12
& 12 &\textbf{14} & 12 & 12 & 14 & 14 & 14 & 12 & 14 & 10 & 13 & 14 &
                                                                      12 & 12\\ $\{5,4\}$& $d_X$ 
        &\textbf{10} &\grayout{10}& \textbf{12}
&\textbf{14} &\grayout{12}& 10 & 14 & 14 & 14 & 10 & 14 & 14
&\textbf{16} & 14 & 15 &\textbf{17} & 16 & 16 & 16 & 16 & 10 & 15
                                            &\textbf{18} & 16 & 12\\
    \hline\hline
    \cline{1-25}
  &  $n$ & 900 & 4800 &
9600 & 9720 & 10800 & 11520 & 14400 & 15360 & 17220 & 18750 & 19440 &
19800 & 21600 & 29120 & 29400 & 30450 & 38880 & 40960 & 51330 & 52800
                            & 56730 & 58240\\ \cline{1-24} $\{5,5\}$&$d_Z,d_X$ 
        & \textbf{8} & \textbf{10} &
10 & 10 & 9 & 8 & 8 & \textbf{11} & 10 & 10 & 10 & 10 & 11 &
                                                             \textbf{12} & 10 & 12 & 10 & 12 & 12 & 10 & 11 & 12\\ \hline\hline
& $n$ & 546 & 672 & 4914 & 5376 & 6090 & 17220 & 19866\\ \cline{1-9}
$\{3,7\}$& $d_Z$ &\textbf{7}&\textbf{8} &\textbf{10}&\textbf{12}& 10 & 12 &
                                                                            12\\$\{7,3\}$& $d_X$ & 14&\textbf{16}&\textbf{24}& 24  &\grayout{20}&\textbf{28}& 26
  \end{tabular}%
  }
  \caption{Parameters of the hyperbolic graphs used to calculate
    critical $p$ values.  Given a Schl\"afli symbol
    $\{f,d\}$, finite graphs
    $\mathcal{G}$ and their dual graphs $\widetilde{\cal
      G}$ are parameterized by the number of edges
    $n$; they have $2n/d$ and
    $2n/f$ vertices, respectively, and the first homology groups of
    rank $k=2+(1-2/d-2/f)n$.  Distances $d_Z$ and
    $d_X$ are the lengths of the shortest homologically non-trivial
    cycles on $\mathcal{G}$ and $\widetilde{\cal
      G}$, respectively.  Numbers in bold indicate the smallest graph
    found with such a distance; only such graphs were used for
    calculating the cycle erasure threshold
    $p_E^0$, see Figs.~\ref{fig:pE-hyp55} and ~\ref{fig:pE-hyp73}.
    Percolation transition critical point $p_{\rm
      c}$ on the infinite hyperbolic graph $\mathcal{H}_{f,d}$ (same
    as the giant cluster transition) was calculated using all graphs
    in the corresponding family, with the exception of graphs whose
    distances are shown in gray.}
\label{tab:graphs}
\end{table*}
\end{turnpage}

To analyze percolation, we used a version of the Newman--Ziff
(NZ) Monte Carlo algorithm\cite{Newman-Ziff-alg-2001}.  The original
version of the algorithm simultaneously draws from a sequence of
canonical ensembles with $x=1$, $2$, $\ldots$ open edges, by starting
with all closed edges and randomly adding one open edge at a time,
with the acceleration due to a lower cost of statistics update.  To
find the rank $k'$ of the first homology group associated with the
open subgraph, we used the formula
\begin{equation}
  \label{eq:k-K-tK}
  k'=x-|\mathcal{V}|+|\mathcal{K}'|-|\overline{\mathcal{K}}'|+1,
\end{equation}
where $x\equiv |\mathcal{E}'|$ is the number of open edges, and
$|\mathcal{K}'|$ and $|\overline{\cal K}'|$, respectively, are the
numbers of connected components in the open subgraph of $\mathcal{G}$
and in the closed subgraph of the corresponding dual graph
$\widetilde{\cal G}$.  Eq.~(\ref{eq:k-K-tK}) is a consequence of
Eq.~(\ref{eq:rank-H1-restricted-two}).  It can also be derived with
the help of the cycle rank Euler formula and the fact that any trivial
open cycle is a cut for the corresponding dual graph.  Respectively,
in our version of the NZ algorithm, we simultaneously evolve a pair of
dual subgraphs, starting with all closed edges on $\mathcal{G}$ and
all open edges on $\widetilde{\cal G}$, and adding an open edge to
$\mathcal{G}'$ and removing the corresponding open edge from
$\widetilde{\cal G}'$ at each step.  In addition, for each set of
average quantities $A_x$ computed in the canonical ensemble with
$x\in\{0,\ldots,n\}$ edges open, we calculated the corresponding
grand-canonical quantity
\begin{equation}
 \label{eq:grand-canonical}
A_p=\sum_{x=0}^{n}{n\choose x }p^x (1-p)^{n-x}A_x.
\end{equation}
For the sake of numerical efficiency, we restricted the summation to
terms with $|x-p n|<M\sqrt{ n p(1-p)} $, with $M=10^2$.  We verified
that the results do not change when $M$ is increased by a factor of
two.  For each graph, we run $10^6$ Newman--Ziff sweeps and saved the
grand-canonical averages of observables for $10^3$ values of $p$ with
intervals of $\Delta p=10^{-3}$.  In addition, to get an independent
estimate of the errors, each threshold calculation was repeated three
times.

A de-facto standard way for estimating the erasure threshold $p_E^0$
is the crossing point method.  The method is based on the expectation
that the block error probability is asymptotically zero for any
$p<p_E^0$ and is equal to one for $p>p_E^0$, with the crossover region
small for large codes.  Respectively, when the erasure probability
found numerically for several graphs is plotted as a function of $p$,
the corresponding lines are expected to cross in a single point, which
is identified as the pseudothreshold.

This works well for codes with power-law distance scaling.  An example
is shown in Fig.~\ref{fig:PE-4-4}, where the homological error
probability (\ref{eq:prob-E}) for several square lattice toric codes
with parameters $[[2d^2,2,d]]$ and $d$ ranging from $60$ to $220$ is
plotted as a function of the open edge probability $p$.  Visually, a
beautiful crossing point close to $p_E^0=1/2$ is observed.  To find
the corresponding erasure pseudothreshold, the data was fitted
collectively with polynomials of $\xi\equiv p-p^0$.  The polynomials
had different coefficients for different graphs, except the zeroth
order coefficient used to find the ordinate of the crossing point.
With the fit range $0.49 \le p\le 0.51$, the degree of the polynomials
was adjusted by hand to minimize the standard deviation of $p^0$, the
abscissa of the crossing point extracted from the data.  For the
square-lattice graphs, using 6th degree polynomials, we obtained
$p^0(\{4,4\})=0.500004\pm 0.000002$, very close to the square lattice
percolation threshold $p_{\rm c}(\{4,4\})=1/2$, as expected from
Refs.~\onlinecite{Stace-Barrett-Doherty-2009,Fujii-Tokunaga-2012} and
Theorem \ref{th:pE-logB}.  The corresponding linear terms $A_{1n}$ (the
derivative at the crossing point) have a power law $A_{1n}=b n^\alpha$
scaling (not shown), with the exponent $\alpha=0.375\pm 0.003$,
consistent with the expectation of a sharp threshold in the
large-$n$ limit.

\begin{figure}[htbp]
  \centering
  \includegraphics[width=0.6\columnwidth]{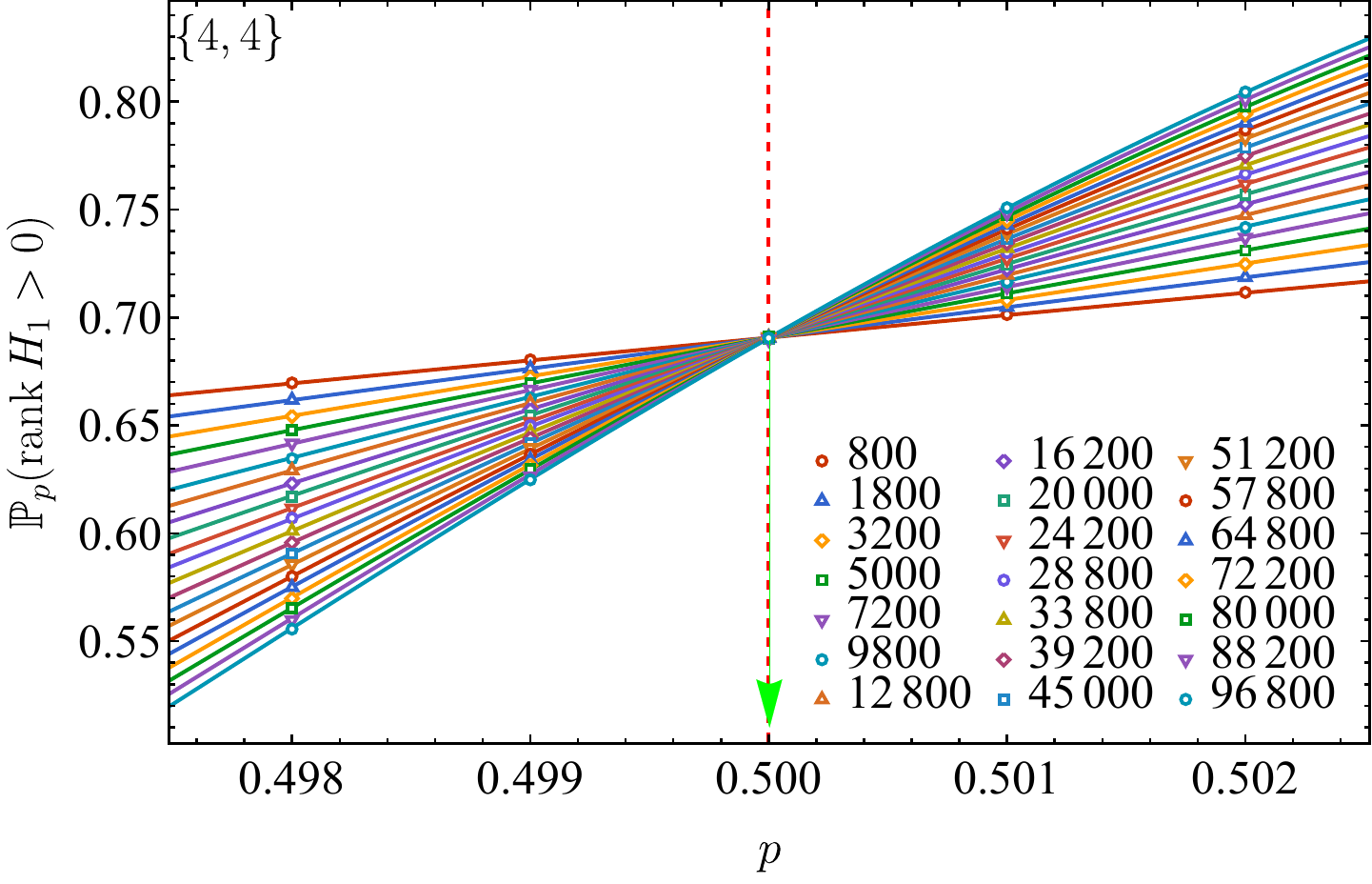}
  \caption{(Color online) Finding the erasure pseudothreshold for
    square lattice toric codes.  Symbols show the homological error
    probability (\ref{eq:prob-E}) evaluated numerically for different
    graphs labeled by the number of edges $n=2d^2$, with distances $d$
    ranging from $60$ to $220$, plotted as a function of open edge
    probability $p$.  As expected, beautiful crossing point very close
    to $p_E^0=1/2$ is observed.  Lines are the polynomials
    $f_n(\xi)=A_0+A_{1n}\xi+A_{2n}\xi^2+\ldots$ of $\xi=p-p^0$ of
    degree 6 obtained by fitting the data collectively in the range
    $0.49\le p\le 0.51$.  The vertical dashed line indicates the
    square lattice percolation threshold $p_{\rm c}=1/2$. }
\label{fig:PE-4-4}
\end{figure}

We used a similar technique to process the homological error
probability data for hyperbolic graphs.  A sample of the corresponding
plots is shown in the top portions of Figs.~\ref{fig:pE-hyp55} and
\ref{fig:pE-hyp73}.  These plots have two significant differences with
that in Fig.~\ref{fig:PE-4-4}.  First, the crossing points are
significantly below the percolation transitions indicated by the
vertical dashed lines.  Second, despite smaller scales, the
convergence near the crossing crossing points does not look as nice.
Empirically, deviations in the position of the curves are associated
with the differences in the ratio $\ln n/d$, cf.~the bounds in
Statements \ref{th:peierls-ineq}, \ref{th:peierls-ineq-two} and
Example \ref{ex:anisotropic-square}.  To reduce the corresponding
errors, in the calculation of the erasure thresholds we only used the
``optimal'' graphs, the smallest graphs with the corresponding
distances; such graphs are indicated in Tab.~\ref{tab:graphs} with the
distance shown in bold.

Yet, using only the optimal graphs was not sufficient to completely
eliminate the finite-size variation.  Much better crossing points are
obtained by introducing a vertical shift $B\, \ln n/d$, where $B$ is
an additional global fit parameter (see bottom plots in
Figs.~\ref{fig:pE-hyp55} and \ref{fig:pE-hyp73}).

\begin{figure}[htbp]
  \centering
  ~~\,\includegraphics[width=0.6\columnwidth]{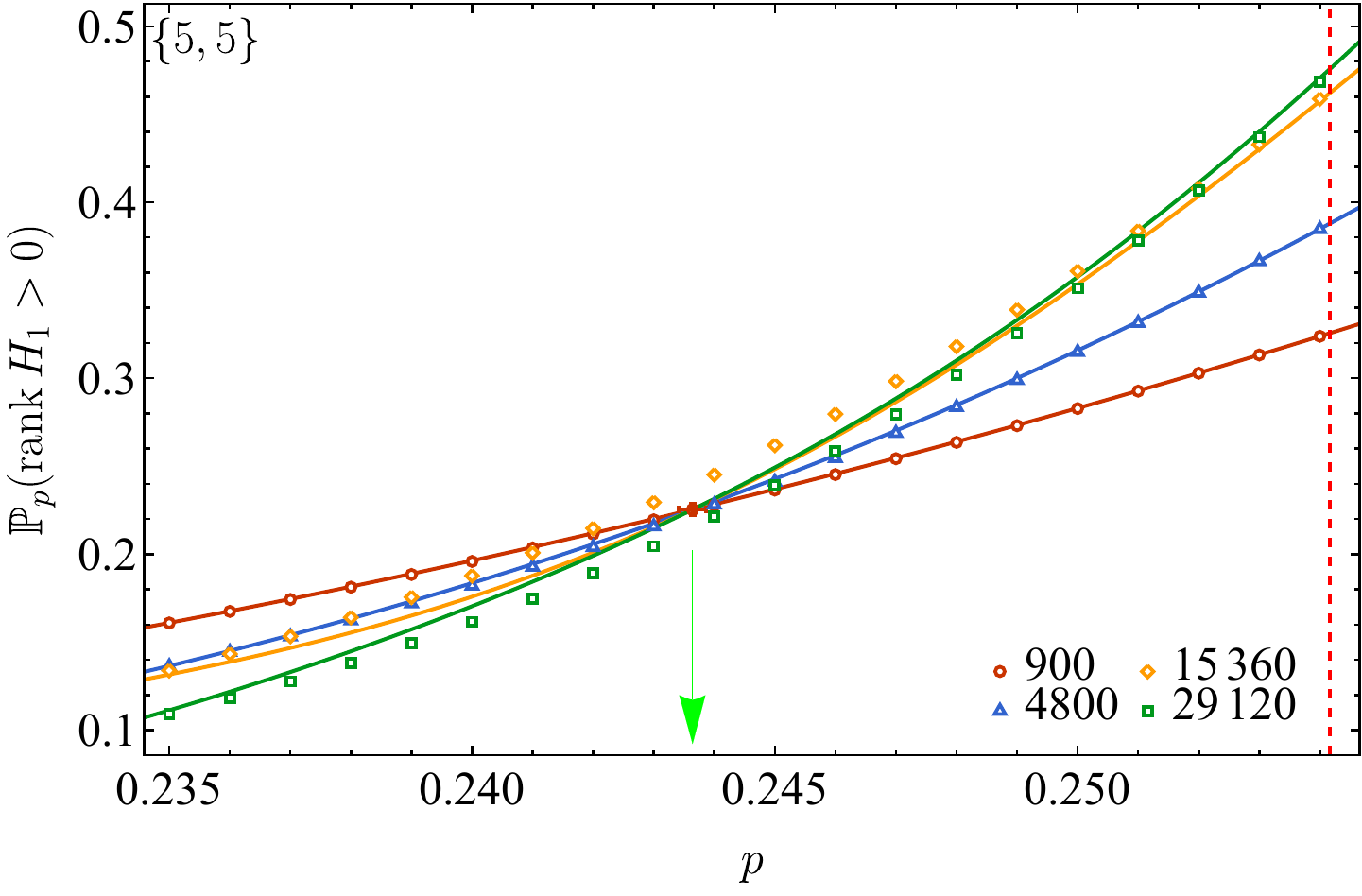}\\[0.15em]
  \includegraphics[width=0.62\columnwidth]{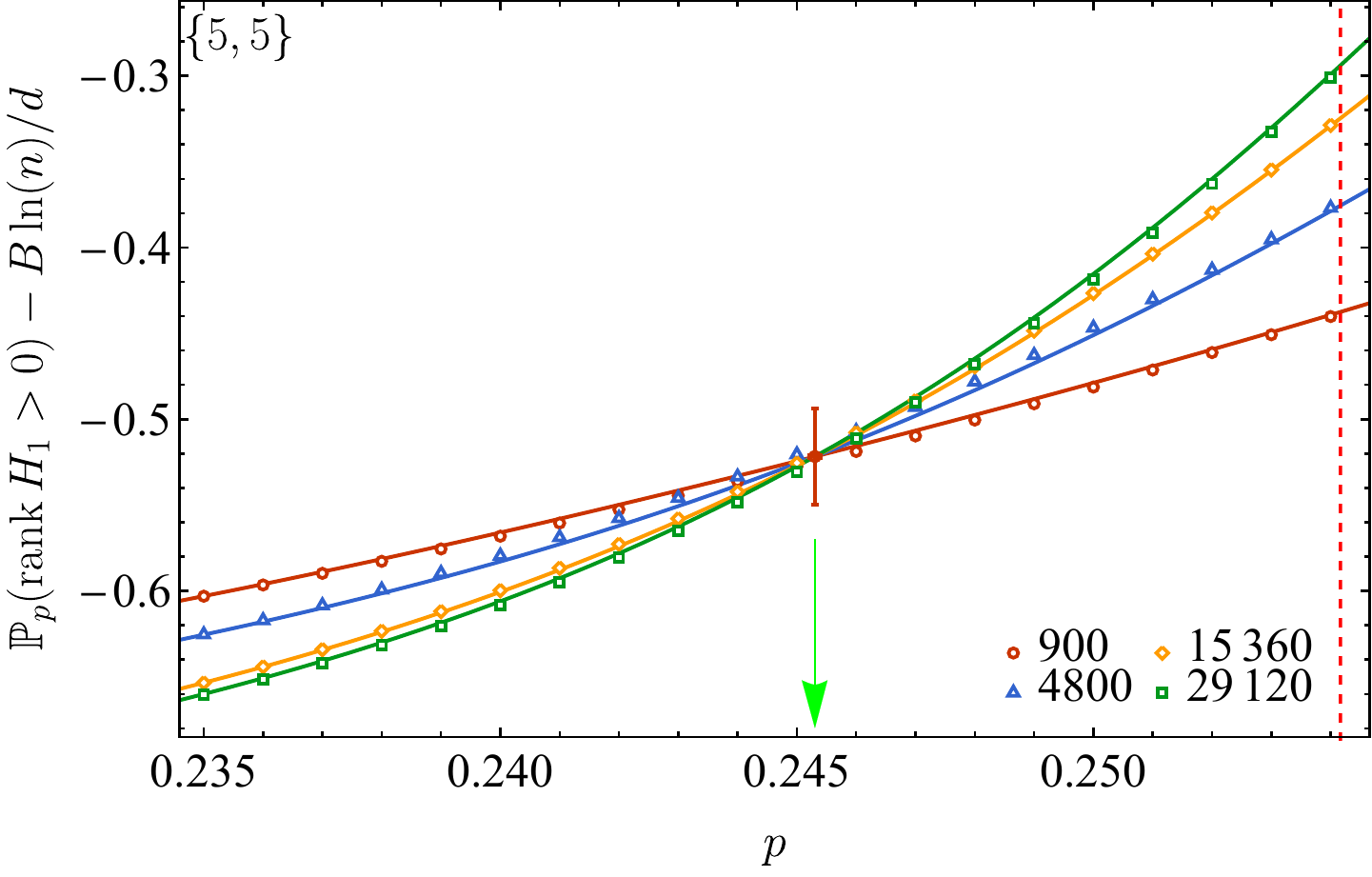}
  \caption{(Color online) Top: as in Fig.~\ref{fig:PE-4-4} but
    for the hyperbolic code family $\{5,5\}$.  The green arrow
    indicates the position of the crossing point found by the fit; it
    is significantly below the percolation threshold for the
    corresponding infinite lattice (vertical red dashed line).  In
    addition, the data for the graph with $n=15\,350$ is shifted
    upward, which we associate with a slightly smaller ratio
    $d/\ln n$, see Fig.~\ref{fig:dist}.  This is verified in the
    bottom plot, where an additional vertical shift proportional to
    $\ln n/d$ is added, which substantially improves the convergence
    at the crossing point.}
\label{fig:pE-hyp55}
\end{figure}

\begin{figure}[htbp]
  \centering
  \includegraphics[width=0.6\columnwidth]{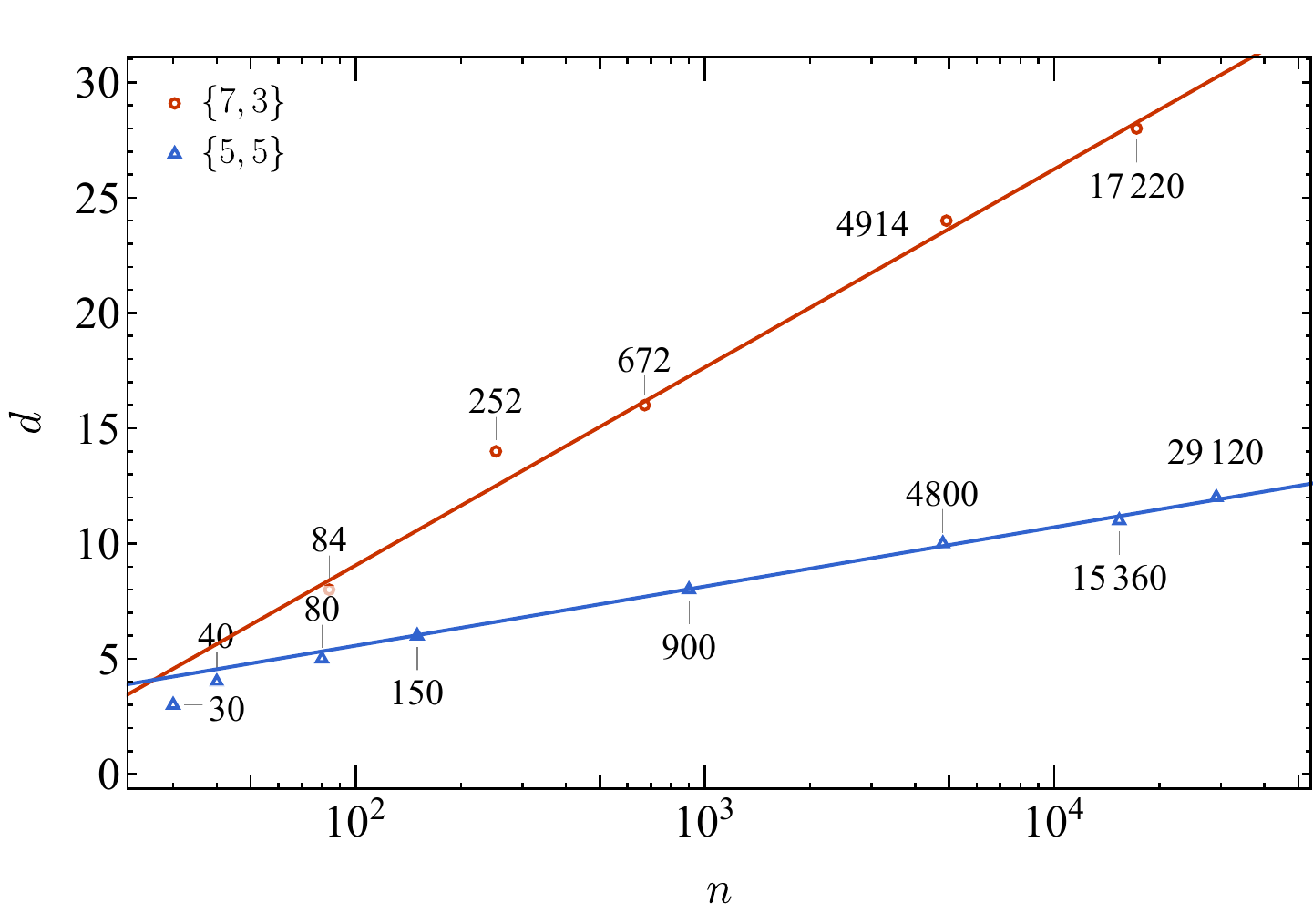}
  \caption{Homological distance $d\equiv d_Z$ associated with
    non-trivial cycles for optimal graphs in $\{7,3\}$ and $\{5,5\}$
    families vs.\ the graph size $n$ (number of edges) with the
    logarithmic scale.  Numbers also indicate the graph sizes.
    Smaller relative distances $d/\ln n$ result in larger erasure
    probabilities in Figs.~\ref{fig:pE-hyp55} and \ref{fig:pE-hyp73}
    (top); this can be compensated to some extend by using the
    correction term as in bottom plots in Figs.~\ref{fig:pE-hyp55} and
    \ref{fig:pE-hyp73}.}
\label{fig:dist}
\end{figure}

\begin{figure}[htbp]
  \centering
 \includegraphics[width=0.6\columnwidth]{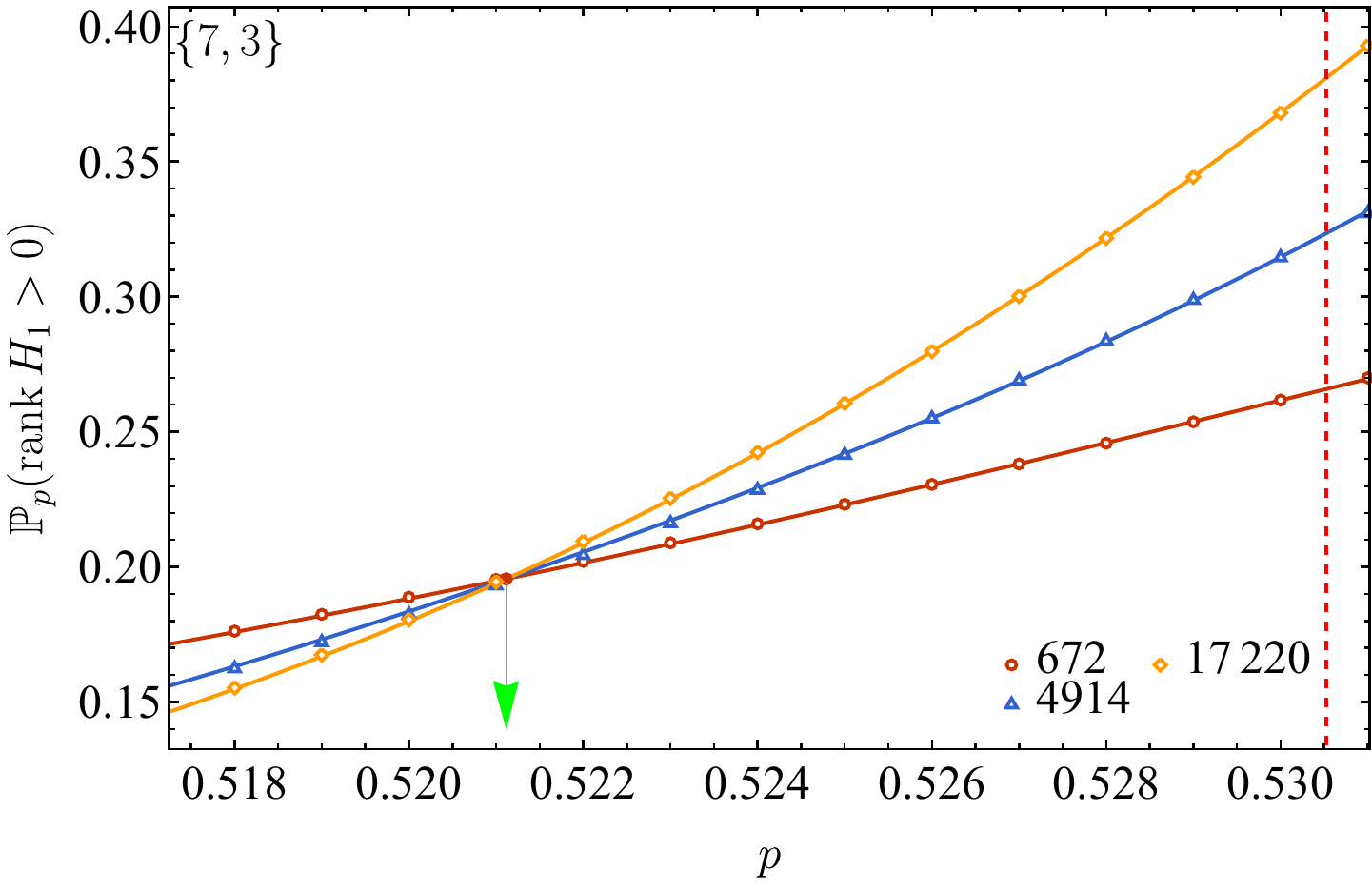}\\[0.15em]
  \includegraphics[width=0.6\columnwidth]{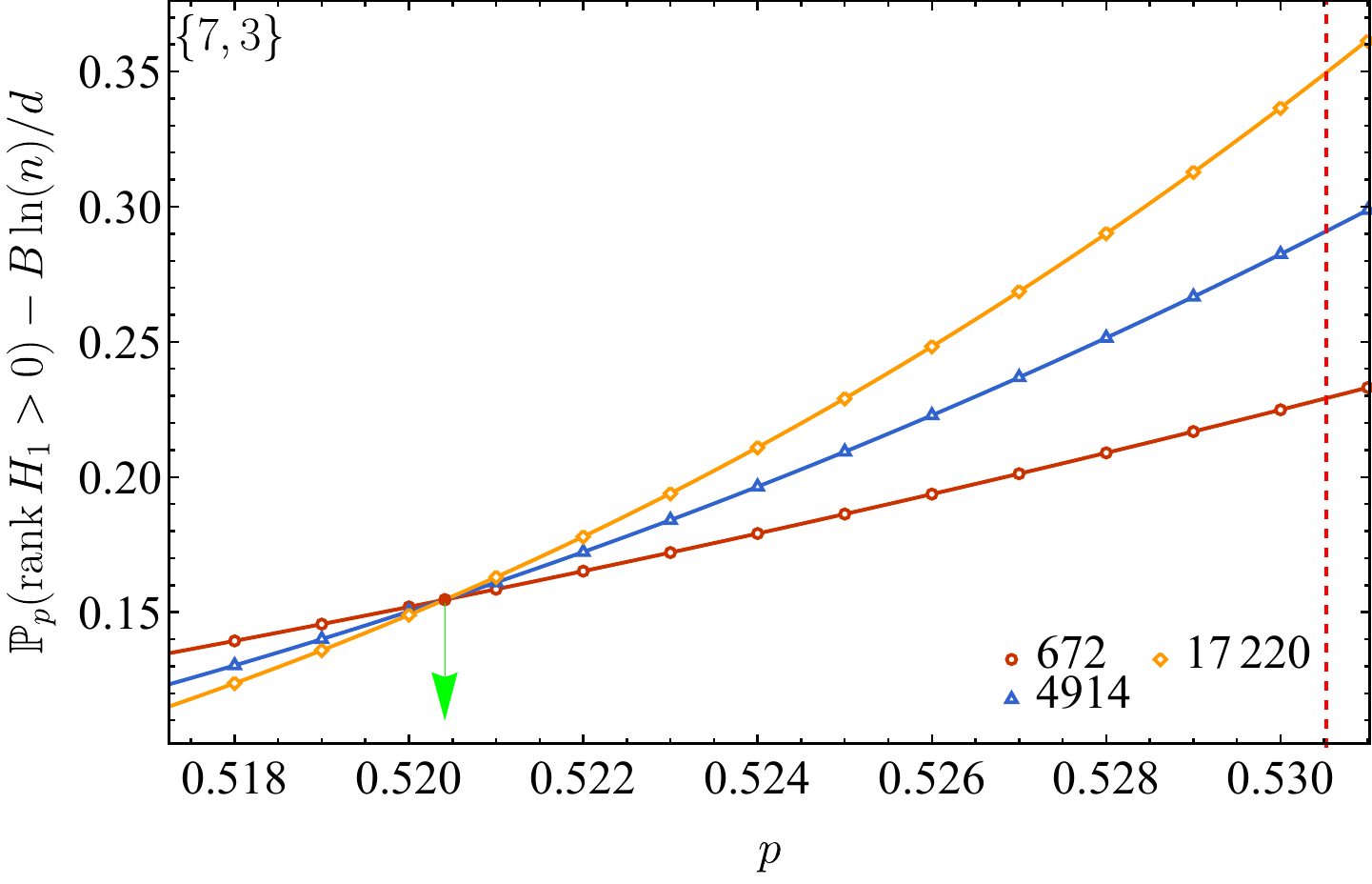}
  \caption{(Color online) As in Fig.~\ref{fig:pE-hyp55} but for the
    hyperbolic code family $\{7,3\}$.  The convergence at the crossing
    point is much better than that in Fig.~\ref{fig:pE-hyp55} (top),
    which we associate with substantially higher ratios $d/\ln n$ for
    the graphs in this family.  Bottom plot: addition of the
    additional vertical shift $B\ln n/d$ causes a substantial shift of
    the crossing point position without visibly improving the
    convergence.}
\label{fig:pE-hyp73}
\end{figure}

In comparison, the crossing point method does not work for measuring
the location of the homological transition $p_H^0$, even though the
variation between the graphs is not expected to matter that much here.
Main reason for the difference is that the erasure rate
(\ref{eq:mean-erased}) retains a finite slope in the infinite graph
limit, which makes the crossing point analysis unreliable.

To check for spurious errors, in our simulations we have also measured
the conventional percolation characteristics, in particular, the
average sizes $S_j=\langle {\cal K}_j\rangle $, $j=1,2,3$ of the three
largest clusters.  We have used several finite-size scaling techniques
to extract the location of the percolation transition which coincides
with the giant-cluster transition, see Theorem 1.3 in
Ref.~\onlinecite{Benjamini-Nachmias-Peres-2011}.  All techniques,
including the cluster-size ratio
technique\cite{Margolina-Herrmann-Stauffer-1982,daSilva-Lyra-Viswanathan-2002},
give transition points in a reasonable agreement with the values
expected from invasion percolation simulations in
Ref.~\onlinecite{Mertens-Moore-2017}.  For hyperbolic graphs
$\mathcal{H}_{f,d}$ with $(f+d)/fd<2$, we found that the most accurate
values of $p_{\rm c}$ are found using the technique based on the
expectation of cluster size scaling similar to that for random
graphs\cite{Kozma-Nachmias-2009,Heydenreich-vanderHofstad-book-2017},
$S_j\propto n^{2/3}$ near $p_{\rm c}$, with the critical region of
width $\Delta p\sim n^{-1/3}$.  Respectively, when interpolated values
of $p$ such that the expected size of the largest cluster satisfies
$S_1(p)=\omega n^{2/3}$ are plotted for $\omega\in\{1/4,1/2,1\}$ as a
function of $x\equiv n^{-1/3}$, the data for graphs with different $n$
fit nicely, and can be extrapolated to $x=0$ (infinite graph size)
using polynomial fits, see Fig.~\ref{fig:ns23}.  Notice that while
this technique works well for hyperbolic graphs and for random graphs,
in our simulations it failed dramatically for the planar $\{4,4\}$
graph family, as can be seen from the corresponding value of
$p_c^{(2/3)}$ in Tab.~\ref{tab:pc}.

\begin{table*}[htp]
  \centering
    \resizebox{\textwidth}{!}{
\begin{tabular}{r|l|ldl|ldl|ldl}
  $\{f,d\}$  & $p_{\rm c}$
  &$p_c^{(2/3)}$&\text{$n_\sigma^{(2/3)}$}&deg&$p_E^{(C)}$&\text{$n_\sigma^{(C)}$}&deg&$p_E^{\rm (shift)}$&\text{$n_\sigma^{\rm (shift)}$}& $B$        \\ \hline
$\{$3,7$\}$ & 0.1993505(5) & 0.1999(8) & 0.7 & 2& 0.1941(2) & -26 & 6&0.19318(9) &$-69$& 0.081(5)\\
$\{$7,3$\}$ & 0.5305246(8) & 0.5320(5) & 3.0 & 2& 0.52109(8) & -120 &4&0.52042(5)&$-200$& 0.087(2)\\
$\{$3,8$\}$ & 0.1601555(2) & 0.160(2) & -0.08& 3&0.1519(4) & -21 & 7& 0.1524(1)&$-78$& 0.26(1)\\
$\{$8,3$\}$ & 0.5136441(4) & 0.513(2) & -0.3 & 3&0.5032(2) & -52 & 6&0.5026(1) &$-110$& 0.32(3)\\
$\{$4,5$\}$ & 0.2689195(3) & 0.2695(6) & 1.0 & 2&0.2581(2) & -54 & 5& 0.2547(2)  &$-71$& 0.306(8) \\
$\{$5,4$\}$ & 0.3512228(3) & 0.3519(7) & 1.0 & 2&0.3415(4) & -24 & 2&0.3412(4)   &$-25$& 0.18(9) \\
$\{$4,6$\}$ & 0.20714787(9) & 0.2076(2) & 2.3& 1&0.19564(4) & -290 &3& 0.1949(3) &$-41$& -0.08(3) \\
$\{$6,4$\}$ & 0.3389049(2) & 0.3395(1) & 6.0 & 1&0.3271(4) & -29 & 2&0.3275(3)   &$-38$& 0.14(4) \\
$\{$5,5$\}$ & 0.25416087(3) & 0.2545(7) & 0.5& 2&0.2437(4) & -26 &5& 0.2453(2)   &$-44$& 0.88(6) \\
 \hline
 $\{$4,4$\}$ & 1/2 & {\bf 0.4897(3)} & 
  -34 &3& 0.500004(2) & 2 &6& 0.499992(6) & -1.3 & 0.003(1) \\
 $\{\infty,5\}$ & 1/4 & 0.2500(2) & 0&2 & \text{--} & \text{--} & \text{--} & \text{--} & \text{--} &\text{--}
\end{tabular}%
}
\caption{Critical $p$ values found for graph families characterized by
  Schl\"afli symbols $\{f,d\}$, where $f=\infty$ stands for random
  graphs of degree $d$.  Here $p_c$ is the percolation threshold
  (using invasion percolation data from
  Ref.~\onlinecite{Mertens-Moore-2017}, or exact values where known),
  $p_c^{(2/3)}$ is the percolation threshold using random-graph-like
  cluster size scaling (see Fig.~\ref{fig:ns23}), $p_E^{(C)}$ is the
  cycle erasure pseudothreshold obtained using the crossing point
  method (see Fig.~\ref{fig:PE-4-4} and top plots in
  Figs.~\ref{fig:pE-hyp55} and \ref{fig:pE-hyp73}), and
  $p_E^{\rm (shift)}$ is the same from the crossing point with an
  additional shift as in the bottom plots in Figs.~\ref{fig:pE-hyp55}
  and \ref{fig:pE-hyp73}, with $B$ the shift coefficient.  Numbers in
  the parenthesis indicate the standard deviation $\sigma$ in the
  units of the last significant digit, so that, e.g.,
  $0.14(4)\equiv 0.14\pm0.04$.  Values of
  $n_\sigma\equiv (p-p_{\rm c})/\sigma$ give the ``number of sigmas''
  for the deviation of the corresponding critical value found from the
  invasion percolation or exact threshold value if available.  Numbers
  in columns labeled ``deg'' give the degrees of the polynomials used
  to interpolate the data; polynomials of the same degrees were used
  to obtain $p_E^{(C)}$ and $p_E^{\rm (shift)}$.}
\label{tab:pc}
\end{table*}

\begin{figure}[htbp]
  \centering
  \includegraphics[width=0.6\columnwidth]{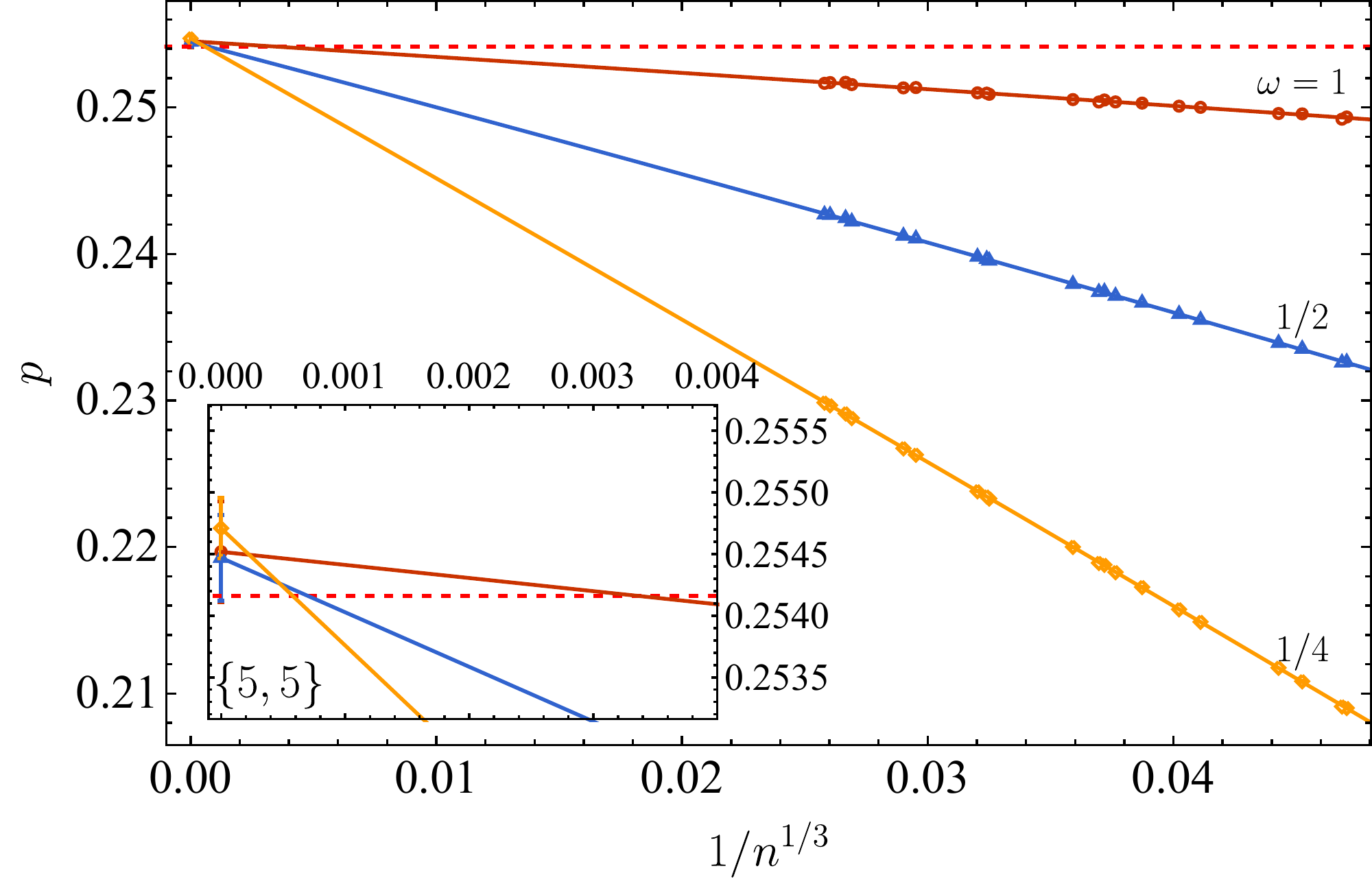}\\
  \includegraphics[width=0.6\columnwidth]{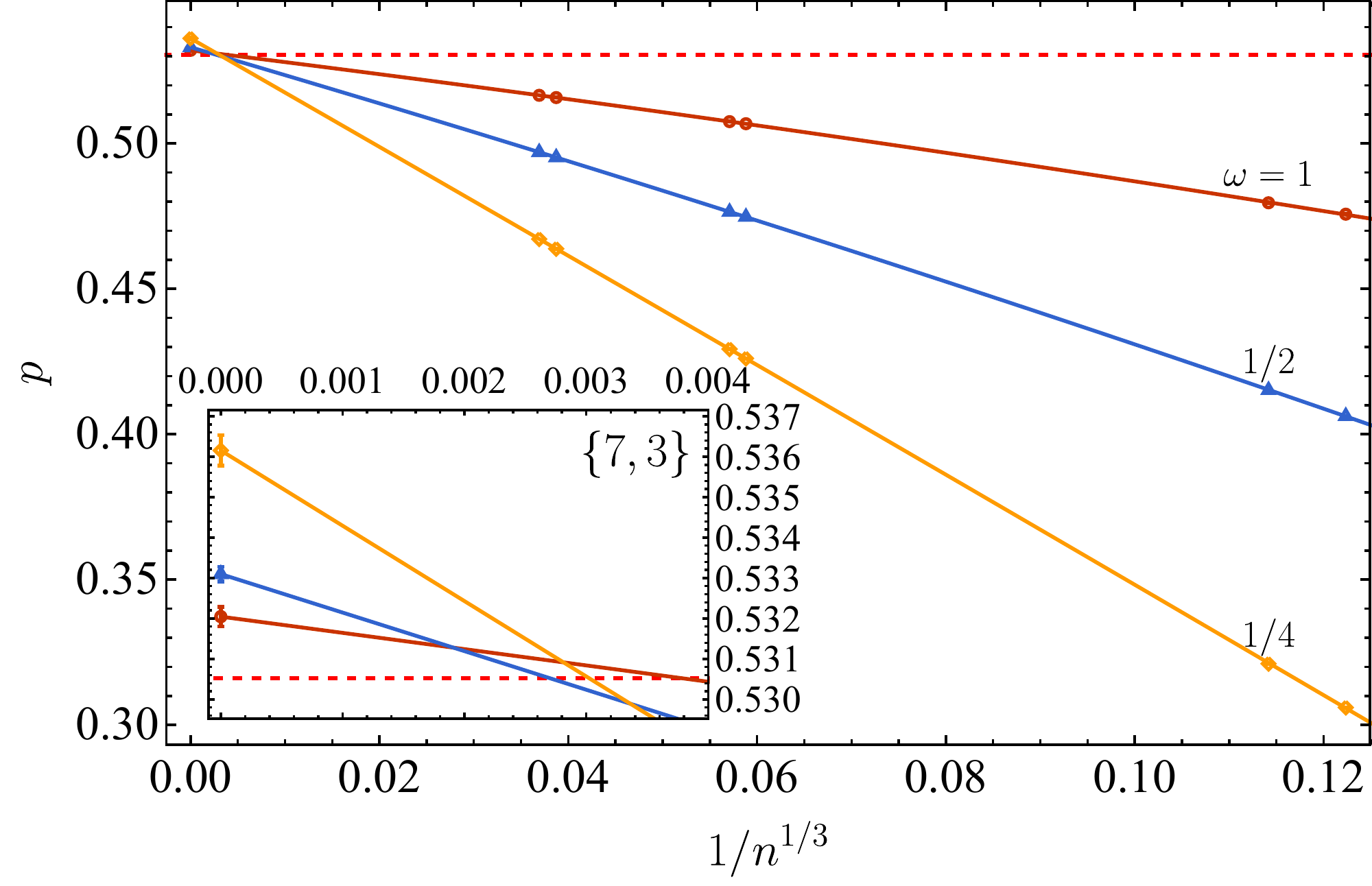}\\
  \includegraphics[width=0.6\columnwidth]{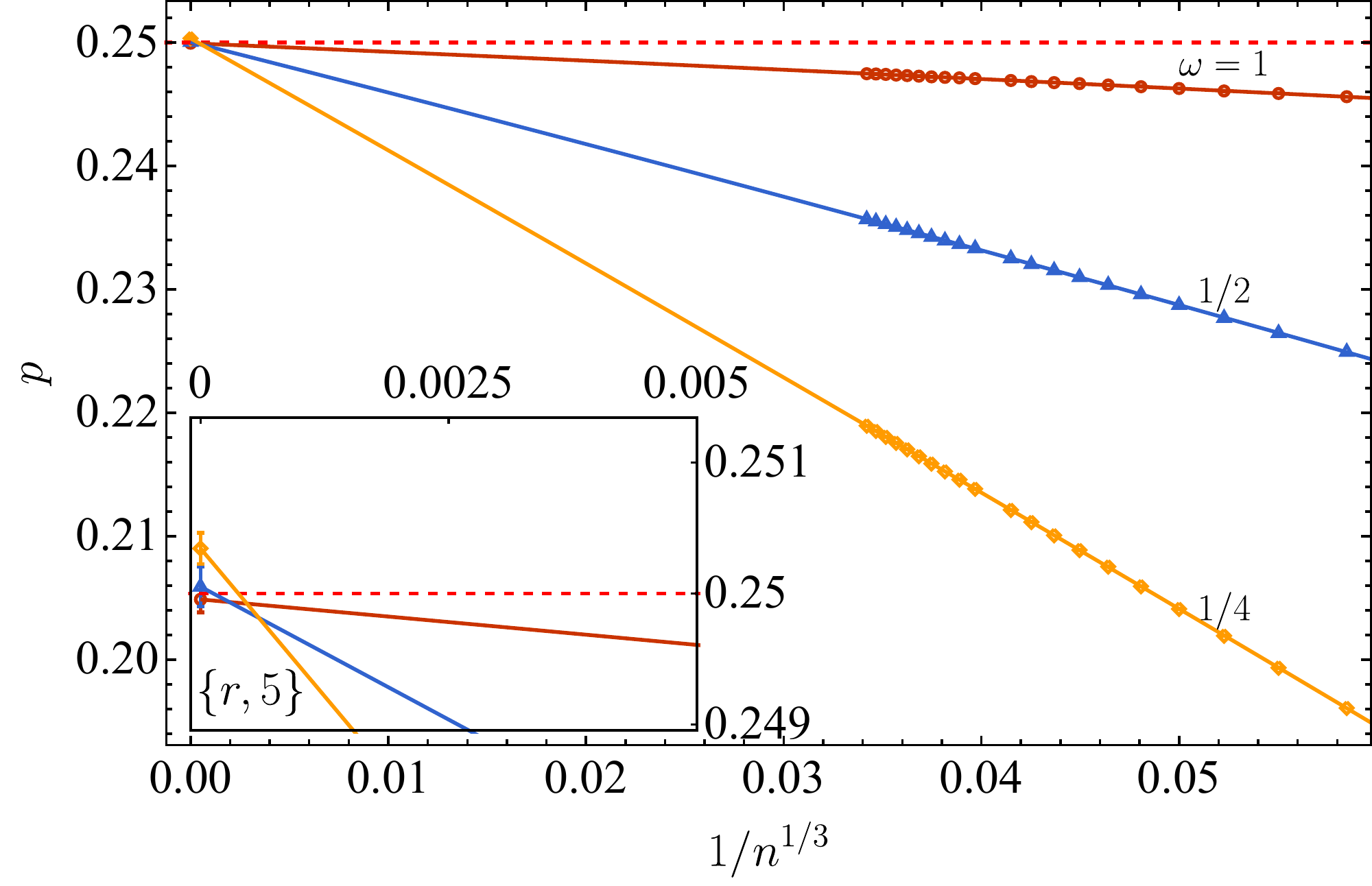}
  \caption{(Color online) Using the random-graph-like scaling for locating the
    percolation transition  for hyperbolic graphs
    in the $\{5,5\}$ (top) and $\{7,3\}$ (middle) families, and for
    degree-$5$ random graphs (bottom).  Values of the open bond
    probability $p$ where the expected size of the largest cluster
    equals $\omega n^{2/3}$ are plotted as a function of $n^{-1/3}$,
    for values of $\omega$ as indicated.  Here $n$ is the number of
    edges in the graph.  The lines intersect close to the percolation
    transition point, as indicated by horizontal dashed lines.
  }
\label{fig:ns23}
\end{figure}

The obtained critical values $p_c^{(2/3)}$, $p_E^{(C)}$, and
$p_E^{\rm (shift)}$ for different graph families are summarized in
Tab.~\ref{tab:pc}, where they are compared with the corresponding
percolation thresholds from Ref.~\onlinecite{Mertens-Moore-2017}
obtained from invasion percolation simulations, or exact values where
available.  Numerical data indicates that the erasure
(pseudo)threshold is substantially below $p_c$ for hyperbolic graphs
with logarithmic distance scaling, with the variation of the ratio
$\ln n/d$ having a significant effect on the quality of the crossing
point.  In contrast, for graphs from the $\{4,4\}$ family where
$d\propto n^{1/2}$, the cycle erasure (pseudo)threshold is very close
to the bulk percolation threshold, as generally expected from
Refs.~\onlinecite{Stace-Barrett-Doherty-2009,Fujii-Tokunaga-2012} and
Theorem \ref{th:pE-logB}.

Our results also indicate that for expander graphs, most accurate
results for percolation transition critical point are obtained using
the random-graph-like scaling, see Fig.~\ref{fig:ns23}, although this
technique is not at all applicable when the limiting graph is a tiling
of the euclidean plane.  Detailed comparison of the performance of
different extrapolation methods for percolation transition critical
point for various amenable and non-amenable graph families will be
published elsewhere.

\section{Conclusions}
\label{sec:conclusions}

In this work we focused on critical points associated with
homology-changing percolation transitions in a sequence of finite
graphs weakly convergent to an infinite graph $\mathcal{H}$, a
covering graph of the graphs in the sequence.  We also quantified the
relation between these critical points and edge percolation threshold
on $\mathcal{H}$.

The position of the homological $1$-cycle erasure threshold $p_E^0$ is
governed by the scaling of the homological distance $d$ with $\log n$,
where $d$ is the size of a smallest non-trivial cycle and $n$ is the
graph size (number of edges).  Generally,
$p_E^0\le p_{\rm c}(\mathcal{H})$, where the equality is reached for
superlogarithmic distance scaling, while $p_E^0=0$ is expected for
sublogarithmic distance scaling.  In the case of logarithmic distance
scaling where the quantity $d/\ln n$ remains bounded away from 0 and
from infinity, the cycle erasure threshold $p_E^0$ remains strictly
positive as long as $\mathcal{H}$ is a bounded-degree graph, and we
expect $p_E^0$ to be strictly below $p_{\rm c}$.

For an amenable graph $\mathcal{H}$ with a finite isoperimetric
dimension, an easy upper bound on the distance can be constructed by
considering a ball with the radius equal to the injectivity radius,
giving a power-law scaling of the distance with $n$.  Generically, we
expect that a sequence of covering maps with superlogarithmic distance
scaling can be constructed when such a graph is quasitransitive,
resulting in $p_E^0=p_{\rm c}(\mathcal{H})$.  In particular, this is
the case for any periodic lattice in dimension $D>1$, since covering
maps can be constructed by using periodic boundary conditions along
each axis.

On the other hand, logarithmic scaling of the distance is the most one
can expect when $\mathcal{H}$ is non-amenable.  For such a graph the
uniqueness threshold is expected\cite{Haggstrom-Jonasson-2006} to be
strictly higher than the percolation threshold,
$\Delta p \equiv p_{\rm u}(\mathcal{H})-p_{\rm c}(\mathcal{H})>0$,
which gives a non-trivial upper bound for the asymptotic rate,
$R\le\Delta p$, where $R>0$ corresponds to an extensive scaling of the
homology rank associated with non-trivial $1$-cycles.  For any graph
sequence with $R>0$, we also introduced a pair of homological
thresholds $p_H^0$ and $p_H^1$, associated with the points where
asymptotic erasure rate (\ref{eq:mean-erased}) deviates from the
values at $p=0$ and $p=1$, respectively.  Generally,
$p_H^0\ge p_{\rm c}$; for planar transitive graphs we proved
$p_H^0=p_{\rm c}(\mathcal{H})$ and $p_H^1=p_{\rm u}(\mathcal{H})$.  We
conjecture this to be the case more generally.

A number of open questions remain.  First, related to the sequences of
finite graphs both weakly convergent to an infinite graph
$\mathcal{H}$, and covered by $\mathcal{H}$.  What are the properties
of $\mathcal{H}$ necessary for such a sequence to exist, in
particular, is it necessary that $\mathcal{H}$ be quasi-transitive?
Second, is it true that with a logarithmic distance scaling, the
strict inequality holds $p_E^0<p_c(\mathcal{H})$?

Finally, an important open question is to what extent present results can be
extended to other models, in particular, Ising and, more generally,
$q$-state Potts model on various graphs.  Indeed, successful decoding
probability in qubit quantum LDPC codes can be mapped to ratios of
partition functions of associated random-bond Ising
models\cite{Dennis-Kitaev-Landahl-Preskill-2002,%
  Kovalev-Pryadko-SG-2015,Jiang-Kovalev-Dumer-Pryadko-2018}.  In the
clean (no-disorder) limit, these can be rewritten in terms of
Fortuin-Kasteleyn (FK) random-cluster models.  For such a model with
$q\ge1$, Hutchcroft \cite{Hutchcroft-2019} has recently proved the
exponential decay of cluster size distribution in the subscritical
regime.  In particular, this could help fixing the location of the
boundary of the decodable region for certain families of graph-based
quantum CSS codes in the weak-noise limit. 

\acknowledgements
\medskip\noindent\textbf{Acknowledgment:}
This work was supported in part by the NSF Division of Physics via
grant No.\ 1820939.

\appendix

\section*{Appendix: The proofs}

\subsection{Proof of Lemma \ref{th:dZ-bnd}}

\dZbnd*

\begin{proof}
  Let ${C}\subset {\cal E}_{\cal G}$ be a non-trivial cycle of weight
  $d_Z$, and $v\in\mathcal{V}_{\cal G}$ a vertex on $C$.  Let
  $v'\in{\cal V}$ be a vertex from the fiber of $v$, then the ball
  $\mathcal{B}\equiv \mathcal{B}(v',r_f;\mathcal{H})$ is mapped
  one-to-one by $f$.  Since $\mathcal{C}$ is non-trivial, it must
  contain at least one edge outside of the image of $\mathcal{B}$.
  Since $\mathcal{C}$ is also a minimum-weight non-trivial cycle, it
  must be self-avoiding, i.e., it should contain two edge-disjoint
  paths connecting $v$ to the boundary of the image of $\mathcal{B}$.
  Necessarily, $d_Z> 2r_f$.

  Conversely, consider a ball $\mathcal{B}_1$ of radius $r_f+1$ which
  covers a non-trivial cycle ${C}_1$ on $\mathcal{G}$ of weight $w$,
  the shortest cycle among those covered by $\mathcal{B}_1$.  At most
  two vertices of ${C}_1$ are at the distance $r_f+1$ from $u$
  (otherwise a shorter cycle could be constructed), which gives
  $d_Z\le w\le 2r_f+3$.
\end{proof}
\subsection{Proof of Lemma \ref{th:dX-bnd}}
\dXbnd*

\begin{proof}
  The statement is trivial if $r_f< \omega$, since $d_X\ge 1$ by
  definition.  Assume $r_f\ge\omega$, so that any generator of the
  cycle group on ${\cal H}$ be mapped one-to-one.  Thus, any (finite)
  cycle on ${\cal H}$ is mapped to a homologically trivial cycle,
  where we assume that the symmetric set difference ``$\oplus$'' be
  used when an edge is encountered in the image more than once.
  Consequently, a lift of a walk cycling around a simple non-trivial
  cycle ${C}$ on $\mathcal{G}$ cannot be closed; instead, it must be a
  portion of a semi-infinite self-avoiding path on ${\cal H}$.
  Respectively, for any edge $e_0\in{C}$ and its lift
  $e_0'\in {\cal E}$ such that $f(e_0')=e_0$, we denote
  ${C}'\equiv C'(C,e_0')\ni e_0'$, the \emph{extended lift} of $C$,
  the union of lifts of all walks on ${C}$ starting with $e_0'$ and
  $e_0$, respectively; ${C}'$ is an infinite self-avoiding path.

  Now, take a binary vector $b$ with $\wgt(b)=d_X$ such that
  $B\equiv \supp(b)\subset {\cal E}_{\cal G}$ be a minimum-weight
  non-trivial co-cycle on ${\cal G}$.  Then, it must be irreducible,
  which implies that $B$ must be \emph{cycle-connected}, i.e., for any
  pair of edges $e_{\rm i}\neq e_{\rm f}$ in $B$, it should also
  contain a connecting edge sequence
  $\mathcal{S}=(e_1=e_{\rm i}, e_2, \ldots,e_{m-1}, e_m=e_{\rm
    f})\subseteq B$, with any pair of neighboring edges sharing an
  image of a basis cycle on $\mathcal{H}$.  Given such a sequence of
  length $m$, the conventional graph distance between any pair of
  vertices from the union $e_{\rm i}\cup e_{\rm f}$ must be strictly
  smaller than $\omega m$.

  To prove the contrary, let us assume that $d_X\le r_f/\omega$.
  Then, a minimum-weight co-cycle $B\subset {\cal E}_{\cal G}$ must
  have a diameter strictly smaller than $r_f$, i.e., there be a ball
  ${\cal B}_r\subset {\cal G}$ of radius $r\le r_f$ such that
  $B\subset {\cal B}_r$.  Indeed, with $\wgt (B)=d_X$, any connecting
  sequence contains at most $m=d_X$ edges, which implies the
  conventional distance between any pair of vertices on $B$ smaller
  than $d_X\omega\le r_f$.  This implies that any lift $B'$ of $B$
  should be mapped one-to-one by $f$.

  To finish the proof, let $C\subset {\cal E}_{\cal G}$ be an
  irreducible cycle conjugate to $B$, i.e., the corresponding binary
  vectors satisfy $bc^T=1$, which implies the existence of an edge
  $e_0\in B\cap C$.  Irreducibility of $C$ implies that it must be a
  simple cycle on ${\cal G}$.  Given $e_0'$ such that $f(e_0')=e_0$,
  let $B'\ni e_0'$ be a lift of $B$ and $C'=C'(C,e_0')$ an extended
  lift of $C$, an infinite self-avoiding path on ${\cal H}$.  Since
  $B'$ is mapped one-to-one by $f$, it has odd-weigh intersection with
  $C'$ and even-weight intersection with any basis cycle on
  ${\cal H}$.  Respectively, $B'$ must have an odd-weight intersection
  with any deformation $\bar C'\equiv C'\oplus M$ of $C'$, where
  $M\subset {\cal C}_H$ is a finite cycle on $H$.  Thus, $B'$ is a
  finite-size cut splitting ${\cal H}$ into infinite portions, which
  can not be the case since $\mathcal{H}$ is assumed one-ended.
\end{proof}

\subsection{Proof of Lemma \ref{th:flat-sub}}

\FlatSub*

\begin{proof}
  Consider a graph $\mathcal{G}'$ obtained from $\mathcal{G}$ by
  removing the ball
  $\mathcal{B}\equiv \mathcal{B}(v,r_f;\mathcal{G})$.  Construct a
  connected graph $\mathcal{G}''$ from a union of $\mathcal{B}$ and
  spanning trees of every connected component of $\mathcal{G}'$, by
  sequentially adding bridge bonds connecting individual components so
  that no new cycles are introduced.  Such a subgraph contains all
  vertices of $\mathcal{G}$ and can be lifted to $\mathcal{H}$
  starting with $v'$; let
  $\mathcal{V}_f\subset \mathcal{V}_{\cal H}$ be the corresponding
  vertex set.  By construction, $f$ acts one-to-one on ${\cal V}_f$.
  It is also easy to check that $\mathcal{H}_f$, the subgraph of
  $\mathcal{H}$ induced by $\mathcal{V}_f$, be connected.
\end{proof}

\subsection{Proof of Theorem \ref{th:pE-easy}}

\thErasurePercol*

\begin{proof}\label{proof:thErasurePercol}
  If $p_{\rm c}(\mathcal{H})=1$, the statement of the theorem is trivial.
  In the following, assume $p_{\rm c}(\mathcal{H})<1$ and take $p$ such that
  $p_{\rm c}(\mathcal{H})<p<1$.  For some $t\in\mathbb{N}$, a chosen
  $v'\in \mathcal{V}_{\cal H}$ and $v\equiv f_t(v')$, we connect
  percolation on $\mathcal{H}$ and on $\mathcal{G}_t$ using a set-up
  similar to invasion percolation\cite{Wilkinson-Willemsen-1983}.
  Namely, we start with single-site zeroth generation clusters
  $\mathcal{K}_v^{(0)}=\{v\}\subset \mathcal{V}_t$ and
  $\mathcal{K}_{v'}^{(0)}=\{v'\}\subset \mathcal{V}_{\mathcal{H}}$,
  with no edges labeled open or closed.  Given a generation-$j$
  cluster $\mathcal{K}_{v'}^{(j)}\subset \mathcal{V}_{\cal H}$,
  every previously unlabeled edge adjacent to a vertex in
  $\mathcal{K}_{v'}^{(j)}$ is labeled open with independent
  probability $p$ and otherwise closed.  The next generation cluster
  $\mathcal{K}_{v'}^{(j+1)}$ is formed by adding any vertices
  connected to those in $\mathcal{K}_{v'}^{(j)}$ by newly open edges.
  Let us denote by $P_j(p;\mathcal{H})$ the probability that the
  process can be continued after step $j$, i.e., there be one or more
  unlabeled edges incident on the $j$-th generation cluster.  Clearly,
  $P_0(p;\mathcal{H})=1$ and $P_j(p;\mathcal{H})$ is strictly
  decreasing as a function of $j$, with
  $\lim_{j\to\infty}P_j(p;\mathcal{H})=\theta_{v'}(p;\mathcal{H})$.

  Let us now look at thus constructed percolation process on
  $\mathcal{G}_t$.  As long as the image of no vertex
  connected to $\mathcal{K}_{v'}^{(j)}$ by a so far unlabeled edge
  coincides with the image of a vertex in $\mathcal{K}_{v'}^{(j)}$ (we
  call such a cluster ``flat''), we can use the map $f_t$ to make the
  labels on $\mathcal{G}_t$ match those on $\mathcal{H}$.  Clearly,
  all clusters are flat for $j<r_t$, the injectivity radius; for such
  $j$ the probabilities that the percolation process may be continued
  match exactly on the two graphs,
  $P_j(p;\mathcal{G}_t)=P_j(p;\mathcal{H})$.  On the other hand,
  $P_j(p;\mathcal{G}_t)=0$ for $j\ge |\mathcal{V}_t|$.  The
  percolation processes necessarily decouple whenever a cluster
  $\mathcal{K}_{v'}^{(j)}$ ceases to be flat, i.e., there be an
  unlabeled edge on $\mathcal{G}_t$ connecting a pair of vertices in
  $\mathcal{K}_{v}^{(j)}\subset\mathcal{V}_t$.  Given such a cluster,
  we can assign the remaining unlabeled edges on $\mathcal{G}_t$ all
  at once; the resulting open subgraph of $\mathcal{G}_t$ contains a
  homologically non-trivial cycle with probability greater than or
  equal to $p$.  At the same time, the cluster
  $\mathcal{K}_{v'}^{(j)}\subset \mathcal{V}_{\cal H}$ is removed from
  the percolation process on $\mathcal{H}$.  Since it is not certain
  that a descendant of a given cluster be infinite, we get the lower
  bound
  \begin{equation}
  \mathbb{P}(\text{$\mathcal{K}_v$ contains a non-trivial cycle})\ge
  p\,\theta_{v'}(p;\mathcal{H}),\label{eq:prob-non-triv}
\end{equation}
which is positive for any $p>p_{\rm c}$,
  thus $p_E^0\le p_{\rm c}$.
\end{proof}

\subsection{Proof of Theorem \ref{th:pE-logA}}

\pElogA*

\begin{proof}
  To set up independent erasure events, cut ${\cal G}_t$ into
  non-overlapping regions, images of non-overlapping balls on
  ${\cal H}$ of radius ${\rho}_t\equiv 1+\lfloor d_{Zt}/2\rfloor$.
  Given the maximum graph degree $\Delta_{\rm max}$, we can cut out at
  least
  $$N_t\ge |V_t|/|\mathcal{B}_0(2{\rho}_t,{\cal G}_t)|> |V_t|/\Delta_{\rm max}^{2+d_{Zt}}$$
  such balls.  By transitivity of $\mathcal{G}_t$ and Lemma
  \ref{th:dX-bnd}, each ball contains a homologically non-trivial
  cycle of length $d_{Zt}$, which is open with probability
  $P_1\ge p^{d_{Zt}}$.  Now, probability that a homology is covered in
  \emph{none} of the $N_t$ balls can be upper bounded as
  \begin{eqnarray}\nonumber
    P_{\rm none}
    &=&
        [1-P_1]^{N_t}\le [1-p^{d_{Zt}}]^{N_t}\le    \exp(-N_t p^{d_{Zt}}) \\
    &<& \exp\left(- |V_t|/\Delta_{\rm max}^2\, e^{-d_{Zt}\left(\left|\ln p\right|
          +\ln \Delta_{\rm max}\right)}\right),\;
          \label{eq:upper-bnd-none}
  \end{eqnarray}
  which is guaranteed to converge to zero for any $p>0$ since
  $d_{Zt}$ scales sublogarithmically with $|V_t|$.  (Notice that
  $|V_t|\ge 2n_t/\Delta_{\rm max}$ by a version of the hand-shaking
  lemma.)
\end{proof}

Notice that the requirement of transitivity for the graphs
$\mathcal{G}_{Zt}$ can be relaxed a bit, namely, by assuming that the
number of vertex classes [defined by distinct vertex orbits connected
by elements of $\aut(\mathcal{G}_t)$] remains uniformly bounded for
the graphs $\mathcal{G}_t$.  In that case, the balls need to be taken
of radius $\rho_t=\lfloor d_{Zt}/2\rfloor+m$, where $m$ is the maximum
number of vertex classes.  The proof is completed with the following
lemma:
\begin{lemma} Consider a connected graph ${\cal H}$, with $m\ge 1$
  vertex classes.  Any ball of radius $m$ contains representative(s)
  of all classes.
\end{lemma}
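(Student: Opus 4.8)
The plan is to track, for a fixed vertex $v\in\mathcal{V}_{\cal H}$ and each radius $r\ge0$, the set $W_r(v)$ of those vertex classes that have a representative inside the ball $\mathcal{B}(v,r;\mathcal{H})$. Since balls are nested, $W_r(v)$ is non-decreasing in $r$, and $|W_0(v)|=1$ because $\mathcal{B}(v,0;\mathcal{H})=\{v\}$. The heart of the argument is the claim that $W_r(v)$ keeps growing strictly until it exhausts all $m$ classes: if $W_r(v)$ omits at least one class, then $W_{r+1}(v)\supsetneq W_r(v)$.

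To prove this claim I would argue by contradiction, assuming $W_{r+1}(v)=W_r(v)=:W$ with $W$ not containing all classes. Let $\mathcal{U}\subseteq\mathcal{V}_{\cal H}$ be the union of all vertex classes belonging to $W$. Then $v\in\mathcal{U}$, so $\mathcal{U}$ is non-empty, and by assumption $\mathcal{U}\neq\mathcal{V}_{\cal H}$. I claim no edge of $\mathcal{H}$ joins $\mathcal{U}$ to its complement. Indeed, suppose $\{a,b\}\in\mathcal{E}_{\cal H}$ with $a\in\mathcal{U}$ and $b\notin\mathcal{U}$. Since the class of $a$ lies in $W=W_r(v)$, some vertex $a_0$ in that class satisfies $d(v,a_0)\le r$; pick $g\in\aut(\mathcal{H})$ with $g(a)=a_0$. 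Then $g(b)$ is a neighbour of $a_0$, so $d(v,g(b))\le r+1$, and hence the class of $g(b)$ belongs to $W_{r+1}(v)=W$. But an automorphism maps each vertex orbit to itself, so $b$ and $g(b)$ share a class, forcing $b\in\mathcal{U}$ --- a contradiction. Therefore $\mathcal{U}$ is a union of connected components of $\mathcal{H}$; since $\mathcal{H}$ is connected and $\emptyset\neq\mathcal{U}\neq\mathcal{V}_{\cal H}$, this is impossible, which establishes the claim.

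Given the claim, $|W_r(v)|$ increases by at least one at every step until it reaches $m$, so $|W_{m-1}(v)|=m$; that is, $\mathcal{B}(v,m-1;\mathcal{H})$ --- and a fortiori $\mathcal{B}(v,m;\mathcal{H})$ --- contains a representative of every class, for every choice of center $v$. The only real obstacle is the edge-pushing step in the proof of the claim: one must use an automorphism to carry an edge leaving $\mathcal{U}$ back into the ball $\mathcal{B}(v,r+1;\mathcal{H})$ and then read off, from class invariance, that the far endpoint lay in $\mathcal{U}$ all along. Everything else --- the nesting, the base case, and the final counting --- is routine.
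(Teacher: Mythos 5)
Your proof is correct, and it takes a genuinely different route from the paper's. The paper constructs a \emph{class connectivity graph} — a quotient graph $\mathcal{G}$ with one vertex per orbit of $\aut(\mathcal{H})$ and an edge between two orbit-vertices whenever $\mathcal{H}$ has an edge between the corresponding classes — argues it is connected, takes a spanning tree on it, and lifts that tree to a tree in $\mathcal{H}$ rooted at $v$ (the lifting implicitly uses the same automorphism-pushing trick you make explicit). Since the spanning tree has $m$ vertices, the lifted tree fits in a ball of radius $\le m-1$ around $v$. Your argument instead does an inductive BFS layering: you show $|W_r(v)|$ must grow strictly with $r$ until it saturates at $m$, by using an automorphism to carry any boundary edge of $\mathcal{U}$ into the ball of radius $r+1$. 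Both arguments rest on the same key fact — automorphisms preserve both adjacency and orbit membership — but yours spells that step out explicitly, whereas the paper buries it in the claim that the spanning tree of $\mathcal{G}$ ``can be lifted.'' Your version is arguably more elementary and self-contained; the paper's is more conceptual, packaging the structure into the quotient graph. Both, incidentally, establish the slightly stronger claim that radius $m-1$ already suffices.
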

\begin{proof} Consider a class connectivity graph ${\cal G}$
corresponding to ${\cal H}=({\cal V},{\cal E})$, with $m$ vertices
(one per class) and an edge between two vertices if ${\cal H}$
contains an edge between a pair of vertices in these classes.
Necessarily, ${\cal G}$ is connected.  Further, given a vertex $v\in
{\cal V}$, any spanning tree on ${\cal G}$ can be lifted to a tree on
${\cal H}$ that contains $v$; such a tree contains a representative
from every vertex class.  Further, the diameter of the tree cannot
exceed $m$; such a tree is contained in a ball ${\cal B}(v,m;{\cal
H})$.  The proof is complete since the choice of $v$ is arbitrary.
\end{proof}

\subsection{Proof of Theorem \ref{th:pE-logB}}

\pElogB*

\begin{proof}
  Only a cluster with $s\ge d_{Zt}> r_t$ vertices can cover a
  homology.  For a graph ${\cal G}$, let $ M_v(s;{\cal G})$ denote the
  probability that vertex $v$ is in an open cluster with exactly $s$
  vertices on $[{\cal G}]_p$.  On the quasi-transitive graph
  ${\cal H}$, this probability has an exponential bound,
  $M_v(s;{\cal H})< M(s)\equiv e^{-\gamma(p)s}$, for some $\gamma(p)$
  non-zero in the subcritical region, $\gamma(p)>0$ for
  $p<p_{\rm c}$\cite{Antunovic-Veselic-2008}.  Note also
  $\sum_{s\ge 1}M_v(s;{\cal G})=1$ on any finite graph; below
  percolation threshold this is also true for infinite graphs.  Also,
  for any $v\in{\cal V}_t$, finding a cluster of size $s\le r_t$
  attached to $v$ on ${\cal G}_t$ has the same probability as that
  attached to a vertex $v'(v)$ from the fiber of $v$ on $\mathcal{H}$.
  Use the union bound for the probability of finding a cluster of size
  $r_t+1$ or larger on ${\cal G}_t$,
  \begin{eqnarray}
    \nonumber P_\mathrm{one}
    &\le & \sum_{v\in {\cal V}_t}\sum_{s>r_t} s^{-1}M_v(s;{\cal G}_t)
    \\ \nonumber
    &<&
        \sum_{v\in {\cal V}_t}\sum_{s> r_t} M_v(s,{\cal G}_t)\\ \nonumber
    &=&\sum_{v\in{\cal V}_t}\left(1-\sum_{1\le s\le r_t} M_v(s;{\cal G}_t)\right)
    \\ \nonumber
    &=&\sum_{v\in{\cal V}_t}
        \left(1-\sum_{1\le s\le r_t} M_{v'(v)}(s;{\cal H})\right)
    \\ \nonumber
    &=& \sum_{v\in{\cal V}_t}\sum_{s> r_t} M_{v'(v)}(s;{\cal H})\\
    &<& |{\cal V}_t|\sum_{s>r_t} e^{-\gamma(p)s}= {|V_t| e^{-\gamma(p) r_t}\over
        e^{\gamma(p)}-1},
            \label{eq:upper-bnd-one}
  \end{eqnarray}
  which goes to zero with $t\to\infty$ whenever $\gamma(p)>0$ since
  $r_t$ is assumed to be superlogarithmic in $n_t\ge |{\cal V}_t|-1$.
  This proves $p_E^0\ge p_{\rm c}$; the statement of the Theorem is
  obtained with the help of Theorem \ref{th:pE-easy}.
\end{proof}
\subsection{Proof or Corollary \ref{th:pE-planar}}
\pEplanar*

\begin{proof}
  Since $\widetilde{\cal H}$ is quasitransitive, it has a finite
  maximum degree, which is the maximum size of a face of ${\cal H}$.
  Thus, with injectivity radius large enough, $f_t$ must be invertible
  on the union of any face and its adjacent faces on ${\cal H}$.  This
  guarantees that (with $t$ sufficiently large, $t>t_0$), ${\cal G}_t$
  be locally planar, so that we can construct the locally planar dual
  graph $\widetilde{\cal G}_t$ whose cover is $\widetilde{\cal H}$.
  Further, for any open edge configuration, the ranks of the homology
  groups on the open subgraph of ${\cal G}_t$ and on the closed
  subgraph of $\widetilde{\cal G}_t$ add to $k_t$, the number of
  inequivalent homologically non-trivial cycles on ${\cal G}_t$
  [Eq.~(\ref{eq:rank-H1-restricted-duality})].  Thus, the two erasure
  thresholds are simply interchanged by duality,
  $\tilde{p}_E^1=1-p_E^0$ and $\tilde{p}_E^0=1-p_E^1$, so that the
  inequality $p_E^1\ge 1-p_{\rm c}(\widetilde{\cal H})$ follows
  immediately from Theorem \ref{th:pE-easy}.

  The identities in (\textbf{ii}) and (\textbf{iii}) similarly follow
  from Theorems \ref{th:pE-logA} and \ref{th:pE-logB} with the help of
  Lemmas \ref{th:dZ-bnd} and \ref{th:dX-bnd} which guarantee that the
  injectivity radii on the sequence of mutually dual graphs scale
  simultaneously in a sub-logarithmic, logarithmic, or
  superlogarithmic fashion.
\end{proof}

\subsection{Proof of Theorem \ref{th:pH-diff}}

\pHdiff*

\begin{proof}
  Let $k_t=\rank H_1(f_t)$ be the number of non-trivial
  independent cycles on $\mathcal{G}_t$.  Consider any open edge
  configuration on $\mathcal{G}_t$, with homology rank $k_t'\le k_t$,
  and another edge configuration obtained by removing some open edges,
  with homology rank $k_t''\le k_t'$.  Such a change in homology
  requires removing at least $\Delta k_t=k_t'-k_t''$ open edges.
  Considering these as random edge configurations at $p'>p_H^1$ and
  $p''<p_H^0$, averaging, and dividing by the total number of edges
  $n_t$, we obtain
  $$p'-p''\ge \mathbf{R}_E(t,p')-\mathbf{R}_E(t,p'');
  $$ in the limit $t\to\infty$ this
  becomes $p'-p''\ge R$.  Taking infimum over $p'>p_H^1$ and supremum
  over $p''<p_H^0$, we obtain the claimed inequality.
\end{proof}

\subsection{Proof of Theorem \ref{th:pH-easy-bounds}}
\pHeasybounds*
\begin{proof}
  Take $p>p_H^0$, then the limit in Eq.~(\ref{eq:pH0-def}) is either
  strictly positive or does not exist.  In either case, since terms in
  the sequence are bounded, $\mathbf{R}_E(t,p)<1$, the superior limit
  $f_p\equiv \limsup_{t\to\infty}\mathbf{R}_E(t,p)$ exists and is
  strictly positive, $f_p>0$ at $p>p_H^0$.  This implies the existence
  of a convergent subsequence, e.g., specified by an increasing
  sequence of indices $(t_j)_{j\in\mathbb{N}}$ such that
  $\lim_{j\to\infty}\mathbf{R}_E(t_j,p)=f_p$.

  Because of the existence of the limit, whenever $f_p>0$, for any
  $\epsilon>0$ and a sufficiently large $j$, clusters covering
  homologically non-trivial cycles are expected to occupy at least
  $(f_p-\epsilon)n_{t}$ edges, where $t\equiv t_j$.  Thus, if we
  choose $\epsilon=f_p/2$, a cluster $K_v\subset [{\cal G}_t]_p$
  connected to a randomly chosen vertex $v\in{\cal V}_t$ covers a
  homologically non-trivial cycle with probability
  $P_{\text{non-triv}}\ge f_p n_t/(2|{\cal V}_t|)$.  Using a map like
  that in the proof of Theorem \ref{th:pE-easy}, at sufficiently large
  $t$, a cluster covering a non-trivial cycle on $[{\cal G}_t]_p$
  corresponds to an infinite cluster on $[{\cal H}]_p$, which gives
  $$\theta_v(p)\ge \lim_{j\to\infty}f_p n_{t_j}/(2|{\cal V}_{t_j}|)\ge f_p/2>0,$$
  thus
  $p>p_{\rm c}({\cal H})$.
\end{proof}

\subsection{Proof of Theorem \ref{th:pH-planar}}

\PHplanar*

\begin{proof}
  As in the proof of Corollary \ref{th:pE-planar}, at sufficiently
  large $t$ the graph ${\cal G}_t=({\cal V}_t,{\cal E}_t)$ is
  necessarily locally planar, which implies the existence of the
  corresponding dual graph
  $\widetilde{\cal G}_t=(\widetilde{\cal V}_t,\widetilde{\cal E}_t)$, with the
  dual-graph sequence weakly convergent to the dual infinite graph
  $\widetilde{\cal H}$.

  The proof relies on the relation\cite{Delfosse-Zemor-2012} between
  the expected homology rank of the open subgraph and the expected
  inverse cluster sizes on an open subgraph of $\mathcal{H}$ and a
  closed subgraph of $\widetilde{\cal H}$.  While the argument goes
  back to the work of Sykes and Essam\cite{Sykes-Essam-1964}, we give
  a complete derivation here.  Consider a configuration of open/closed
  edges on $[{\cal G}_t]_p$ with $E'\le n_t\equiv |{\cal E}_t|$ open
  edges, $K'$ clusters, and the cycle group of rank
  $C'=C_{\rm triv}+k'$, where $k'$ is the number of non-trivial basis
  cycles.  According to Euler's theorem, $K'=|{\cal V}_t|-E'+C'$.  On
  the other hand, duality matches any simple trivial cycle on
  ${\cal G}_t$ to a cut on the dual graph $\widetilde{\cal G}_t$,
  which gives $C_{\rm triv}=\tilde K'-1$, with $\tilde{K}'$ being the
  number of clusters on the dual graph in the dual edge configuration,
  with open and closed edges interchanged.  This gives
  $$
  k'=K'-\tilde{K}'+E'-|{\cal V}_t|+1.
  $$
  Taking the average over the edge configurations on $[{\cal G}_t]_p$
  we obtain for $k_p^{(t)}\equiv \Ep\biglb(\rank H_1(f_t,p)\bigrb)$,
  \begin{eqnarray*}
    k_p^{(t)}
    &=&\sum_{v\in{\cal V}_t}\kappa_v(p;{\cal G}_t)
        -
        \sum_{v\in\widetilde{\cal V}_t}\kappa_v(\bar p;\widetilde{\cal G}_t)
              +pn_t-|{\cal V}_t|+1.
  \end{eqnarray*}
  Here $ \kappa_v(p;{\cal G}_t)\equiv \Ep\left(|{K}_v|^{-1}\right)$ is
  the expected inverse size of a cluster containing vertex $v$ on
  $[{\cal G}_t]_p$, and $ \kappa_v(\bar p;\widetilde{\cal G}_t)$ is the
  corresponding quantity on the dual graph, averaged over the dual
  edge configurations, which is equivalent to $\bar p=1-p$.
  Introducing the corresponding vertex-average quantities, e.g.,
  $\kappa(p;{\cal G}_t)\equiv |{\cal V}_t|^{-1}\sum_{v\in{\cal
      V}_t}\kappa_v(p;{\cal G}_t)$, we get%
  $$
  k_p^{(t)}
  =|{\cal V}_t|\kappa(p;{\cal G}_t)-|\widetilde{\cal V}_t|\kappa(1-p;\widetilde{\cal G}_t)+pn_t-|{\cal V}_t|+1.
  $$
  To obtain the asymptotic erasure rate (\ref{eq:mean-erased}) we
  divide the obtained result by $n_t$ and notice that very large
  clusters give no contribution to the total while (at sufficiently
  large $t$) any finite cluster on $[{\cal G}_t]_p$ has the same
  probability as an equivalent cluster on $[{\cal H}]_p$.  Further,
  assuming the transitive graphs ${\cal H}$ and $\widetilde{\cal H}$
  of degrees $d$ and $f$, respectively, the graphs ${\cal G}_t$ and
  $\widetilde{\cal G}_t$ respectively have the same degrees, and the
  hand-shaking lemma gives $|{\cal V}_t|=2n_t/d$,
  $|\widetilde{\cal V}_t|=2n_t/f$.  This proves both the existence and
  the value of the following limit at any $p$,
  \begin{eqnarray}
    \nonumber
    \mathbf{R}_E(p)
    &\equiv& \lim_{t\to\infty}  \mathbf{R}_E(t,p)\\
    &=& {2\over  d}\kappa(p;{\cal H})
        - {2\over  f}\kappa(\bar p;\widetilde{\cal H})  +p- {2\over  d}.
        \label{eq:RE-transitive}
  \end{eqnarray}
  Finally, we notice that for non-amenable transitive graphs
  ${\cal H}$ and $\widetilde{\cal H}$, the quantities
  $\kappa(p;{\cal H})$ and $\kappa(\bar p;\widetilde{\cal H})$ are
  analytic functions of $p$ in the vicinity of any $p\in(0,1)$ such
  that $p\neq p_{\rm c}({\cal H})$ and
  $\bar p\neq p_{\rm c}(\widetilde{\cal H})$,
  respectively\cite{Antunovic-Veselic-2008,Hermon-Hutchcroft-2019}.
  Thus, the r.h.s.\ of Eq.~(\ref{eq:RE-transitive}) is an analytic
  function of $p$ for
  $$p\in(0,1)\setminus\{p_{\rm c}({\cal H}),1-p_{\rm c}(\widetilde{\cal H})\},$$ where
  $1-p_{\rm c}(\widetilde{\cal H})=p_{\rm u}({\cal H})>p_{\rm c}({\cal H})$.  On the other hand,
  $\mathbf{R}_E(p)$ cannot be analytic in the lower and upper
  homological tresholds $0<p_H^0<p_H^1<1$, which gives the two
  equalities.
\end{proof}

\bibliography{lpp,qc_all,more_qc,percol,spin,linalg,ldpc,grav,teach}

\end{document}